%% file: main.tex
\documentclass[11pt]{article}

\input{preamble.tex}

\date{}
    
\title{
	On the average-case complexity landscape for \\
    Tensor-Isomorphism-complete problems over finite fields
}
\author{
Tiange Li\thanks{Wuhan University. {\tt 2022302011072@whu.edu.cn}}
       \and 
       Yinan Li\thanks{Wuhan University, Hubei Center for Applied Mathematics, Hubei Computational Science Key Laboratory, and Wuhan Institute of Quantum Technology. {\tt Yinan.Li@whu.edu.cn}}
       \and
	Youming Qiao\thanks{University of New South Wales. {\tt jimmyqiao86@gmail.com}}
	\and
	Dacheng Tao\thanks{Nanyang Technological University. {\tt dacheng.tao@ntu.edu.sg}}
	\and 
	Yingjie Wang\thanks{Nanyang Technological University. {\tt yingjiewang1201@gmail.com}}
}

\begin{document}
	
\maketitle
\pagenumbering{gobble}
\begin{abstract}
In Grochow and Qiao (\emph{SIAM J. Comput.}, 2021), the complexity class Tensor Isomorphism (TI) was introduced and isomorphism problems for groups, algebras, and polynomials were shown to be TI-complete. In this paper, we study average-case algorithms for several TI-complete problems over finite fields, including algebra isomorphism, matrix code conjugacy, and $4$-tensor isomorphism.

Our main results are as follows. Over the finite field of order $q$, we devise (1) average-case \emph{polynomial-time} algorithms for algebra isomorphism and matrix code conjugacy that succeed in a $1/\Theta(q)$ fraction of inputs and (2) an average-case \emph{polynomial-time} algorithm for the $4$-tensor isomorphism that succeeds in a $1/q^{\Theta(1)}$ fraction of inputs. Prior to our work, algorithms for algebra isomorphism with rigorous average-case analyses ran in \emph{exponential time}, albeit succeeding on a larger fraction of inputs (Li--Qiao, \emph{FOCS}'17; Brooksbank--Li--Qiao--Wilson, \emph{ESA}'20; Grochow--Qiao--Tang, \emph{STACS}'21).

These results reveal a finer landscape of the average-case complexities of TI-complete problems, providing guidance for cryptographic systems based on isomorphism problems. 
Our main technical contribution is to introduce the spectral properties of random matrices into algorithms for TI-complete problems. This leads to not only new algorithms but also new questions in random matrix theory over finite fields. To settle these questions, we need to extend both the generating function approach as in Neumann and Praeger (\emph{J. London Math. Soc.}, 1998) and the characteristic sum method of Gorodetsky and Rodgers (\emph{Trans. Amer. Math. Soc.}, 2021). 
\end{abstract}

\newpage
\tableofcontents
\newpage

\pagenumbering{arabic}
\setcounter{page}{1}
\section{Introduction}

\subsection{Background}

\subsubsection{Graph Isomorphism: average-case algorithms and cryptography}

An isomorphism problem asks whether two combinatorial or algebraic structures are essentially the same. The family of isomorphism problems has a long history in theoretical computer science. A flagship isomorphism problem is the Graph Isomorphism (\GrI) problem, which asks if two graphs are the same up to permuting vertices. For a long time, \GrI has been known to be easy in practice thanks to software such as McKay's Nauty \cite{McK80,MP14}, while the best algorithm for \GrI with worst-case analyses ran in time $2^{\tilde O(\sqrt{n})}$ for $n$-vertex graphs due to Babai and Luks \cite{BL83}. Of course, the situation regarding the worst-case complexity of \GrI changed dramatically by Babai's quasipolynomial-time algorithm in 2016 \cite{Bab16}.

Before Babai's breakthrough, the best theoretical justification for \GrI being potentially easy was through average-case algorithms. An average-case algorithm for \GrI requires that one graph is drawn from a probability distribution of graphs, such as the Erd\H{o}s--R\'enyi model, and the other graph is arbitrary. Babai, Erd\H{o}s and Selkow \cite{BES80} presented a polynomial-time algorithm that works for almost all graphs drawn from the Erd\H{o}s--R\'enyi $\mathbb{G}(n, 1/2)$ model, with improvements by Babai and Ku\v{c}era \cite{BK79}. More recently, average-case algorithms for \GrI received renewed attention: in a breakthrough \cite{AKM25}, Anastos, Kwan, and Moore conducted a smoothed analysis of \GrI and  an average-case analysis for canonical labeling for random graphs from $\mathbb{G}(n, p)$ covering all $p$, concluding a line of research \cite{BES80,BK79,bollobas1982distinguishing,czajka2008improved,linial2017rigidity}.

In the 1980's, there were some speculations that \GrI for some graphs could be hard, and hence amenable for use in cryptography \cite{BC86,BY90}. Furthermore, a digital signature can be designed based on \GrI following the Goldreich--Micali--Wigderson zero-knowledge protocol for \GrI \cite{GMW91} and the Fiat--Shamir transformation \cite{FS86}. However, such speculations could not be realized because of the difficulty in identifying such hard-for-isomorphism graphs. Furthermore, Babai's algorithm \cite{Bab16} implies that they are at most quasipolynomial-hard.

\subsubsection{Tensor Isomorphism: complexity and algorithms}\label{subsec:TI-algo}

While cryptography based on graph isomorphism seems unlikely, many cryptographic constructions based on graph isomorphism, such as the Goldreich--Micali--Wigderson zero-knowledge protocol for \GrI, can be adapted to other isomorphism problems. This is an important motivation for the recent active research on tensor isomorphism. In this subsection, we provide an overview of the complexity theoretic and algorithmic aspects of tensor isomorphism, and then turn to cryptography based on tensor isomorphism in Section~\ref{subsec:phenomenon}.

Tensors are multilinear maps, and in algorithms, they are stored as multidimensional arrays. For now, let us focus on $3$-tensors (i.e.~trilinear forms or bilinear maps).
Let $\T(\ell\times m\times n, \F)$ be the linear space of $\ell\times m\times n$ $3$-way arrays over a field $\F$. That is, $\tA=(a_{i,j,k})\in \T(\ell\times m\times n, \F)$ is an $\ell\times m\times n$ cuboid of numbers $a_{i,j,k}\in \F$. 

Tensor isomorphism can be defined by resorting to elementary operations on tensors, in analogy with elementary operations on matrices. First, note that we can slice $\tA$ along any of the three directions to obtain a tuple of matrices. For $n\in \N$, we set $[n]:=\{1,\dots,n\}$. The horizontal slices of $\tA$ are $A^h_i=(a_{i,j,k})_{j\in[m], k\in[n]}\in\M(m\times n, \F)$, the vertical slices of $\tA$ are $A^v_j=(a_{i,j,k})_{i\in[\ell],k\in[n]}\in\M(\ell\times n, \F)$, and the frontal slices are $A^f_k=(a_{i,j,k})_{i\in[\ell],j\in[m]}\in\M(\ell\times m, \F)$. Generalizing elementary operations for matrices, we can define \emph{elementary operations} on tensors as follows: for any of the three directions, multiply one slice by a non-zero scalar, permute two slices, or add one slice to another. Then, two tensors are defined to be \emph{isomorphic} if they are the same up to elementary operations. 

Since elementary operations (for matrices) can be represented as generators of the general linear group, two $3$-tensors $\tA,\tB\in\T(\ell\times m\times n,\F)$ are isomorphic if and only if there exist $T\in\GL(\ell,\F)$, $S\in\GL(m,\F)$ and $R\in\GL(n,\F)$ such that $(T\otimes S\otimes R)\tA=\tB$. In other words, $\tA$ and $\tB$ are in the same \emph{orbit} under the natural group action of $\GL(\ell,\F)\times\GL(m,\F)\times\GL(n,\F)$ on $\T(\ell\times m\times n,\F)$.

\paragraph{Complexity.} In a series of papers \cite{GQarxiv,GQT,TI1,TI2,TI3,TI4,TI5}, the Tensor Isomorphism (TI) complexity class was introduced, and isomorphism problems for several algebraic structures, including (families of) groups, polynomials, and algebras, are shown to be TI-complete. A key observation is that these algebraic structures can be encoded by 3-way arrays (with possibly some structural conditions). Therefore, their isomorphism problems can be formulated as equivalences of 3-way arrays under different group actions.

Formally, let $U, V, W$ be vector spaces such that $U\cong\F^\ell$, $V\cong\F^m$, and $W\cong\F^n$. Let $U^*$ be the dual space of $U$. Let $\otimes, \odot$ and $\wedge$ denote the tensor, symmetric, and exterior products between vector spaces, respectively. Then we have the following list of problems. 
\begin{itemize}
    \item Tensor isomorphism  (\TsI), $U\otimes V\otimes W$: The tensor isomorphism problem asks whether $\tA, \tB\in U\otimes V\otimes W$ are in the same orbit under the natural action of $\GL(U)\times\GL(V)\times\GL(W)$~\cite{TI1}.
    \item Group isomorphism (\GpI), $U\wedge U\otimes V$: A long-standing bottleneck for finite group isomorphism is $p$-groups of class $2$ and exponent $p$. Testing the isomorphism of this family of groups is equivalent to asking whether $\tA, \tB\in U\wedge U\otimes V$ are in the same orbit under the natural action of $\GL(U)\times \GL(V)$ with $\F=\F_p$~\cite{Bae38} (see also~\cite[Fact 4.3]{GQ17}). 
    \item Polynomial isomorphism (\PolyIso), $U\odot U\odot U$: Over fields of characteristic not $2$ or $3$, the cubic form isomorphism is equivalent to the orbit problem for $\GL(U)$ acting on $U\odot U\odot U$~\cite{Pat96,AS06}.
    \item Algebra isomorphism (\AlgIso), $U\otimes U\otimes U^*$: Algebra Isomorphism asks whether two (possibly associative or Lie) algebras are isomorphic. Recall that the multiplication of an algebra is recorded as a bilinear map $f:U\times U\to U$. To test if two algebras are isomorphic is equivalent to the orbit problem for $\GL(U)$ acting on $U\otimes U\otimes U^*$.
    For commutative, associative, or Lie algebras, we can restrict ourselves to certain varieties of $U\otimes U\otimes U^*$~\cite{10.1007/978-3-540-31856-9_1,AS06,Kayal2006,GrochowLie}. We shall not assume these conditions in the following.
    
    \item Matrix code conjugacy (\MCConj), $U\otimes U^*\otimes V$: Matrix code\footnote{Following the coding theory convention \cite{CL25}, we shall refer to subspaces of $\M(n, \F)$ as matrix codes. In some other references (such as \cite{LQ17}) these are called matrix spaces.} conjugacy asks whether two subspaces $\cA$, $\cB$ of $\M(n, \F)$ are conjugate, i.e. $\exists T\in\GL(n, \F)$ such that $\cA=T\cB T^{-1}:=\{TB T^{-1}\mid B\in \cB\}$. This is equivalent to the orbit problem of $\GL(U)\times\GL(V)$ acting on $U\otimes U^*\otimes V$~\cite{TI1,CL25}.
\end{itemize}

\paragraph{Algorithms.} There has been nice progress in algorithms for Tensor Isomorphism and related problems over finite fields over the past few years. 
To state the results, it is convenient to introduce the \emph{length} of a tensor as the sum of its side lengths: a tensor in $U\otimes V\otimes W$ with $U\cong\F^\ell$, $V\cong\F^m$, and $W\cong\F^n$ is of length $\ell+m+n$. 

The most notable progress is Sun's breakthrough on testing isomorphism of $p$-groups of class $2$ and exponent $p$ \cite{Sun23}. The improvement of Sun's result in \cite{IMQSZ24} shows that \TsI for tensors of length $L$ over $\F_q$ can be solved in time $q^{\tilde O(L^{1.5})}$. By the linear-length reductions in \cite{TI4}, this implies that all the problems mentioned above can be solved in this time bound. In particular, testing isomorphism of $p$-groups of class $2$ and exponent $p$ of order $N$---widely regarded as a bottleneck case for the group isomorphism problem---can be solved in time $N^{\tilde O(\sqrt{\log N})}$, significantly improving the brute-force barrier $N^{\log N+O(1)}$.

Average-case algorithms for \TsI have been studied, also partly motivated by group isomorphism. Following the study of random graph isomorphism \cite{BES80,BK79,AKM25}, random tensor isomorphism can be formulated as one tensor being random, such as each entry is sampled independently and uniformly from $\F_q$, and the other tensor being arbitrary. In \cite{LQ17,BLQW20}, an average-case algorithm for \TsI in time $q^{O(L)}$ was devised, and it was extended to testing isomorphism of cubic forms, algebras, and multilinear forms \cite{GQT}. Average-case algorithms for \TsI over $\R$ or $\C$ under orthogonal or unitary group actions were recently studied in \cite{CEMQ26}.

\subsection{Motivation: TI-complete problems in cryptography}\label{subsec:phenomenon}

A direct motivation for our work comes from the practice of TI-complete problems in cryptography. Recall in Section~\ref{subsec:TI-algo}, TI-complete problems include isomorphism problems for polynomials, groups, and algebras. Despite some nice progress, all the known worst-case and average-case algorithms still run in exponential time. These justify the use of TI-complete problems in cryptography. 

The first use of TI-complete problems in cryptography dates back to the 1990s. In \cite{Pat96}, Patarin proposed using polynomial isomorphism to realise a digital signature scheme based on the celebrated Goldreich--Micali--Wigderson zero-knowledge protocol \cite{GMW91} and the Fiat--Shamir transformation \cite{FS86}. This work led to much research on practical algorithms for polynomial isomorphism, usually using the Gr\"obner basis technique, cf.~\cite{MPG13,BFV13,BFP15}. Note that polynomial isomorphism is in fact a family of problems, depending on the choices of degrees, a single polynomial or a polynomial system, and so on. Therefore, some polynomial isomorphism instance is actually not TI-complete, such as the so-called isomorphism of polynomials with one secret problem, which turns out to be polynomial-time solvable \cite{IQ19}. 

Recently, there has been a renewed interest in TI-complete problems in cryptography in the context of group-action-based cryptography \cite{BY90,JQSY19,AFMP20}. In 2023, the National Institute of Standards and Technology of the US (NIST) issued an additional call for post-quantum digital signature schemes\footnote{See \url{https://csrc.nist.gov/projects/pqc-dig-sig}.}. In response to this call, two round-1 candidates are based on TI-complete problems, one called MEDS \cite{MEDSspecs,MEDS-paper} and the other called ALTEQ \cite{ALTEQ-paper,ALTEQspecs}. MEDS is based on tensor isomorphism (that is, $U\otimes V\otimes W$), while ALTEQ is based on alternating trilinear form isomorphism (that is, $U\wedge U\wedge U$). 

MEDS and ALTEQ motivated new practical algorithms for TI-complete problems from the cryptanalysis perspective \cite{Beu23,DBLP:conf/eurocrypt/NarayananQT24,RS24,CL25}. Some works led to the suspicion that
\begin{quote}
{\it
    Tensor Isomorphism-complete problems seem not equally hard in practice. 
    }
\end{quote}
For example, Couvreur and Levrat commented in \cite[Section 2.3]{CL25} that
\begin{quote}
{\it
\dots the equivalence problem seems
to become much easier when reducing from general matrix code equivalence
(i.e. arbitrary $P$, $Q$) to matrix code conjugacy (i.e. $m = n$ and $P = Q$). It
is interesting to observe that from a complexity theory point of view, the two
problems are polynomially equivalent \dots 
 }
\end{quote}
Couvreur and Levrat's main justification of the above comment lies in their algorithms for matrix code conjugacy that work for $1/\Theta(q)$ fraction over a random choice of matrix codes. The first algorithm in \cite[Section 4.1]{CL25} is polynomial-time but heuristic, relying on an assumption there \cite[Assumption 2]{CL25}. The second algorithm in \cite[Section 4.2]{CL25} runs in subexponential time. 

From the above, we see that for matrix code equivalence, there are better practical algorithms compared to tensor isomorphism. 
Such algorithms have a practical implication if we use matrix code conjugacy as the basis of a digital signature scheme following Goldreich--Micali--Wigderson and Fiat--Shamir. For such a scheme, a pair of conjugate matrix codes form the public key, and the conjugation matrix serves as the private key. A pair of conjugate matrix codes is usually---as in the cases of ALTEQ and MEDS---formed by randomly sampling one matrix code and then randomly sampling the other matrix code in the same orbit of the first one. In practice, $q$ could be set as a 20-bit prime as in MEDS \cite{MEDSspecs}, which would mean that one in about 100,000 users would receive such a weak public key.



\subsection{Our results: efficient average-case algorithms for some TI-complete problems} 


In Section~\ref{subsec:phenomenon}, we reviewed recent progress in cryptanalysis of schemes based on TI-complete problems, which indicates that TI-complete problems could have different difficulty levels in practice despite their worst-case polynomial-time equivalence. This calls for a study of average-case algorithms for TI-complete problems that are efficient for $1/\Theta(q)$ fraction of the inputs, while previous work \cite{LQ17,BLQW20,GQT} focuses mainly on algorithms that work for $1-1/q^{\Omega(n)}$ fraction of the inputs. Note that if $q$ is viewed as a constant, $1/\Theta(q)$ means a constant fraction of the inputs. Of course, with the relaxation of the requirements for the fraction of inputs that the algorithm works on, we expect the running time to be substantially improved. 


Following \cite{LQ17,BLQW20,GQT}, we consider random tensors over $\F_q$ such that each entry is sampled from $\F_q$ independently and uniformly at random. More specifically, a random tensor $\tA=(a_{i,j,k})\in U\otimes V\otimes W$ with $U\cong \F_q^\ell, V\cong\F_q^m, W\cong\F_q^n$ is formed by sampling each entry $a_{i,j,k}$ independently and uniformly at random from $\F_q$. This is also the random model used in cryptographic schemes such as \cite{ALTEQ-paper,MEDS-paper}.

Our first result is an average-case polynomial-time algorithm for algebra isomorphism. 
\begin{theorem}\label{thm:alg-iso}
    Let $U\cong \F_q^n$. There is a randomized $\poly(n, \log q)$-time algorithm for algebra isomorphism: given $\tA, \tB \in U\otimes U\otimes U^*$, decide whether they are isomorphic under the natural action of $\GL(U)$. The algorithm is correct for $1/\Theta(q)$-fraction of $\tA\in U\otimes U\otimes U^*$.    
\end{theorem}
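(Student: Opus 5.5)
We need an average-case polynomial-time algorithm for algebra isomorphism. The idea is to use spectral properties of random matrices to canonically extract a basis or a small set of candidate bases. An algebra $\tA\in U\otimes U\otimes U^*$ is a bilinear map $(u,v)\mapsto u\cdot v$, and for each $u\in U$ it gives the left multiplication operator $L_u\in\M(n,\F_q)$. An isomorphism $T\in\GL(U)$ from $\tA$ to $\tB$ satisfies $T L^{\tA}_u T^{-1}=L^{\tB}_{Tu}$.

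The plan is to find canonically defined vectors. One natural choice is the trace form: the linear functional $u\mapsto \mathrm{tr}(L_u)$ is a vector $\tau\in U^*$ that is $\GL(U)$-equivariant. Similarly, $\mathrm{tr}(R_u)$ gives a second functional. These alone do not give a vector in $U$, however. So instead we use the "left unit-like" idea: consider the product $w\cdot w$ and iterate, or better, use a canonical element. Take the functional $\tau$ and look at its kernel; this is still only a hyperplane.

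The approach we commit to is the following. From $\tau,\tau'\in U^*$ form the matrix $M=\sum_k \tau(\cdot)$, more concretely the bilinear form $\beta(u,v)=\tau(u\cdot v)$. This is an equivariant bilinear form on $U$. For random $\tA$ it is a uniformly random-looking matrix, nondegenerate with constant probability, so it gives an isomorphism $\beta:U\to U^*$. Composing with the other equivariant form $\beta'(u,v)=\tau'(u\cdot v)$ gives an equivariant operator $X=\beta^{-1}\beta'\in\mathrm{End}(U)$, satisfying $TX^{\tA}T^{-1}=X^{\tB}$.

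The key random-matrix step is to show that with probability $1/\Theta(q)$ over $\tA$, the operator $X$ is cyclic and, more strongly, has a characteristic polynomial with a linear factor of multiplicity one, or some spectral feature giving a canonical vector. For instance, an eigenvalue $\lambda\in\F_q$ of multiplicity one yields a line $\ell\subseteq U$ canonical up to scalar. From a canonical vector $v$ (up to $q-1$ scalars) we generate a canonical basis $v, X v, X^2 v,\dots, X^{n-1}v$, valid when $v$ is cyclic for $X$. Alternatively we use the algebra products $v\cdot v$, $(v\cdot v)\cdot v,\dots$. We then guess the scalar, and the $\F_q$-eigenvalue in $\tB$ matched to the same canonical invariant; this takes $O(q)$ guesses, or polynomially many after restricting to distinguishing eigenvalues. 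Each guess determines $T$, which we check directly. To achieve $\poly(\log q)$ running time, we avoid guessing the scalar by normalizing $v$, e.g. requiring $\tau(v)=1$. We also match eigenvalues via invariance, since $\lambda$ is itself an invariant, so we take the unique simple eigenvalue in $\F_q$, or the smallest in a fixed ordering.

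The main obstacle is the probabilistic analysis. We must show that $X$ (equivalently the random structure) has, with probability $\Omega(1/q)$, a suitable spectral property: $\beta$ invertible, a simple $\F_q$-eigenvalue that is canonically selectable, with the eigenvector cyclic and satisfying $\tau(v)\neq 0$. The difficulty is that $X$ is not a uniform random matrix; it is built from correlated entries of $\tA$. Our plan is to condition on $\tau,\tau'$ (generic, after a change of basis $\tau=e_1^*,\tau'=e_2^*$) and note that the slices $\beta,\beta'$ are then essentially independent uniform matrices. Hence $X$ is distributed as $\beta^{-1}\beta'$, which conditioned on invertibility is uniform on $\GL_n$-like matrices. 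We then apply generating-function estimates (Neumann–Praeger style) for the probability that a uniform matrix has exactly one eigenvalue in $\F_q$, with multiplicity one, and is cyclic. This probability tends to a constant times $1/q$ in the worst regime, or a constant, giving the $1/\Theta(q)$ fraction. Handling the dependence between $\tau$ and the slices, through the diagonal contributions to traces, is where we expect to need care.
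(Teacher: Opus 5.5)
\textbf{Gap: the step that should reduce the candidate isomorphisms to polynomially many does not work.}

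Your canonical vector $v$ is, by construction, an eigenvector of $X$. So $X^k v=\lambda^k v$, and $v,Xv,\dots,X^{n-1}v$ span only a line. The property ``the eigenvector is cyclic for $X$'' that your last paragraph plans to lower-bound never holds for $n\ge 2$. More fundamentally, $X$ together with its eigenline determines $T$ only up to the stabiliser of $v$ in the centraliser of $X$, even after normalising $\tau(v)=1$. For cyclic $X$ this stabiliser is $\{p(X)\in\GL(n,q): p\in\F_q[t],\ p(\lambda)=1\}$, which has exponentially many elements.

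What is missing is a second equivariant matrix, sufficiently independent of $X$, together with a proof that the pair leaves only $O(q)$ candidates. This is exactly what the paper supplies. It uses the canonical eigenvector as a \emph{coefficient vector} to form a new slice combination $A'=\sum_i v_iA_i$, exploiting that $T$ also acts on the slice index, and $A'$ is again a uniform matrix. Iterating gives $A''$. Since two independent uniform matrices generate $\M(n,q)$ with probability $1-q^{-\Omega(n)}$, the coset $\conj_\to((A',A''),(B',B''))$ has at most $q-1$ elements (Corollary~\ref{cor:conj-bound}) and can be computed by Theorem~\ref{thm:module-iso}.

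\textbf{Your fallback is the right kind of repair but is unanalysed.} The sequence $v,\ v\cdot v,\ (v\cdot v)\cdot v,\dots$ is a Krylov sequence for right multiplication $R_v$; with $\tau(v)=1$ it would yield a unique candidate $T$. However, cyclicity of $v$ for $R_v$ is not a standard fact about uniform matrices, because $R_v$ depends on $v$. Concretely, if $X$ is defined by $\beta'(u,w)=\beta(u,Xw)$, then $\tau'(u\cdot v)=\lambda\,\tau(u\cdot v)$ for all $u$. Hence $R_v$ always maps into the hyperplane $\ker(\tau'-\lambda\tau)$; with the other convention the same happens to $L_v$. You would need a lemma showing that, conditioned on $X$ and $v$, $v$ is cyclic for $R_v$ (or $X$ and $L_v$ generate $\M(n,q)$) with probability $\Omega(1)$.

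\textbf{Your route genuinely differs from the paper's hull route, and its first half is sound and easier than you expect.} Move $(\tau,\tau')$ to $(e_1^*,e_2^*)$ by equivariance. Each of the $2n$ linear constraints defining the fibre then contains a private entry outside the first two frontal slices, once $n\ge 4$. So $(\beta,\beta')$ is exactly uniform, and given $\beta$ invertible, $X$ is uniform on $\M(n,q)$. By Theorem~\ref{thm:eigen-condition}, $X$ then has a unique simple $\F_q$-eigenvalue with \emph{constant} probability, not ``a constant times $1/q$''. The paper's $1/q$ loss comes solely from requiring $\dim\hull(\cA)=1$. So if the missing second-matrix lemma were proved, your approach would succeed on a constant fraction of inputs.
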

See Section~\ref{sec:alg-iso} for the proof of Theorem~\ref{thm:alg-iso}. For comparison, an average-case algorithm for algebra isomorphism was developed in \cite{GQT}; it runs in time $q^{O(n)}$ and works for ($1-1/q^{\Omega(n)}$)-fraction of $\tA\in U\otimes U\otimes U^*$. The randomization in Theorem~\ref{thm:alg-iso} is due to standard randomized algorithms for polynomial factorization over $\F_q$  \cite{CantorZassenhaus81,vzGG13}.

Our second result is an average-case polynomial-time algorithm for matrix code conjugacy. 
\begin{theorem}\label{thm:matrix-sp-conj}
Let $U\cong \F_q^n$ and $V\cong \F_q^n$. There is a randomized $\poly(n, \log q)$-time algorithm for matrix code conjugacy: given $\tA, \tB\in U\otimes U^*\otimes V$, decide whether they are isomorphic under the natural action of $\GL(U)\times\GL(V)$. The algorithm is correct for $1/\Theta(q)$ fraction of $\tA\in U\otimes U^*\otimes V$.
\end{theorem}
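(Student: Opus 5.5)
Write the input as a matrix code $\cA=\langle A_1,\dots,A_n\rangle\le\M(n,\F_q)$, spanned by the frontal slices of $\tA$ along $V$. The task is to decide whether $\cB=T\cA T^{-1}$ for some $T\in\GL(n,\F_q)$. The action of $\GL(V)$ only changes the basis of the code, so any canonical object attached to $\cA$ as a subspace is invariant under it. The plan is to use spectral properties of random matrices to extract, from $\cA$, a canonical set of vectors in $U$ that pins $T$ down to a polynomially sized family of candidates.

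Concretely, we single out a distinguished element of the code. For a random code, with probability $1/\Theta(q)$ the following condition holds (we call such codes \emph{good}):
\begin{enumerate}
\item the code contains, up to scalars, a unique matrix of rank exactly $n-1$ in a suitable structured position; or, more robustly,
\item the pencil structure is generic enough that the determinant form $\det(x_1A_1+\dots+x_nA_n)$ has a canonical simple factor.
\end{enumerate}
The cleaner choice I would pursue is spectral. Take $A_1,A_2$ to be the first two slices, and require that $A_1$ be invertible and that $A_1^{-1}A_2$ have an irreducible, hence separable, characteristic polynomial. This holds with probability about $1/n$, which is not $1/\Theta(q)$, so we must instead require only that the characteristic polynomial be squarefree, which holds with probability $1-1/q$. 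We also need canonical choices independent of the basis. To get these, restrict to codes for which the matrices of some canonical rank form a line. For instance, require that the set of singular matrices in $\cA$ (the hypersurface $\det=0$) contains a unique rational line of corank-$\geq2$ points, or a unique point of maximal corank. Then compute that canonical element $C\in\cA$ by solving the low-rank problem via linear algebra on the determinant polynomial restricted to random lines.

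Given a canonical element $C$ and a second canonical element $D$ (so that $T$ must map $C\mapsto \lambda C'$ and $D\mapsto\mu D'$ in $\cB$), we reduce to finding $T$ with $TCT^{-1}=\lambda C'$ and $TDT^{-1}=\mu D'$. When $C$ is cyclic with squarefree characteristic polynomial, the centralizer of $C$ is $\F_q[C]$, a commutative étale algebra. Factoring the characteristic polynomial (randomized, Cantor--Zassenhaus) lets us enumerate $\lambda$ from matching eigenvalue multisets, giving $\poly$ many choices. For each choice, the conjugator is determined up to $\F_q[C]^\times$, and the second equation becomes a linear system after multiplying out, $TD=\mu D'T$. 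We solve it for each $\mu$ among $\poly$ spectral candidates. Finally we check whether any solution $T$ is invertible and satisfies $T\cA T^{-1}=\cB$. Every candidate $T$ is verified at the end, so the algorithm never errs on a yes-answer it produces, and it is correct whenever $\tA$ is good.

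The main obstacle is the probabilistic claim. We must show that with probability $1/\Theta(q)$ over random $\cA$ there exist canonical elements that are cyclic (squarefree characteristic polynomial), identifiable from $\cA$ in polynomial time, and such that the resulting linear system has a solution space of dimension $O(1)$ over the relevant field. This requires counting matrices in a random code with prescribed spectral and rank data. I would attack it with the generating-function approach of Neumann--Praeger for the proportion of cyclic or separable matrices, together with a second-moment or characteristic-sum argument in the style of Gorodetsky--Rodgers to control the number of such canonical elements. The $1/q$ loss comes from conditioning on the existence and uniqueness of the degenerate canonical element.
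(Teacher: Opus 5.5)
There is a genuine gap. Your argument depends on a basis-independent, conjugation-invariant distinguished element of $\cA$ that exists with probability $1/\Theta(q)$ and can be computed in polynomial time, and none of your candidates supplies one. Each fixed nonzero combination of the random slices is a uniform matrix. So the code contains in expectation $\Theta(q^{\,n-1-c^2})$ projective points of corank $c$. That means about $q^{n-2}$ points of rank $n-1$, and a maximal corank around $\sqrt{n}$. The number of points attaining that maximum depends on how close $n-1$ is to a perfect square and can be as large as $q^{\Theta(\sqrt{n})}$, so there is no uniqueness event of probability $1/\Theta(q)$.

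Computationally, finding such elements is MinRank. A random line meets the corank-$\ge 2$ locus (codimension $4$) only with probability $O(q^{-3})$ and never singles out a canonical point. Elements of corank $\approx\sqrt{n}$ are hit with probability $q^{-\Theta(n)}$ per sample. The ``canonical simple factor'' of $\det(\sum_i x_iA_i)$ is never specified and has no reason to exist for a random code. Rank is also invariant under the full equivalence action $X\mapsto PXQ$, so your construction never uses the fact that the group acts by conjugation. Finally, you posit a second canonical element $D$ without any mechanism to produce it. You also settle for an $O(1)$-dimensional solution space, whose search would cost $q^{O(1)}$ rather than $\poly(\log q)$.

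The missing idea is the hull $\hull(\cA)=\cA\cap\cA^\perp$ for the form $(X,Y)\mapsto\Tr(XY)$. It is preserved by conjugation, computable by linear algebra, and one-dimensional with probability $\sim 1/q$ (Sendrier); this is where the $1/q$ comes from. Its generator $A$ satisfies $\Tr(A^2)=0$, so it is not uniform. One then needs the characteristic-sum result that $A$ still has a unique simple eigenvalue in $\F_q$ with constant probability. The splitting $\F_q^n=E_\lambda\oplus E_0$ forces the conjugator into the form $\mathrm{diag}(1,S_0)$ in adapted bases. The first vertical slice then yields a canonical hyperplane in the coefficient space, whose normal vector selects a matched pair $A'\in\cA$, $B'\in\cB$. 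Repeating the procedure gives a second pair $(A'',B'')$. Since $A',A''$ are (conditionally) uniform, Neumann--Praeger shows they generate $\M(n,q)$ with probability $1-q^{-\Omega(n)}$. Hence $\conj_\to((A',A''),(B',B''))$ is a single scalar class, which is computed by module isomorphism and then verified. Your centraliser and squarefree-characteristic-polynomial discussion does not replace any of these steps.
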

See Section~\ref{sec:matrix-sp-conj} for the proof of Theorem~\ref{thm:matrix-sp-conj}. 
A summary of algorithms for matrix code conjugacy with rigorous analyses is in Table~\ref{tab:msc-comparison}. Note that Theorem~\ref{thm:matrix-sp-conj} applies to $n$-dimensional matrix codes in $\M(n, q)$. While this is a restriction, we note that $n$-dimensional matrix codes in $\M(n, q)$ are used in cryptography \cite{MEDS-paper}, and practical algorithms for matrix code equivalence were first devised for such codes \cite{DBLP:conf/eurocrypt/NarayananQT24} and then later improved to any dimension \cite{CL25}. We leave the extension of to arbitrary dimension $m$ as an interesting open problem (Problem~\ref{prob:ntom}).


\begin{table}[htpb]
\centering
\setlength{\tabcolsep}{6pt}
\renewcommand{\arraystretch}{1.2}
\begin{tabular}{@{}lccc@{}}
\toprule
\textbf{Regime}  & \textbf{Time} & \textbf{Success prob.\ on random $\tA$}\\
\midrule
Worst case \cite{IMQSZ24} &
$q^{\tilde{O}(n^{1.5})}$ &
$1$ \\

Average case \cite{LQ17,BLQW20}  &
$q^{O(n)}$ &
$1 - 1/q^{\Omega(n)}$ \\

Average case \cite[Lemma 13]{CL25}  &
$q^{O(\sqrt{n}/\log n)}$ &
$1/\Theta(q)$ \\

\textbf{Average case (Theorem~\ref{thm:matrix-sp-conj})} &
\textbf{$\poly(n, \log q)$} &
\textbf{$1/\Theta(q)$} \\
\bottomrule
\end{tabular}
\caption{Algorithms with rigorous analysis for matrix code conjugacy of $n$-dim codes in $\M(n, q)$.}
\label{tab:msc-comparison}
\end{table}

Interestingly, when we turn to $4$-tensor isomorphism, which subsumes and reduces to $3$-tensor isomorphism \cite{TI1}, something nontrivial could be done, as shown in the following result.
\begin{theorem}\label{thm:4-tensor-iso}
    Let $U, V, W, X\cong\F_q^n$. There is a $\poly(n, q)$-time algorithm for the isomorphism problem of $U\otimes V\otimes W\otimes X$, such that the algorithm is correct for a $1/\poly(q)$-fraction of $\tA\in U\otimes V\otimes W\otimes X$.
\end{theorem}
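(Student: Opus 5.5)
The plan is to reduce, with probability $1/\poly(q)$ over the random $\tA$, the $4$-tensor problem to a matrix-code conjugacy or algebra isomorphism instance that is then handled by the spectral method behind Theorems~\ref{thm:alg-iso} and~\ref{thm:matrix-sp-conj}. The extra leverage over the $3$-tensor case comes from two facts. First, a $4$-tensor admits several \emph{square} flattenings. Second, the acting group $\GL(U)\times\GL(W)$ is a small subgroup of $\GL(U\otimes W)$, so operators built from two different flattenings transform by conjugation. I have not checked that the construction below produces an $\F_q$-rational root with the required probability; that is the step most likely to need repair.

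\textbf{Step 1 (square flattenings).} Regard $\tA$ as $M_\tA\colon V^*\otimes X^*\to U\otimes W$, an $n^2\times n^2$ matrix. Under $(T,S,R,P)\in\GL(U)\times\GL(V)\times\GL(W)\times\GL(X)$ it transforms as $M\mapsto (T\otimes R)\,M\,(S\otimes P)^t$. A uniformly random square matrix over $\F_q$ is invertible with probability $\prod_{i\ge1}(1-q^{-i})=\Theta(1)$. Likewise regard $\tA$ as $N_\tA\colon V^*\otimes W^*\to U\otimes X$. Applying the swap $W\leftrightarrow X$ on the codomain side, $\mathrm{sw}\circ N_\tA$ becomes a map into a space isomorphic to $U\otimes W$ after contraction against $M_\tA^{-1}$ in the appropriate factor. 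The aim is an operator $\Phi_\tA$ on an $n$- or $n^2$-dimensional space, built from $\tA$ and $M_\tA^{-1}$, that transforms by \emph{conjugation} under one factor of the group, say $\Phi\mapsto T\Phi T^{-1}$ with $T\in\GL(U)$ or $T\otimes R$. The natural candidate is a partial trace over $V$ and $X$ of $\tA\otimes\tA$ contracted with $M_\tA^{-1}$. Its characteristic polynomial $\chi_\tA$ is then an isomorphism invariant, computable in $\poly(n,\log q)$ time.

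\textbf{Step 2 (canonical line via an $\F_q$-root).} Consider the event $\mathcal E$ that $\chi_\tA$ has exactly one root $\lambda\in\F_q$, that this root is simple, and that the corresponding eigenspace is a line $\langle u\rangle\subseteq U$ (or a decomposable line in $U\otimes W$). On $\mathcal E$ the line is canonical: find $\lambda$ by factoring $\chi_\tA$ (Cantor--Zassenhaus), or by trying all $q$ field elements, which is affordable in $\poly(n,q)$ time. Then test $\chi_\tB=\chi_\tA$ and compute the corresponding line for $\tB$. To bound $\Pr[\mathcal E]$ from below by $1/\poly(q)$, I would transfer the spectral statistics of the paper's random-matrix results, namely the generating-function and characteristic-sum estimates, to $\Phi_\tA$. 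The main obstacle is showing that $\Phi_\tA$ is ``random enough'', i.e.\ close to uniform on a set of positive density. Here I would first try the approach used for matrix code conjugacy: condition on $M_\tA$, note that the remaining entries of $\tA$ enter $\Phi_\tA$ affinely, and argue that the resulting distribution dominates a constant fraction of the uniform distribution on matrices.

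\textbf{Step 3 (rigidifying and finishing).} Having canonical vectors $u\in U$ and, symmetrically, $v^*\in V^*$, $w\in W$, $x^*\in X^*$ (each obtained from an analogous event, losing at most a $1/\poly(q)$ factor, with only a constant number of such steps), contract $\tA$ to obtain canonical $3$-tensors and matrices. Specifically, $\tA(v^*,x^*)\in U\otimes W$ is an $n\times n$ matrix determined up to the group action. With probability $\Theta(1)$ it is invertible, and it fixes $R$ in terms of $T$. This reduces the problem to conjugacy data for $T$ alone, which is a matrix code conjugacy or algebra isomorphism instance on which the polynomial-time procedure of Theorems~\ref{thm:alg-iso}--\ref{thm:matrix-sp-conj} applies. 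One detail to handle is the remaining scalar ambiguity in the chosen vectors. There are at most $(q-1)^4$ choices of scalars, and enumerating them costs only a $\poly(q)$ factor. This enumeration explains why the running time is $\poly(n,q)$ rather than $\poly(n,\log q)$, and why the success probability is $1/\poly(q)$ rather than $1/\Theta(q)$.
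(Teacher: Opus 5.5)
There is a genuine gap, and it is more serious than the unchecked estimate you flag. First, $\Phi_\tA$ is never actually defined, and the natural reading of your candidate degenerates. Contracting $\tA$ against $M_\tA^{-1}$ along the same $(U\otimes W)\,|\,(V\otimes X)$ split that defines $M_\tA$ only returns $M_\tA M_\tA^{-1}=\idmat$ or its partial traces $n\idmat$. A nontrivial equivariant operator does exist if you pair across a different split. For example, with $\widehat{M_\tA^{-1}}\colon U\otimes X\to V^*\otimes W^*$ a reshaping of $M_\tA^{-1}$, the composite $N_\tA\circ\widehat{M_\tA^{-1}}$ is an operator on $U\otimes X$ that transforms by conjugation under $T\otimes P$.

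This operator, however, is far from uniform. It is unchanged under $\tA\mapsto c\,(\idmat\otimes S\otimes R\otimes\idmat)\tA$, and both of its partial traces are forced to equal $n\idmat$. So it lies in an affine subspace of codimension $2n^2-1$ of $\M(n^2,q)$. None of the paper's random-matrix results, which concern uniform matrices or uniform matrices conditioned on $\Tr(A^2)=0$, say anything about its spectrum. Your proposed remedy also does not work: $M_\tA$ is a reshaping of all entries of $\tA$, so after conditioning on it nothing is left to vary. Asking instead for a decomposable eigenline in an $n^2$-dimensional space imposes a codimension-$(n-1)^2$ condition, so that option is hopeless.

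Step~3 has the same problem, because the reduced instances are not random. If $D_1,\dots,D_n$ are the $X$-slices of $\tA$ contracted with $v^*$ and $C_\tA=\sum_l x^*_l D_l$, then the code $\linspan\{D_lC_\tA^{-1}\}$ contains $\idmat$. So the average-case guarantees of Theorems~\ref{thm:alg-iso} and~\ref{thm:matrix-sp-conj} do not transfer. Hence $\Pr[\mathcal E]\ge 1/\poly(q)$, which is the whole content of the theorem, is unproved. Enumerating $(q-1)^4$ scalars affects the running time, not the success probability.

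The idea you are missing is to condition on the opposite event to your Step~1: not that a square flattening is invertible, but that it has small constant corank. Write $(L,R,S,T)$ for the four factors, as in the paper. The paper flattens $\tA$ along $(U\otimes V)\,|\,(W\otimes X)$ to $\tilde A\in\M(n^2,q)$. With probability $\Theta(q^{-9})$ its corank is exactly $3$. Then, by $\GL(n^2,q)\times\GL(n^2,q)$-invariance of the uniform distribution, its left kernel is a uniformly random $3$-dimensional subspace of $U\otimes V$. This is a random $3$-dimensional matrix code $\cC_\tA\le\M(n,q)$ satisfying $\trp{L}\cC_\tA R=\cC_\tB$ for any isomorphism.

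The algorithm enumerates the at most $q^9$ ordered bases $(B_1,B_2,B_3)$ of $\cC_\tB$. Since $A_1$ is invertible with probability $\Theta(1)$, one eliminates $L$ and is left with conjugacy of $(A_1^{-1}A_2,A_1^{-1}A_3)$ to $(B_1^{-1}B_2,B_1^{-1}B_3)$. The first pair is random, so it generates $\M(n,q)$ with probability $1-q^{-\Omega(n)}$. Each basis therefore yields at most $q-1$ candidates, found in polynomial time (Corollary~\ref{cor:conj-bound}, Theorem~\ref{thm:module-iso}). The right kernel gives candidates for $(S,T)$ in the same way, and one tests all $q^{O(1)}$ combinations. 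Both the $1/\poly(q)$ success rate and the $\poly(n,q)$ running time come from the corank event and the basis enumeration. Since the kernel's uniformity is automatic, no spectral analysis is needed.
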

See Section~\ref{sec:4-tensor-iso} for the proof of Theorem~\ref{thm:4-tensor-iso}. 

\paragraph{Main message.} Theorems~\ref{thm:alg-iso}, \ref{thm:matrix-sp-conj}, and~\ref{thm:4-tensor-iso} suggest that average-case algorithms for TI-complete problems with different running times can be devised for different success-probability regimes. 
Some works achieve an almost-full success probability $1-1/q^{\Omega(n)}$ but require an exponential running time $q^{O(n)}$.
Others (including this paper) trade the success probability down to about $1/\Theta(q)$ in order to obtain polynomial time. To make this landscape explicit, we summarize in Tables \ref{tab:avgcase-regimes}
what is currently known.

\begin{table}[h]
\centering
\resizebox{\linewidth}{!}{
\renewcommand{\arraystretch}{1.1}
\begin{tabular}{l|c|c|c}
\toprule
\textbf{Problem} 
& \textbf{Success prob.} $1-1/q^{\Omega(n)}$ 
& \textbf{Success prob.} $1/\Theta(q)$
& \textbf{Success prob.} $1/\poly(q)$  \\ 
\midrule
3-\TsI 
& $q^{O(n)}$  \; \cite{LQ17,BLQW20} 
& Unknown
& Unknown    \\[2pt]
\AlgIso
& $q^{O(n)}$  \cite{GQT} 
& 
\textbf{$\poly(n,\log q)$ (Theorem \ref{thm:alg-iso})} 
&  Subsumed by $1/\Theta(q)$\\
\MCConj
& $q^{O(n)}$  \cite{GQT} 
& 
\textbf{$\poly(n,\log q)$ (Theorem \ref{thm:matrix-sp-conj})} 
&  Subsumed by $1/\Theta(q)$\\
4-\TsI 
& $q^{O(n)}$ \cite{LQ17,BLQW20} 
& Unknown
& \textbf{$\poly(n,q)$ (Theorem \ref{thm:4-tensor-iso})}  \\[2pt]
\bottomrule
\end{tabular}}
\caption{Average-case algorithms for TI-complete problems under different success-probability regimes of the average-case analyses. For \MCConj, we consider $n$-dimensional codes in $\M(n, q)$.}
\label{tab:avgcase-regimes}
\end{table}

From the cryptography viewpoint, Theorems~\ref{thm:alg-iso} and~\ref{thm:matrix-sp-conj} address isomorphism problems of $U\otimes U\otimes U^*$ and $U\otimes U^*\otimes V$, but do not give anything more useful to $U\otimes V\otimes W$ or $U\wedge U\wedge U$. This indicates that $U\otimes V\otimes W$ and $U\wedge U\wedge U$ would be more preferred for cryptographic purposes, as explained at the end of Section~\ref{subsec:phenomenon}. This also poses the study of average-case algorithms for $1/\Theta(q)$-fraction of inputs for these cases as interesting open problems.


\subsection{Overview of the techniques}

For algebra and matrix code isomorphism, our main technical contribution is to introduce spectral properties of random matrices over finite fields into average-case algorithms for TI-complete problems. 

Note that spectral properties of matrices over finite fields behave quite differently from those of matrices over $\C$. Let $A\in\M(n, q)$ be an $n\times n$ matrix over $\F_q$. The roots of $\det(A-x \idmat_n)$ as a polynomial in $\F_q[x]$ are the eigenvalues of $A$. Note that it is possible that eigenvalues of $A$ lie in extension fields of $\F_q$. We are interested in those eigenvalues lying in the original field $\F_q$ with algebraic multiplicity $1$. In the following, we shall explain how these eigenvalues in $\F_q$ are used in algorithms, and how algorithms motivate new questions in random matrix theory over finite fields. 

For 4-tensor isomorphism over $U\otimes V\otimes W\otimes X$ where $U\cong V\cong W\cong X\cong \F_q^n$, we observe yet another global isomorphism invariant that holds with probability $1/q^{c}$ for some constant $c$. From this global isomorphism invariant, the algorithm follows relatively easily from some techniques for algebra and matrix code isomorphism. Therefore, we focus on the latter two problems in the introduction.

\subsubsection{Algorithmic techniques}\label{subsec:remark-design} 

\paragraph{Review of previous techniques.} In the literature, there are two main techniques for average-case algorithms for TI-complete problems. 

The first one, used in \cite{LQ17,BLQW20,GQT}, is inspired by the classical individualization and refinement technique for graph isomorphism (see e.g.~\cite{BES80}). Briefly speaking, the idea there is to examine the effect of guessing a constant number of columns of the transformation matrix, which usually leads to determining the full matrix. The guessing of these columns contributes to the $q^{O(n)}$ running time of these algorithms in \cite{LQ17,BLQW20,GQT}. 

The second one, used in \cite{CL25} and originated from \cite{Sen02}, is to start with some global isomorphism invariant. After this step, one may employ either polynomial solving or linear algebraic structures for the algorithm design. The global isomorphism invariant is non-trivial with $1/\Theta(q)$ fraction of the tensors which goes into the average-case analysis. 

\paragraph{The main innovation in our algorithms.} We now briefly explain our algorithm for algebra isomorphism. Suppose we wish to test if two algebras $\tA, \tB\in U\otimes U\otimes U^*$, $U\cong\F_q^n$, are isomorphic. It is natural to start with the global isomorphism invariant in \cite{CL25}, which gives a distinguished matrix $D_\tA\in U\otimes U^*$ from $\tA$ over $1/\Theta(q)$ fraction of $A$. 

Following the individualization and refinement strategy as in \cite{LQ17,BLQW20,GQT}, we wish to extract more information from the canonical matrix $D_\tA$. From the experience in \cite{LQ17,BLQW20,GQT}, one canonical matrix does not produce enough constraints on the underlying transformation matrices, and we need constantly many. The goal becomes to deduce another ``canonical'' matrix in $U\otimes U^*$. 

This is where our main innovation, the spectral properties of $D_\tA$, comes to aid. Suppose $D_\tA$ has a unique eigenvalue in $\F_q$ of algebraic multiplicity $1$. The $1$-dimensional eigenspace associated with $D_\tA$ is then another isomorphism invariant, from which we can extract further information, leading to relations in the matrix tuple conjugacy form of $TA_iT^{-1}=B_i$ for $A_i, B_i\in\M(n, q)$, $i\in[c]$ as in \cite{LQ17,BLQW20,GQT}. 
This allows for individualization-and-refinement idea to match up a constant number of pairs of matrices. 

We now outline the average-case algorithm for the algebra isomorphism problem; see Section~\ref{sec:alg-iso} for details.


\paragraph{Set up for algebra isomorphism.} Let $U\cong\F_q^n$, and let $\tA, \tB\in U\otimes U\otimes U^*$ be two algebras. In algorithms, after fixing a basis of $U$, $\tA$ is represented by its structure constants, that is, (by some abuse of notation) a $3$-way array $\tA=(a_{i,j,k})\in \T(n\times n\times n, q)$, so the bilinear map represented by $\tA$ is $g_\tA:\F_q^n\times\F_q^n\to\F_q^n$ is defined as $g_\tA(\stdb_i,\stdb_j)=\sum_{k\in[n]}a_{i,j,k}\stdb_k$. We see that $\tA$ and $\tB$ are isomorphic as algebras if and only if there exists $T\in\GL(n, q)$ such that $(T\otimes T\otimes T^{-\mathrm{t}})\tA=\tB$.

Take the horizontal slices of $\tA$ as $\vA=(A_1, \dots, A_n)\in\M(n, q)^n$, where $A_i(j, k)=a_{i,j,k}$. Similarly, do this for $\tB$ and get $\vB=(B_1, \dots, B_n)\in\M(n, q)^n$. Then, the isomorphism testing between $\tA$ and $\tB$ as algebras translates to asking if there exists $T=(t_{i,i'})\in\GL(n, q)$, such that  
\begin{equation}\label{eq: algebra iso group action}    
\sum_{i'\in[n]} t_{i,i'}T A_{i'} T^{-1}=B_i~\forall\ i\in[n].
\end{equation}
Let $\cA=\linspan\{A_1, \dots, A_n\}\leq \M(n, q)$ and $\cB=\linspan\{B_1, \dots, B_n\}\leq \M(n, q)$. By the above, a necessary condition for $\tA$ and $\tB$ to be isomorphic is that $\cA$ and $\cB$ are conjugate as matrix codes. 

\paragraph{Hulls: a global isomorphism invariant.} We now make use of a global isomorphism invariant drawn from the study of matrix code conjugacy \cite{Sen02,CL25}. Define a bilinear form $f$ on $\M(n, q)$ as $f(A, B)=\Tr(AB)$ for any $A, B\in\M(n, q)$. The \emph{orthogonal space} of $\cA\leq\M(n, q)$ with respect to $f$ is $\cA^\perp:=\{A'\in \M(n, q)\mid \forall A\in\cA, f(A, A')=0\}$. The \emph{hull} of $\cA$ is $\hull(\cA):=\cA\cap\cA^\perp$. It is not difficult to see that the hull of $\cA$ is preserved under matrix code conjugacy, that is, if $\cA$ and $\cB$ are conjugate, then $\hull(\cA)$ and $\hull(\cB)$ are conjugate. 

We need a key result about the hull of a random matrix code that dates back to Sendrier in the study of permutation code equivalence \cite{Sen02,CL25}: with probability $\approx 1/q$, $\hull(\cA)$ is of dimension $1$, so $\hull(\cA)=\linspan\{A\}$ for some $A\in \M(n, q)$. For $\cA$ and $\cB$ to be conjugate, it is necessary that $\hull(\cA)$ and $\hull(\cB)$ also be conjugate. Therefore, $\hull(\cB)=\linspan\{B\}$ is also $1$-dimensional, and any $T\in\GL(n,q)$ that induces algebra isomorphism must satisfy $T\linspan\{A\}T^{-1}=\linspan\{B\}$. In the following, we assume that this is the case. Note that by definition, $A$ satisfies $\Tr(A^2)=0$.

\paragraph{Exploiting the $1$-dimensional hull.} We wish to extract further information on the potential algebra isomorphism of $\hull(\cA)=\linspan\{A\}$ and $\hull(\cB)=\linspan\{B\}$. Our main insight is to connect this to phenomena in random matrix theory over finite fields \cite{Ful02}. 

A classical result of Neumann and Praeger is that a random matrix in $\M(n, q)$ has no eigenvalue in $\F_q$ with constant probability \cite{NEUMANN_PRAEGER_1998}. By adapting the generating function method utilised there, we can show that with constant probability, a random matrix in $\M(n, q)$ has a unique eigenvalue in $\F_q$ with algebraic multiplicity $1$. If this was the case, then the $1$-dimensional eigenspace associated with this unique eigenvalue would be preserved under matrix conjugation. 

Recall that $\hull(\cA)=\linspan\{A\}$. If $A$ had a unique eigenvalue in $\F_q$ with algebraic multiplicity $1$, then this must hold for $B$ in $\hull(\cB)=\linspan\{B\}$, as otherwise $\tA$ and $\tB$ cannot be isomorphic. We assume that this is the case in the following. Let the unique $1$-dimensional eigenspace of $A$ be $\linspan\{v\}$, and correspondingly, let the unique $1$-dimensional eigenspace of $B$ be $\linspan\{u\}$. It follows that any potential algebra isomorphism $T$ must satisfy $T(\linspan\{v\})=\linspan\{u\}$. 

\paragraph{Matching two pairs of matrices.} We now know that any potential algebra isomorphism $T$ must satisfy $T(\linspan\{v\})=\linspan\{u\}$, where $v$ and $u$ are obtained as unique 1-dimensional eigenspaces of matrices in the hulls of $\cA$ and $\cB$, respectively. We now make use of the fact that $T$ also operates on the third direction of $\tA$ and $\tB$ (recall Eq.~(\ref{eq: algebra iso group action})). This gives $A'\in \cA$ and $B'\in \cB$, such that $TA'T^{-1}=B'$. It can be shown that $A'$ is a random matrix in $\M(n, q)$, so with constant probability it also has a unique eigenvalue in $\F_q$ with algebraic multiplicity $1$. 

Using the same reasoning, we obtain another pair of matrices $A''\in \cA$ and $B''\in \cB$, such that $TA''T^{-1}=B''$. Noting that $A'$ and $A''$ are independent random matrices in $\M(n, q)$, we resort to another result of Neumann and Praeger \cite{NP95} to deduce that $A'$ and $A''$ generate the full matrix algebra with high probability. This puts a severe restriction on those potential algebra isomorphisms, limiting the number of possible candidates to $O(q)$. 

Note that the above process runs in time $\poly(n,q)$, where the polynomial dependence on $q$ comes from enumerations of those subspaces of dimension $1$. To reduce the dependence on $q$ to $\log q$, we can show that the unique eigenvalues of $A'$ (resp. B') and $A''$ (resp. $B''$) are nonzero with constant probability. Thus, the scalar in front of those invertible matrices which induce algebra isomorphisms is unique since it needs to transfer those unique eigenvalues.


\paragraph{A caveat!} The reader may notice something fishy when we cite the result of Neumann and Praeger about random matrices with no eigenvalues in $\F_q$ \cite{NEUMANN_PRAEGER_1998}. There, random matrices are from $\M(n, q)$ or some classical groups such as $\GL(n,q)$, $\mathrm{O}(2m,q)$, or $\mathrm{Sp}(2m,q)$. However, a matrix in the hull of a matrix code must satisfy $\Tr(A^2)=0$. This inconsistency turns out to be tricky to resolve for us. Fortunately, 
recent results on the equidistributions of the trace of matrix powers by Gorodetsky and Rodgers \cite{MR4273172} come to our aid. We now turn to explain the average-case analysis side of our work.

\subsubsection{Average-case analysis}\label{subsec:remark-analysis}

As briefly explained in Section~\ref{subsec:remark-design}, a key technical result for the average-case analysis of our algorithm is to show that when the hull is $1$-dimensional spanned by $A\in\M(n, q)$, $A$ has a unique eigenvalue in $\F_q$ with algebraic multiplicity $1$ with constant probability. To the best of our knowledge, this question had not been studied before in random matrix theory over finite fields.

\paragraph{Some works on enumerating eigenvalue-free matrices.} The first evidence that this happens is a classical result from random matrix theory over finite fields. In~\cite{NEUMANN_PRAEGER_1998} Neumann and Praeger studied eigenvalue-free matrices in $\GL(n, q)$, which are invertible matrices with no eigenvalues in $\F_q$; such matrices are linear algebraic analogues of derangements, namely those permutations with no fixed points. They exhibited the generating function of the number of eigenvalue-free matrices in $\GL(n, q)$. To achieve this, one can utilize the primary decomposition to count the number of matrices in $\GL(n,q)$, and establish a relationship between the generating function of the number of such matrices with the generating function of the number of unipotent matrices, the latter of which has been extensively studied (see e.g.~\cite{MR96677} and \cite[Thm. 15.1]{MR230728}).

\paragraph{Generalizing to enumerating matrices with fixed eigenvalue multiplicity profile.}
Our first technical contribution is to generalize the result of Neumann and Praeger to count the number of matrices in $\M(n, q)$ with a given eigenvalue multiplicity profile. Namely, let $\phi:\F_q\to \N$ be a function that assigns to each element in $\F_q$ a non-negative integer. We say $A$ has eigenvalue multiplicity profile $\phi$ if for each $\lambda\in\F_q$, $\lambda$ is an eigenvalue of $A$ with algebraic multiplicity $\phi(\lambda)$. Utilizing the generating functions of the number of eigenvalue-free matrices and the number of unipotent matrices, we explicitly compute the generating function of the number of matrices in $\M(n, q)$ with eigenvalue multiplicity structure $\phi$ (Theorem~\ref{thm:eigen-condition}).

We then use such a counting result to prove that, with a constant probability, a random matrix $A$ has a unique eigenvalue in $\F_q$ with algebraic multiplicity $1$ (Corollary~\ref{cor: probability of unique nonzero eigenvalue}). To see this, note that such eigenvalue multiplicity structure corresponds to the function $\phi$ such that $\phi(\lambda)=1$ for some $\lambda\in\F_q$, and $\phi(\mu)=0$ for any $\mu\in\F_q\setminus\{\lambda\}$. Therefore, we explicitly count the number of matrices with such eigenvalue multiplicity structure, and lower bound the proportion of such matrices in $M(n,q)$ by some absolute constant.

While we only used a special profile in this work, we hope that the generalization to arbitrary profiles in Theorem~\ref{thm:eigen-condition} would be handy in other settings.

\paragraph{Equidistribution of trace powers with eigenvalue conditions.} However, for $A$ to be in the hull, a priori condition is $\Tr(A^2)=0$. We call such a matrix $A$ \emph{self-dual}. The self-dual condition creates a technical difficulty that seems beyond the generating function method as in \cite{NEUMANN_PRAEGER_1998} or the cycle index method as in \cite{Kung81,Sto88,Ful02}. Note that the probability of a random matrix $A\in\M(n,\C)$ that satisfies certain spectral and trace power properties has been extensively studied in random matrix theory over complex fields (see, e.g.,~\cite{10.1214/20-PS346,10.1214/21-AOP1520,DS94} and the references therein). It is natural to expect that random matrices over finite fields behave similarly as those over real or complex fields. However, powerful tools like the method of moments and the representation theory of $\GL(n,\C)$ are missing in the finite field setting.

Fortunately, inspired by the work of Diaconis and Shahshahani \cite{DS94}, Gorodetsky and Rodgers \cite{MR4273172} and Gorodetsky and Kovaleva \cite{GK24} studied the equidistribution of trace powers over finite fields using the \emph{characteristic sum method} from analytical number theory. Our second technical contribution on the average-case analysis is to extend the characteristic sum method introduced by Gorodetsky and Rodgers \cite{MR4273172} to compute the desired probabilities of matrices with both spectral and self-dual conditions.




Note that in the literature on random matrix theory over finite fields, the results are usually stated in the range of $q\to \infty$, or if $q$ is fixed for $n\to \infty$. The most precise results are often taken in the form of the convergence probability for fixed $q$, and the error term between the convergence probability and fixed $n$. All our results on random matrices over finite fields are of this precision.

\paragraph{Some technical aspects of the characteristic sum method.} 
Our first step is to show that, in order to study $\Tr(A^2)$ where $A$ is a matrix with a unique eigenvalue in $\F_q$, we can investigate matrices with no eigenvalues in $\F_q$. This reduction is presented in the first part of Theorem~\ref{thm:main-technical}, while we look into matrices with no eigenvalues in $\F_q$  in Theorem~\ref{thm: probability of a random eigenvalue-free matrix with trace conditions}. Our goal is to show that for these matrices $A$, $\Tr(A^2)$ is equidistributed. This is achieved via the characteristic sum method following Gorodetsky and Rodgers  in \cite{MR4273172}. 

We briefly explain how characters of finite fields connect to random matrices. 

For a finite field $\F_q$ of characteristic $p$, let $\Tr_{\F_q/\F_p}: \F_q\to \F_p$ be the absolute trace function from $\F_q$ to $\F_p$. Then the function $\psi:~\F_q\to \C$ $
    \psi(a)=e^{\frac{2\pi i}{p}\Tr_{\F_q/\F_p}(a)}$
is a character of the \emph{additive group} of $\F_q$. Every additive character of $\F_q$ is of the form $\psi_b:~\F_q\to \C$ defined as $\psi_b(a)=\psi(ba)$ for all $a\in\F_q$ for $b\in\F_q$.

We first connect characters to polynomials. Let $f(t)$ be a monic polynomial of degree $n$ over $\F_q$ in the variable $t$. Let $\cM_{n,q}$ be the set of all such polynomials. For $\lambda\in\F_q$, we define a function $\chi_{\lambda}:\cM_{n,q}\to\C$ by 
\[
\chi_\lambda(f(t))= e^{\frac{2\pi i}{p}\Tr_{\F_q/\F_p}(\lambda(\alpha_1^2+\cdots+\alpha_n^2))}
\]
where $\alpha_1,\dots,\alpha_n$ are the roots of $f(t)$ in $\overline{\F_q}$, listed with multiplicities.

We now link to matrices. Let $f(t)=c_A(t)$ be the characteristic polynomial of some matrix $A\in \M(n,q)$. It is not hard to see that $\chi_\lambda(c_A(t))=\psi_\lambda(\Tr(A^2))$.
Recall that we are interested in learning the probability of $\Tr(A^2)=k$ for $A$ with certain eigenvalue conditions. It turns out that for any matrix $A\in \M(n,q)$, the indicator function of $\Tr(A^2)=k$ can be expressed as 
    \begin{equation*}\frac{1}{q}\sum_{\lambda\in\F_q}\chi_\lambda(c_A(t))\overline{\psi_\lambda(k)}=
         \left \{
         \begin{aligned}
             1 &  & \text{if}~\Tr(A^2)=k\\
             0 &  & \text{if}~\Tr(A^2)\neq k
         \end{aligned}
         ~.
         \right.
    \end{equation*}
It is useful to introduce the L-function over $\GL$ for $\chi_\lambda$ as
\[
L_{\GL}(u,\chi_\lambda)=\sum_{n=0}^\infty\sum_{f\in\cM_{n,q}^{gl}}\chi_\lambda(f)u^{n}
\]
where $\cM_{n,q}^{gl}$ is the set of characteristic polynomials of matrices in $\GL(n, q)$. By carefully analyzing $L_{\GL}(u,\chi_\lambda)$, equidistribution of $\Tr(A^2)$ can then be deduced with some efforts as nicely carried out by Gorodetsky and Rodgers  in \cite{MR4273172}. 

Getting back to our setting, we use $\cal E$ to denote the set of matrices with no eigenvalues in $\F_q$. We introduce $L_{\cal E}(u,\chi_\lambda)$, the L-function over $\cal E$ for $\chi_\lambda$. By carefully analyzing the difference between $L_\GL$ and $L_\cE$ and making use of the methods and results in \cite{MR4273172}, we deduce that equidistribution of $\Tr(A^2)$ over $A\in\cal E$ still holds.

\subsection{Outlooks and open problems}

Our main contributions are average-case polynomial-time algorithms for some TI-complete problems over a finite field $\F_q$ that are efficient for $1/\Theta(q)$ or $1/\poly(q)$ fraction of the inputs. These TI-complete problems include algebra isomorphism, matrix code conjugacy, and 4-tensor isomorphism. We do not know how to extend these results to 3-tensor isomorphism ($U\otimes V\otimes W$), cubic form equivalence ($U\odot U\odot U$), and alternating trilinear form equivalence ($U\wedge U\wedge U$): the main reason is that we do not have a global isomorphism invariant for these problems that may be subject to random matrix analysis. As a first step, we pose the following problem.

\begin{problem}
Devise average-case subexponential-time ($q^{o(n)}$) algorithms for  3-tensor isomorphism, cubic form equivalence, and alternating trilinear form equivalence working for $1/\poly(q)$ fraction of the first input tensor.
\end{problem}
For matrix code conjugacy, our average-case algorithm only works for $n$-dimensional matrix codes in $\M(n,q)$. It is interesting to extend to $m$-dimensional matrix codes.
\begin{problem}\label{prob:ntom}
    Devise an average-case polynomial-time algorithm for matrix code conjugacy in the setting of $\F_q^{n}\otimes \F_q^n\otimes \F_q^m$ that works for $1/\Theta(q)$ fraction of the first input tensor. 
\end{problem}

For $4$-tensor isomorphism, our average-case algorithm works for $1/\poly(q)$-fraction of the input. Is it possible to improve this to $1/\Theta(q)$ as in the case of algebraic isomorphism? In addition, our algorithm only works for $n\times n\times n\times n$ tensors (with possible extensions to $n\times n\times m\times m$ tensors). Is it possible to devise algorithms that work for $\ell\times m\times n\times s$ tensors in general? We make these problems precise in the following.
\begin{problem}
    Devise an average-case polynomial-time algorithm for 4-tensor isomorphism in the setting of $\F_q^{n}\otimes \F_q^n\otimes \F_q^n\otimes \F_q^n$ that works for $1/\Theta(q)$ fraction of the first input tensor. 
\end{problem}

\begin{problem}
    Devise an average-case polynomial-time algorithm for 4-tensor isomorphism in the general setting of $\F_q^{\ell}\otimes \F_q^m\otimes \F_q^n\otimes \F_q^s$ that works for $1/\poly(q)$ fraction of the first input tensor. 
\end{problem}

In \cite{CEMQ26}, average-case efficient algorithms are devised for tensor isomoprhism problems over $\R$ or $\C$ under the orthogonal or unitary group actions. It is of interest to study these group actions over finite fields. 

\begin{problem}
    Devise average-case polynomial-time algorithms for 3-tensor isomorphism over finite fields under the orthogonal and unitary group actions.
\end{problem}

More broadly, the average-case complexity landscape of TI-complete problems seems quite rich and it would be a nice project to explore it much further. 

\paragraph{Structure of the paper.} The rest of this paper is organised as follows: In Section~\ref{sec:prel}, we review the necessary notation and present some known results. In Section~\ref{sec:alg-iso}, we describe our average-case algorithm for algebra isomorphism and prove Theorem~\ref{thm:alg-iso}. In section~\ref{sec:matrix-sp-conj}, we present our average-case algorithm for matrix code conjugacy and prove Theorem~\ref{thm:matrix-sp-conj}. In section~\ref{sec:4-tensor-iso}, we elaborate on our average-case algorithm for $4$-tensor isomorphism and prove Theorem~\ref{thm:4-tensor-iso}. All the required results on random matrix theory over finite fields are proved (in a self-contained way) in Section~\ref{sec:random matrix}.



\section{Preliminary}\label{sec:prel}

\subsection{Some notations}
Let $[n]:=\{1,\dots,n\}$.
Let $\F$ be a field. Let $\F_q$ be the finite field of order $q$. We reserve $q$ for field orders and $p$ for field characteristics. We use $\F^\times$ to denote the multiplicative group of non-zero elements in $\F$.

\subsection{Matrices over finite fields}

Let $\M(n, \F)$ be the vector space of all $n \times n$ matrices over a field $\F$.  
We write $\M(n, q)$ for $\M(n, \F_q)$. Let $\GL(n, q)$ be the general linear group of degree $n$ over $\F_q$. For $A\in \M(n, \F)$, $\trp{A}$ denotes the transpose of $A$. Two matrices $A, B \in \M(n, \F)$ are said to be \emph{conjugate} over $\F$ if there exists an invertible matrix $T \in \GL(n, \F)$ such that
\[
B = T A T^{-1}.
\]
Conjugacy defines an equivalence relation on $\M(n, \F)$, and each equivalence class corresponds to a distinct \emph{type of matrix similarity}.  
conjugacy preserves the characteristic polynomial, the minimal polynomial, and all eigenvalues (with multiplicities).

Every matrix $A \in \M(n, \F)$ defines a linear operator on $\F^n$, and hence an $\F[t]$-module structure where $z$ acts by $A$.  
Since $\F[t]$ is a principal ideal domain (PID), the structure theorem for finitely generated modules over a PID gives a canonical decomposition
\[
\F^n \cong \bigoplus_{\phi} \F[t]/(\phi^{\lambda_{\phi,1}}) \oplus \F[t]/(\phi^{\lambda_{\phi,2}}) \oplus \cdots,
\]
where the product runs over all monic irreducible polynomials $\phi \in \F[t]$.  
Each $\phi$ corresponds to a \emph{primary component} of $A$, and the sequence $\lambda_\phi = (\lambda_{\phi,1} \ge \lambda_{\phi,2} \ge \cdots)$ is a partition that describes the sizes of Jordan blocks associated with $\phi$.  
We refer to $\lambda_\phi$ as the \emph{Jordan partition} for $\phi$.

In terms of these invariants, the characteristic and minimal polynomials of $A$ are given by
\[
c_A(t) = \prod_{\phi} \phi(t)^{|\lambda_\phi|}, 
\qquad
m_A(t) = \prod_{\phi} \phi(t)^{\lambda_{\phi,1}},
\]
where $|\lambda_\phi| = \sum_j \lambda_{\phi,j}$ is the total size of all Jordan blocks corresponding to $\phi$.  
Thus, the similarity class of $A$ is completely determined by the collection of pairs $\{(\phi, \lambda_\phi)\}_\phi$.  For an irreducible factor \(\phi(t) \in \mathbb F[t]\), the \emph{algebraic multiplicity} of $\phi$ is $|\lambda_\phi|$, i.e., the total exponent of $\phi$ in $c_A(t)$;
the \emph{geometric multiplicity} of $\phi$ is the number of Jordan blocks associated with $\phi$, i.e., the length  of the partition.

Let $\M(\ell\times n, q)$ be the linear space of $\ell\times n$ matrices over $\F_q$. A random matrix in $\M(\ell\times n, q)$ is a matrix whose entries are sampled independently and uniformly at random from $\F_q$. We denote this by $A\rin \M(\ell\times n, q)$. It is well-known that~\cite[Pp. 38]{MR2689583} (see also~\cite{Fulman2015})
\[\Pr[\rk(A)=n-c\mid A\rin\M(n, q)]=\frac{1}{\exp_q(c^2)}.\]

The following estimates on the order of $\GL(n, q)$ and more will be useful. 
\begin{proposition}\label{prop:glnq and mnq}
For any $n\in\N$ and finite field $\F_q$, we have
\[
    |\GL(n,q)|=\prod_{i=0}^{n-1}(q^n-q^i).
\]
Moreover, the ratio
\begin{equation}\label{eq: ratio of GLnq and Mnq}
    c_n(q):=\frac{|\GL(n,q)|}{|\M(n,q)|}=\prod_{i=0}^{n-1}(1-q^{i-n})=\prod_{i=1}^{n}(1-q^{-i})
\end{equation}
converges to a constant $c(q):=\lim_{n\to\infty} c_n(q)\in(1/4, 1)$. Moreover, we have
\begin{equation}\label{eq:convergence bound of cnq}
    |c(q)-c_n(q)|=c_n(q)|1-\prod_{i=n+1}^{\infty}(1-q^{-i})|\leq\frac{q^{-(n+1)}}{(1-q^{-1})(1-q^{-(n+1)})}\leq 4q^{-(n+1)}.
\end{equation}
\end{proposition}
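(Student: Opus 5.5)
The first claim is the standard orbit count. We build an invertible matrix one column at a time. The $(i+1)$-st column must avoid the span of the previous $i$ linearly independent columns, and that span has $q^i$ elements. This leaves $q^n-q^i$ choices, so
\[
|\GL(n,q)|=\prod_{i=0}^{n-1}(q^n-q^i).
\]
Dividing by $|\M(n,q)|=q^{n^2}=\prod_{i=0}^{n-1}q^n$ gives the product $\prod_{i=0}^{n-1}(1-q^{i-n})$. Reindexing with $j=n-i$ turns this into $\prod_{j=1}^n(1-q^{-j})$, which is \eqref{eq: ratio of GLnq and Mnq}.

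For convergence, note that $c_n(q)$ is a decreasing sequence of positive numbers, since each new factor lies in $(0,1)$. Hence it converges to $c(q)=\prod_{j\ge1}(1-q^{-j})$. The product converges to a nonzero limit because $\sum_j q^{-j}<\infty$.

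\emph{Upper bound.} We have $c(q)\le 1-q^{-1}\le 1/2<1$.

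\emph{Lower bound.} We use $q\ge2$ and the elementary inequality $1-x\ge 4^{-x}$ for $x\in[0,1/2]$; by concavity it suffices to check the endpoints, and equality holds at both. This gives
\[
c(q)\ge 4^{-\sum_{j\ge1}q^{-j}}=4^{-1/(q-1)}.
\]
For $q\ge3$ the right-hand side is at least $1/2$. For $q=2$ it equals exactly $1/4$, so this crude bound is not strict. In that case we instead keep the first few factors exactly: $(1-\tfrac12)(1-\tfrac14)=3/8$, and the tail satisfies $\prod_{j\ge3}(1-2^{-j})\ge 1-\sum_{j\ge3}2^{-j}=3/4$. Together these give $c(2)\ge 9/32>1/4$.

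For the error estimate, write $c(q)=c_n(q)\prod_{i>n}(1-q^{-i})$. Then
\[
|c(q)-c_n(q)|=c_n(q)\Bigl|1-\prod_{i>n}(1-q^{-i})\Bigr|.
\]
By the union-bound inequality $\prod(1-x_i)\ge 1-\sum x_i$, the bracket is at most $\sum_{i>n}q^{-i}=q^{-(n+1)}/(1-q^{-1})$. Since $c_n(q)\le1$, this is already slightly stronger than the stated middle expression, which carries an extra factor $1/(1-q^{-(n+1)})\ge1$. That factor is harmless: dropping it only weakens the inequality, so the stated bound follows. Finally, for $q\ge2$ and $n\ge0$ we have $(1-q^{-1})(1-q^{-(n+1)})\ge 1/4$, which gives the final bound $4q^{-(n+1)}$.

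The only mildly delicate step is the strict lower bound $c(q)>1/4$ at $q=2$. As shown above, it is handled by computing the first two factors exactly and applying the union bound only to the tail.
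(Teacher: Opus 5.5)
Your proof is correct. The paper states this proposition without proof, treating it as standard background, so there is no argument of theirs to compare step by step.

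The shape of the stated middle bound, $\frac{q^{-(n+1)}}{(1-q^{-1})(1-q^{-(n+1)})}$, suggests the authors had in mind the logarithmic expansion used in the proof of Proposition~\ref{prop: cqq}. That route is
\[
-\log\prod_{i>n}(1-q^{-i})=\sum_{j\ge1}\frac{q^{-(n+1)j}}{j(1-q^{-j})}\le\frac{1}{1-q^{-1}}\sum_{j\ge1}q^{-(n+1)j},
\]
followed by $1-e^{-L}\le L$. Your inequality $\prod(1-x_i)\ge 1-\sum x_i$ reaches the estimate in one line. As you noted, it is sharper by at least the factor $1-q^{-(n+1)}$.

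Similarly, the strict lower bound $c(q)>1/4$ could be read off from the estimate in Proposition~\ref{prop: cqq}, which gives $c(q)\ge e^{-23/(9q)}\ge e^{-23/18}\approx 0.279$ for all $q\ge 2$. Your argument is self-contained and elementary: it uses $1-x\ge 4^{-x}$ on $[0,1/2]$ and computes the first two factors exactly when $q=2$. It also yields the slightly stronger conclusion $c(q)\in(1/4,1/2]$.
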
  

Since we will use $c(q)$ to represent the (asymptotic) probabilities in our average-case analysis, we adapt the estimate in~\cite[Theorem 4.4]{NEUMANN_PRAEGER_1998} to obtain the following.
\begin{proposition}\label{prop: cqq}
Let $c(q)$ be defined as above. For any $q\geq 2$, we have: 
\[
c(q)^q\ge \exp\!\left(-1-\frac{3}{2q}-\frac{29}{9q^2}\right)\ge e^{-23/9},
\]
and
\[
\frac{q\cdot c(q)^q}{q-1}\ge e^{-23/9}.
\]
\end{proposition}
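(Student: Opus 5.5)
We need to show $c(q)^q \ge \exp(-1-\tfrac{3}{2q}-\tfrac{29}{9q^2})$ for all $q\ge 2$, and then the two weaker consequences. Since $c(q)=\prod_{i\ge1}(1-q^{-i})$, we work with
\[
\log c(q)^q = q\sum_{i\ge 1}\log(1-q^{-i}) = -q\sum_{i\ge1}\sum_{k\ge1}\frac{q^{-ik}}{k}.
\]
Swapping the order of summation gives $-q\sum_{k\ge1}\frac{1}{k}\cdot\frac{q^{-k}}{1-q^{-k}} = -\sum_{k\ge 1}\frac{q^{1-k}}{k(1-q^{-k})}$. So it suffices to bound $S(q):=\sum_{k\ge1}\frac{q^{1-k}}{k(1-q^{-k})}$ from above by $1+\frac{3}{2q}+\frac{29}{9q^2}$.

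Expand the first terms explicitly.
\begin{itemize}
\item The $k=1$ term is $\frac{1}{1-q^{-1}} = 1+\frac{1}{q}+\frac{1}{q(q-1)}$.
\item The $k=2$ term is $\frac{1}{2q(1-q^{-2})}=\frac{1}{2q}+\frac{1}{2q(q^2-1)}$.
\item For $k\ge 3$, use $1-q^{-k}\ge 1-q^{-3}\ge 7/8$ and bound the tail by a geometric series: $\sum_{k\ge3}\frac{q^{1-k}}{k(1-q^{-k})}\le \frac{8}{7}\cdot\frac{1}{3}\cdot\frac{q^{-2}}{1-q^{-1}}\le \frac{16}{21q^2}$.
\end{itemize}
Collecting terms, $S(q)\le 1+\frac{3}{2q}+\frac{1}{q^2}\bigl(\frac{q}{q-1}+\frac{q}{2(q^2-1)}+\frac{16}{21}\bigr)$. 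At $q=2$ the bracket is $2+\frac13+\frac{16}{21}=\frac{65}{21}\approx 3.095\le \frac{29}{9}\approx 3.22$. Each term in the bracket decreases in $q$, so the bound holds for all $q\ge2$. If the tail estimate is too loose, we instead keep the $k=3$ term exact and bound only $k\ge 4$. Matching the exact constant $29/9$ is the only delicate step, and that slack handles it.

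For the second inequality, $-1-\frac{3}{2q}-\frac{29}{9q^2}$ is increasing in $q$, so its minimum is at $q=2$: $-1-\frac34-\frac{29}{36}=-\frac{92}{36}=-\frac{23}{9}$.

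Finally, $\frac{q}{q-1}\ge1$ gives $\frac{q\,c(q)^q}{q-1}\ge c(q)^q\ge e^{-23/9}$. This completes the proof; the interchange of summation is justified by absolute convergence.
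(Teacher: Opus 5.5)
Your proposal is correct and follows essentially the same route as the paper: you expand the logarithm, interchange the sums, keep the $k=1,2$ terms exact, and bound the $k\ge 3$ tail by a geometric series. The only cosmetic difference is in the $q^{-2}$ coefficient. You bound the tail using $1-q^{-k}\ge 7/8$ and then check the bracket at $q=2$ plus monotonicity, whereas the paper bounds each piece uniformly by $2/q^2$, $1/(3q^2)$ and $8/(9q^2)$, which sum to $29/(9q^2)$.
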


\begin{proof}
Recall that
\[
c(q)=\prod_{i=1}^\infty (1-q^{-i}).
\]
We take logarithm on both sides, and expand $\log(1-q^{-i})$ by the Taylor series to obtain 
\[
\log\bigl(c(q)^q\bigr)
= q\sum_{i=1}^\infty \log(1-q^{-i})
= -q\sum_{i=1}^\infty\sum_{j=1}^\infty \frac{q^{-ij}}{j}.
\]
Since all the terms in the double summation are nonnegative, Tonelli's theorem allows us to interchange the sums:
\[
\log\bigl(c(q)^q\bigr)
= -q\sum_{j=1}^\infty \frac{1}{j}\sum_{i=1}^\infty q^{-ij}
= -q\sum_{j=1}^\infty \frac{q^{-j}}{j(1-q^{-j})}.
\]
Therefore
\begin{equation}\label{eq:log-cq-exact}
\log\bigl(c(q)^q\bigr)
=
-\frac{q}{q-1}
-\frac{1}{2q(1-q^{-2})}
-q\sum_{j=3}^{\infty}\frac{q^{-j}}{j(1-q^{-j})}.
\end{equation}

We derive a uniform lower bound for Eq.~(\ref{eq:log-cq-exact}) for any $q\geq 2$. For the first term, note that
\[
\frac{q}{q-1}
=1+\frac{1}{q}+\frac{1}{q(q-1)}
\le 1+\frac{1}{q}+\frac{2}{q^2}.
\]
For the second term, we have
\[
\frac{1}{2q(1-q^{-2})}
=
\frac{1}{2q}+\frac{1}{2q(q^2-1)}
\le \frac{1}{2q}+\frac{1}{3q^2}.
\]
For the summation, note that $q^{-j}\le q^{-2}\le 1/4$ for $j\ge 3$ and $q\geq 2$, then we have
\[
\frac{1}{1-q^{-j}}\le \frac{4}{3}.
\]
Thus, 
\[
q\sum_{j=3}^{\infty}\frac{q^{-j}}{j(1-q^{-j})}
\le
\frac{4q}{3}\sum_{j=3}^{\infty}\frac{q^{-j}}{j}
\le
\frac{4q}{9}\sum_{j=3}^{\infty}q^{-j}
=
\frac{4}{9q(q-1)}
\le
\frac{8}{9q^2},
\]
where the second inequality use the fact that $1/j\leq 1/3$ for any $j\geq 3$ and the last ineuqality use the fact that $1/q(q-1)\leq 2/q^2$ for any $q\geq 2$.
Substituting these bounds into \eqref{eq:log-cq-exact}, we obtain
\[
\log\bigl(c(q)^q\bigr)
\ge
-1-\frac{3}{2q}-\frac{29}{9q^2}.
\]
Then we have
\[
c(q)^q
\ge
\exp\!\left(-1-\frac{3}{2q}-\frac{29}{9q^2}\right).
\]
Since $q\ge 2$,
\[
1+\frac{3}{2q}+\frac{29}{9q^2}
\le
1+\frac{3}{4}+\frac{29}{36}
=
\frac{23}{9},
\]
so we have
\[
c(q)^q\ge e^{-23/9},
\]
and
\[
\frac{q\cdot c(q)^q}{q-1}\ge e^{-23/9}
\]
follows since $q/q-1\geq 1$.
\end{proof}

\subsection{3-way arrays, matrix tuples, and matrix codes}

\paragraph{3-way arrays.} We use $\T(\ell\times m\times n, \F)$ to denote the linear space of $\ell\times m\times n$ 3-way arrays over $\F$. For $\tA=(a_{i,j,k})\in \T(\ell\times m\times n, \F)$, its horizontal matrix tuple is $(A_1, \dots, A_\ell)\in \M(m\times n, \F)^\ell$, where $A_i(j,k)=a_{i,j,k}$, its vertical matrix tuple is $(A_1', \dots, A_m')\in \M(\ell\times n, \F)^m$ where $A_j'(i,k)=a_{i,j,k}$, and its frontal matrix tuple is $(A_1'', \dots, A_n'')\in\M(\ell\times m, \F)^n$ where $A_k''(i,j)=a_{i,j,k}$.

A natural action of $\GL(\ell, \F)\times\GL(m, \F)\times\GL(n, \F)$ on $\T(\ell\times m\times n, \F)$ is as follows: $(L, R, T)\in\GL(\ell, \F)\times\GL(m,\F)\times\GL(n,\F)$, where $L=(\ell_{i,i'})$, $R=(r_{j,j'})$ and $T=(t_{k,k'})$, sends $\tA=(a_{i,j,k})\in\T(\ell\times m\times n, \F)$ to $\tB=(b_{i,j,k})\in\T(\ell\times m\times n, \F)$ where $b_{i,j,k}=\sum_{i',j',k'}\ell_{i,i'}r_{j,j'}t_{k,k'}a_{i',j',k'}$. We denote $\tB$ as $L\tA^T R^t$, to indicate that $L$ acts on the left, $R$ acts on the right, and $T$ acts on the third direction of $\tA$.


\paragraph{Matrix tuples.} Let $\vA=(A_1, \dots, A_m)\in\M(n, \F)^m$ be an $m$-tuple of $\ell\times n$ matrices over $\F$. 
The \emph{conjugacy group} of $\vA$ is $\conj(\vA)=\{T\in\GL(n, \F)\mid \forall i\in[m], TA_i=A_iT\}\leq\GL(n, \F)$. 

Two matrix tuples $\vA=(A_1, \dots, A_m)$ and $\vB=(B_1, \dots, B_m)\in\M(n, q)^m$ are conjugate, if there exists $T\in\GL(n, q)$ such that for any $i\in[m]$, $TA_iT^{-1}=B_i$.
The \emph{conjugacy coset} from $\vA$ to $\vB$ is $\conj_\to(\vA, \vB)=\{T\in\GL(n, \F)\mid \forall i\in[m], TA_i=B_iT\}$. It is clear that $\conj_\to(\vA, \vB)$ is empty or a coset of $\conj(\vA)$; in the latter case, $|\conj_\to(\vA, \vB)|=|\conj(\vA)|$.

The following result of Neumann and Praeger is classical. 
\begin{theorem}[{\cite[Theorem 6.1]{NP95}}]\label{thm:gen-matrix-alg}
Let $A_1, A_2\in\M(n, q)$ be two random matrices. Then $A_1$ and $A_2$ generate the full matrix algebra $\M(n, q)$ with probability $1-1/q^{\Omega(n)}$.
\end{theorem}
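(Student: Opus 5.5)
We plan to prove the complement bound: the probability that $A_1,A_2$ generate a proper subalgebra of $\M(n,q)$ is $q^{-\Omega(n)}$. The natural tool is Burnside's theorem. Over a finite field $\F_q$, a subalgebra $\cS\leq\M(n,q)$ that acts irreducibly on $\F_q^n$ and has commutant $\F_q$ (absolutely irreducible) equals $\M(n,q)$. If $\cS$ is irreducible but $\End_\cS(\F_q^n)$ is a larger field $\F_{q^d}$ with $d>1$, $d\mid n$, then $\cS\subseteq\M(n/d,q^d)$, embedded as the centralizer of some element generating $\F_{q^d}$. So it suffices to bound two bad events: (i) $A_1,A_2$ have a common invariant subspace $W$ with $0<\dim W=k<n$; (ii) $A_1,A_2$ both commute with some matrix $C$ that generates a field $\F_{q^d}\subseteq\M(n,q)$, with $d>1$.

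For (i), we use a union bound over subspaces. Fix $W$ of dimension $k$. The event $A_iW\subseteq W$ imposes $k(n-k)$ independent uniform linear conditions on the entries of $A_i$. Hence it has probability $q^{-k(n-k)}$, and for two independent matrices the probability is $q^{-2k(n-k)}$. The number of $k$-dimensional subspaces is $\binom{n}{k}_q\le c\,q^{k(n-k)}$ for an absolute constant $c$ (for instance $c\le 1/c(q)\le 4$ by Proposition~\ref{prop:glnq and mnq}). So the contribution of dimension $k$ is at most $4q^{-k(n-k)}$. Summing over $1\le k\le n-1$, the terms $k=1$ and $k=n-1$ dominate and give $O(q^{-(n-1)})$. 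This uses $k(n-k)\ge n-1$ together with a geometric-series bound for the remaining $k$.

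For (ii), we again use a union bound, this time over the possible commutants. For each divisor $d>1$ of $n$, the subfields $K\cong\F_{q^d}$ of $\M(n,q)$ acting on $\F_q^n$ so that it becomes a $K$-vector space form a single $\GL(n,q)$-conjugacy class. Its size is at most $|\GL(n,q)|/|\GL(n/d,q^d)|\le q^{n^2-n^2/d}\cdot O(1)$. The centralizer of such a $K$ is a subspace of dimension $n^2/d$. A random matrix lies in it with probability $q^{-(n^2-n^2/d)}$, so both matrices do with probability $q^{-2(n^2-n^2/d)}$. Each $d$ therefore contributes $O(q^{-n^2(1-1/d)})\le O(q^{-n^2/2})$. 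Summing over at most $n$ divisors gives $q^{-\Omega(n^2)}$.

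Finally, the two bad events together have probability $O(q^{-(n-1)})=q^{-\Omega(n)}$, which proves the statement. The main obstacle is justifying that Burnside's theorem holds in this form over finite fields. The point is that an irreducible $\cS$ with $\End_\cS(V)=\F_q$ is all of $\M(n,q)$. We would derive this from the Jacobson density theorem and Wedderburn's little theorem: the commutant of an irreducible module is a finite division ring, hence a field, and the density theorem then shows $\cS=\End_{\End_\cS(V)}(V)$.
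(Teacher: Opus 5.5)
Your proof is correct, and it takes a genuinely different route from the paper, which gives no argument for this statement and simply imports it from Neumann--Praeger \cite{NP95}.

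You reduce the problem to two bad events, using Schur's lemma, Wedderburn's little theorem and the density theorem. The first is a common invariant $k$-subspace with $1\le k\le n-1$. The second is that $A_1,A_2$ both lie in the centralizer $\M(n/d,q^d)$ of a subfield $\F_{q^d}$ containing the scalars, with $1<d\mid n$. Both union bounds check out:
\begin{itemize}
\item Using $\binom{n}{k}_q\le q^{k(n-k)}/c(q)\le 4q^{k(n-k)}$ against the probability $q^{-2k(n-k)}$ gives a total of $O(q^{-(n-1)})$.
\item The subfields of each degree $d$ form a single $\GL(n,q)$-class, of size at most $|\GL(n,q)|/|\GL(n/d,q^d)|$, because the centralizer lies in the stabilizer. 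This gives $O(nq^{-n^2/2})$.
\end{itemize}

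What your route buys is a self-contained, explicit bound $1-O(q^{-(n-1)})$ with absolute constants, uniform in $q\ge 2$. Uniformity matters here, since the paper lets $q$ grow. The rate is also of the right order: already the probability that $A_1,A_2$ share an eigenvector is of order $q^{-(n-1)}$. Your bound would make Corollary~\ref{cor:conj-bound} and the downstream success-probability estimates fully quantitative. The citation buys only brevity.
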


The corollary of Theorem~\ref{thm:gen-matrix-alg} is as follows. 
\begin{corollary}\label{cor:conj-bound}
Let $\vA=(A_1, A_2)\in\M(n, q)^2$ be two random matrices and $\vB=(B_1, B_2)\in\M(n, q)^2$ be two arbitrary matrices. Then $|\conj_\to(\vA, \vB)|\leq q-1$ with probability $1-1/q^{\Omega(n)}$.
\end{corollary}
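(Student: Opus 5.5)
By Theorem~\ref{thm:gen-matrix-alg}, with probability $1-1/q^{\Omega(n)}$ over the random pair $\vA=(A_1,A_2)$, the matrices $A_1$ and $A_2$ generate $\M(n,q)$ as an $\F_q$-algebra. We fix such an $\vA$ and prove the bound deterministically for every $\vB$. The argument has two steps: first bound the conjugacy group $\conj(\vA)$, then transfer the bound to the coset $\conj_\to(\vA,\vB)$.

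\emph{Step 1: $\conj(\vA)$ consists of scalars.} Let $T\in\conj(\vA)$, so $T$ commutes with $A_1$ and $A_2$. The set of matrices commuting with $T$ is a subalgebra of $\M(n,q)$ containing $A_1$, $A_2$ and $\idmat_n$. Hence it contains the algebra they generate, which is all of $\M(n,q)$. So $T$ lies in the center of $\M(n,q)$, and therefore $T=\lambda\idmat_n$ for some scalar $\lambda$. Since $T$ is invertible, $\lambda\in\F_q^\times$. Conversely, every nonzero scalar matrix commutes with everything. Therefore $\conj(\vA)=\F_q^\times\idmat_n$, and $|\conj(\vA)|=q-1$.

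\emph{Step 2: transfer to the coset.} As noted earlier in the paper, $\conj_\to(\vA,\vB)$ is either empty or a coset of $\conj(\vA)$. Concretely, if $T_0$ and $T$ both lie in $\conj_\to(\vA,\vB)$, then $T_0^{-1}T$ commutes with every $A_i$, so $T\in T_0\conj(\vA)$. In the empty case the bound holds trivially; otherwise the coset has exactly $q-1$ elements. In both cases $|\conj_\to(\vA,\vB)|\le q-1$.

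Since this holds for every $\vB$ whenever $A_1,A_2$ generate $\M(n,q)$, the bound holds with probability $1-1/q^{\Omega(n)}$. The only nontrivial ingredient is the generation theorem of Neumann and Praeger, which the paper already cites. The one point to check is that ``generate'' is meant in the unital sense. This is harmless either way: adjoining $\idmat_n$ only enlarges the generated algebra, and the centralizer of $T$ contains $\idmat_n$, so the argument in Step 1 is unchanged.
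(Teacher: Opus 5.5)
Your proof is correct and takes the same route as the paper. Theorem~\ref{thm:gen-matrix-alg} gives that $A_1,A_2$ generate $\M(n,q)$, which forces $\conj(\vA)=\F_q^\times\idmat_n$, and the coset structure of $\conj_\to(\vA,\vB)$ then yields the bound $q-1$. You spell out the centralizer and coset steps that the paper leaves implicit.
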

\begin{proof}
When $A_1$ and $A_2$ generate the full matrix algebra, $\conj(\vA)=\{\lambda I_n\mid \lambda\in\F_q^\times\}$. The result then follows immediately from Theorem~\ref{thm:gen-matrix-alg}.
\end{proof}

We also need the following results from computational algebra. 
\begin{theorem}[\cite{BL08,BO08,IKS10}]\label{thm:module-iso}
Given $\vA, \vB\in \M(n, q)^m$, there exist polynomial-time algorithms that decide if $\vA$ and $\vB$ are conjugate, and if so, compute $\conj_\to(\vA, \vB)$ represented by a coset representative and a generating set of $\conj(\vA)$.
\end{theorem}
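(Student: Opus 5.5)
The plan is to reduce everything to linear algebra over the associative algebra generated by the data, in the spirit of \cite{BL08,BO08,IKS10}. First I would consider the linear space
\[
\mathcal{H}=\{T\in\M(n,q)\mid \forall i\in[m],\ TA_i=B_iT\},
\]
and its special case $\mathcal{E}=\{T\in\M(n,q)\mid \forall i\in[m],\ TA_i=A_iT\}$. Both are solution spaces of homogeneous linear systems with $n^2$ unknowns and $mn^2$ equations, so bases can be computed by Gaussian elimination in $\poly(n,m,\log q)$ time. Then $\conj_\to(\vA,\vB)=\mathcal{H}\cap\GL(n,q)$ and $\conj(\vA)=\mathcal{E}^\times$. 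In module language, $\vA$ and $\vB$ make $\F_q^n$ into modules $M_{\vA},M_{\vB}$ over the free algebra in $m$ generators, with $\mathcal{E}=\End(M_{\vA})$ and $\mathcal{H}=\mathrm{Hom}(M_{\vA},M_{\vB})$. As a quick necessary check, conjugacy forces $\dim\mathcal{H}=\dim\mathcal{E}=\dim\End(M_{\vB})$.

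The core step is deciding whether $\mathcal{H}$ contains an invertible element; this is where I expect the main difficulty. Over small $q$ a random element of $\mathcal{H}$ may be singular with non-negligible probability, and over large $q$ a Schwartz--Zippel search is not deterministic, so I would follow the structural route of Ivanyos--Karpinski--Saxena.

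1. Compute the Jacobson radical $J=J(\mathcal{E})$ with the polynomial-time radical algorithms over finite fields (R\'onyai; Cohen--Ivanyos--Wales).
2. Compute the Wedderburn decomposition of $\mathcal{E}/J$ into simple components $\M(d_j,\F_{q^{e_j}})$, using polynomial factorization over $\F_q$ to split the centre. These are the only randomized steps.
3. View $\mathcal{H}$ as a right $\mathcal{E}$-module. Then $M_{\vA}\cong M_{\vB}$ if and only if $\mathcal{H}$ is cyclic, free of rank one, generated by an invertible map. Equivalently, $\mathcal{H}/\mathcal{H}J$ is a free rank-one module over $\mathcal{E}/J$.
4. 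Work componentwise in the semisimple quotient, where the question becomes linear algebra over matrix algebras over extension fields, so a generator can be constructed explicitly. Lift it to $T\in\mathcal{H}$.
5. Check invertibility of $T$ directly. By Nakayama's lemma, a lift of a generator of $\mathcal{H}/\mathcal{H}J$ generates $\mathcal{H}$, and such a generator is an isomorphism precisely when one exists.

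If no generator exists, the algorithm reports that $\vA$ and $\vB$ are not conjugate.

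For the generating set of $\conj(\vA)=\mathcal{E}^\times$, I would use the exact sequence $1\to 1+J\to\mathcal{E}^\times\to(\mathcal{E}/J)^\times\to 1$.

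1. The group $1+J$ is generated by $1+b$ for $b$ ranging over a basis of $J$ together with a basis of successive powers $J^k$, since the filtration quotients are elementary abelian.
2. The group $(\mathcal{E}/J)^\times\cong\prod_j\GL(d_j,q^{e_j})$ is generated by the standard two-element generating sets of each $\GL(d_j,q^{e_j})$, transported via the computed Wedderburn isomorphisms and lifted to $\mathcal{E}$. Lifts of units are units, because $J$ is nilpotent.

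Together with the coset representative $T$ found above, this gives the claimed description of $\conj_\to(\vA,\vB)$. The total cost is polynomial in $n$, $m$ and $\log q$.
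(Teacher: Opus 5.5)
The paper does not prove this statement; it only cites \cite{BL08,BO08,IKS10}. Your argument takes the structure-theoretic route:
\begin{itemize}
\item compute $J=\mathrm{rad}(\mathcal{E})$ and decompose $\mathcal{E}/J\cong\prod_j\M(d_j,\F_{q^{e_j}})$;
\item build a generator of $\mathcal{H}/\mathcal{H}J$, lift it via Nakayama, and test invertibility;
\item obtain generators of $\conj(\vA)=\mathcal{E}^\times$ from $1\to 1+J\to\mathcal{E}^\times\to(\mathcal{E}/J)^\times\to 1$, which is the standard unit-group construction.
\end{itemize}
This is sound, but step 2 needs more than splitting the centre. Step 4, and the transport of generators of $\GL(d_j,q^{e_j})$, require explicit matrix units in each simple component. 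That means an explicit isomorphism with a full matrix algebra (R\'onyai's algorithm), which over $\F_q$ is deterministic only relative to polynomial factorization. So you obtain Las Vegas algorithms. By contrast, \cite{BL08,IKS10} decide conjugacy and find a conjugating matrix deterministically, without an explicit decomposition of $\mathcal{E}/J$. For the paper's applications this difference does not matter.

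A few details need fixing.

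(i) The ``Equivalently'' in step 3 is false: freeness of $\mathcal{H}/\mathcal{H}J$ over $\mathcal{E}/J$ is necessary but not sufficient. Take $n=m=2$, $A_1=\begin{bmatrix}0&0\\1&0\end{bmatrix}$, $B_1=\begin{bmatrix}0&0\\0&1\end{bmatrix}$, and $A_2=B_2=\begin{bmatrix}1&0\\0&0\end{bmatrix}$. Then $\mathcal{E}=\F_q I_2$, $J=0$ and $\mathcal{H}=\F_q B_2$, which is free of rank one. Yet $A_1$ is nilpotent and $B_1$ is not, so $\vA$ and $\vB$ are not conjugate. Your algorithm is still correct because step 5 tests invertibility. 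You should include the one-line proof of the claim used there: if $\mathcal{H}=T\mathcal{E}$ and some $T_0\in\mathcal{H}$ is invertible, then $T_0=TE$ for some $E\in\mathcal{E}$, so $T$ has full rank.

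(ii) The quotients $(1+J^k)/(1+J^{k+1})\cong J^k/J^{k+1}$ are elementary abelian $p$-groups. So for $q=p^e$ with $e>1$ you need $\F_p$-bases: take $1+\omega b$ with $\omega$ in an $\F_p$-basis of $\F_q$ and $b$ in $\F_q$-bases of the $J^k$. With $\F_q$-bases alone the claim fails. For $m=1$ and $A_1=\begin{bmatrix}0&1\\0&0\end{bmatrix}$ over $\F_4$, the group $1+J=\{I_2+bA_1: b\in\F_4\}$ has order $4$, but $I_2+A_1$ has order $2$.

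(iii) The standard generators of $\GL(d_j,q^{e_j})$ involve a primitive element of $\F_{q^{e_j}}$. Such an element is not known to be computable in time polynomial in $\log q$ without factoring $q^{e_j}-1$. This caveat is already implicit in the statement: if $A_1,\dots,A_m$ generate $\M(n,q)$, then $\conj(\vA)=\F_q^\times I_n$. It is customarily handled by assuming primitive elements are supplied.
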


\paragraph{Matrix codes.} Following the practice in coding theory, we call a subspace of $\M(\ell\times n, \F)$ a matrix code. Two matrix codes $\cA, \cB\leq \M(n, q)$ are conjugate, if there exists $T\in\GL(n, q)$ such that $\cA=T\cB T^{-1}=\{TBT^{-1}\mid B\in\cB\}$.

\section{Average-case algorithm for algebra isomorphism and matrix code conjugacy}

\subsection{Average-case algorithm for algebra isomorphism}\label{sec:alg-iso}

In this section, we present the average-case algorithm for algebra isomorphism to prove Theorem~\ref{thm:alg-iso}.

\paragraph{Problem set up.} Let $U\cong\F_q^n$, and let $\tA, \tB\in U\otimes U\otimes U^*$ be two algebras. In algorithms, after fixing a basis of $U$, $\tA$ is represented by its structure constants, that is, (by some abuse of notation) a $3$-way array $\tA=(a_{i,j,k})\in \T(n\times n\times n, q)$, so the bilinear map represented by $\tA$ is $g_\tA:\F_q^n\times\F_q^n\to\F_q^n$ is defined as $g_\tA(\stdb_i,\stdb_j)=\sum_{k\in[n]}a_{i,j,k}\stdb_k$. We see that $\tA$ and $\tB$ are isomorphic as algebras if and only if there exists $T\in\GL(n, q)$ such that $T\tA \trp{T}=\tB^T$.

We now slice $\tA$ along the first index to get the horizontal slices of $\tA$ as $\vA=(A_1, \dots, A_n)\in\M(n, q)^n$ where $A_i(j, k)=a_{i,j,k}$. Similarly, do this for $\tB$ and get $\vB=(B_1, \dots, B_n)\in\M(n, q)^n$. Then the isomorphism of $\tA$ and $\tB$ as algebras translates to asking if there exists $T=(t_{i,i'})\in\GL(n, q)$, such that  
\begin{equation}\label{eq:algebra-iso-to-conjugacy}
    \sum_{i'\in[n]} t_{i,i'}T A_{i'} T^{-1}=B_i~\forall\ i\in[n].
\end{equation}

Let $\cA=\linspan\{A_1, \dots, A_n\}\leq \M(n, q)$ and $\cB=\linspan\{B_1, \dots, B_n\}\leq \M(n, q)$. As $A_i$'s are random matrices in $\M(n, q)$, $\cA$ is of dimension $n$ with probability $1-1/q^{\Omega(n^2)}$, which we will assume in the following. By Eq.~(\ref{eq:algebra-iso-to-conjugacy}), a necessary condition for $\tA$ and $\tB$ to be isomorphic as algebras is that $\cA$ and $\cB$ are conjugate as matrix codes. This leads to the first step of the algorithm. 

\paragraph{Step 1. Computing the hull.} Following \cite{CL25}, we make the following definition. 
\begin{definition}\label{def:hull}
Define a bilinear form $f$ on $\M(n, q)$ as $f(A, B)=\Tr(AB)$ for $A, B\in\M(n, q)$. The \emph{orthogonal space} of $\cA\leq\M(n, q)$ with respect to $f$ is $\cA^\perp:=\{A'\in \M(n, q)\mid f(A, A')=0~\forall A\in\cA\}$. A subspace $\cA\leq\M(n, q)$ is \emph{self-dual} if $\cA\subseteq \cA^\perp$. That is, for any $A, A'\in \cA$, $\Tr(AA')=0$. For $A\in\M(n, q)$, we say that $A$ is \emph{self-dual} if $\Tr(A^2)=0$.  The \emph{hull} of $\cA$ is $\hull(\cA):=\cA\cap\cA^\perp$. 
\end{definition}
It is clear that the hull is a self-dual space. 
Note that the bilinear form $f$ is a non-degenerate bilinear form on $\M(n, q)$. It is different from the more familiar bilinear form $(A, B)\mapsto \Tr(\trp{A}B)$.
The reason for adopting this non-standard bilinear form is the following observation.
\begin{lemma}[{\cite[Lemma 3]{CL25}}]\label{obs:hull}
Let $\cA$ and $\cB\leq\M(n, q)$. If $\cA$ and $\cB$ are conjugate, then $\hull(\cA)$ and $\hull(\cB)$ are conjugate. 
\end{lemma}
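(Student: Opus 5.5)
The plan is to show directly that conjugation by a fixed $T\in\GL(n,q)$ carries $\cA^\perp$ onto $\cB^\perp$ whenever it carries $\cA$ onto $\cB$. Intersecting the two equalities then gives $T\hull(\cA)T^{-1}=\hull(\cB)$. The key input is that the bilinear form $f(A,B)=\Tr(AB)$ is invariant under simultaneous conjugation. For any $X,Y\in\M(n,q)$, cyclicity of the trace gives
\[
f(TXT^{-1},TYT^{-1})=\Tr(TXYT^{-1})=\Tr(XY)=f(X,Y).
\]
This is precisely why the paper uses $\Tr(AB)$ rather than $\Tr(\trp{A}B)$: the latter is not conjugation-invariant.

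Suppose $\cB=T\cA T^{-1}$. Take $A'\in\cA^\perp$ and an arbitrary $B\in\cB$, and write $B=TAT^{-1}$ with $A\in\cA$. Invariance gives
\[
f(B,TA'T^{-1})=f(A,A')=0,
\]
so $TA'T^{-1}\in\cB^\perp$. Hence $T\cA^\perp T^{-1}\subseteq\cB^\perp$. Applying the same argument to $T^{-1}$, using $\cA=T^{-1}\cB T$, gives the reverse inclusion. Alternatively, one can argue by dimension: $f$ is non-degenerate, so $\dim\cA^\perp=n^2-\dim\cA=n^2-\dim\cB=\dim\cB^\perp$, and conjugation is a linear bijection.

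Finally, conjugation by $T$ is a bijection of $\M(n,q)$, so it commutes with intersections. Therefore
\[
T(\cA\cap\cA^\perp)T^{-1}=T\cA T^{-1}\cap T\cA^\perp T^{-1}=\cB\cap\cB^\perp,
\]
which says $\hull(\cA)$ and $\hull(\cB)$ are conjugate via the same $T$. There is no real obstacle here. The only point needing care is stating the reverse inclusion for $\perp$, which comes either from the symmetric argument with $T^{-1}$ or from the non-degeneracy count. The proof in fact shows slightly more than the statement: every conjugator from $\cA$ to $\cB$ is also a conjugator from $\hull(\cA)$ to $\hull(\cB)$, and this stronger form is what the algorithm uses.
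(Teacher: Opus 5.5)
Your argument is correct and matches the reasoning the paper relies on. The paper does not prove this lemma itself; it cites \cite[Lemma 3]{CL25} and remarks only that $f(A,B)=\Tr(AB)$ is chosen because it is invariant under simultaneous conjugation, which is exactly what you use to get $T\cA^\perp T^{-1}=\cB^\perp$ before intersecting. Your stronger conclusion, that the same $T$ conjugates the hulls, is indeed what the algorithm uses when it requires $T\linspan\{A\}T^{-1}=\linspan\{B\}$.
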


A result of Sendrier~\cite{Sendrierhull} shows that with probability $\sim 1/q$, a random matrix code in $\M(n, q)$ has its hull of dimension $1$. 
\begin{proposition}[{\cite{Sendrierhull}, cf.~\cite[Proposition 1]{CL25}}]\label{prop:hull-dim}
    The proportion of $m$-dimensional matrix codes in $\M(n, q)$ has its hull of dimension $1$ is asymptotically equal to
    \[
    \frac{1}{q}\cdot \left(1+O\left(\frac{\min\{m, n^2-m\}}{q^{(n^2-1)/2}}\right)\right)=\frac{1}{q}\cdot \left(1+O\left(\frac{n^2}{q^{(n^2-1)/2}}\right)\right).
    \]
\end{proposition}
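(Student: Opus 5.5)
The plan is to count pairs $(\cA,\cH)$ where $\cA$ is an $m$-dimensional subspace of $V:=\M(n,q)$ and $\cH=\hull(\cA)$ is $1$-dimensional. Since the trace form $f$ is a nondegenerate symmetric bilinear form on the $N:=n^2$-dimensional space $V$, $(V,f)$ is a quadratic (orthogonal) space, and Witt-type counting applies. First I count the isotropic lines $\ell=\linspan\{A\}$, i.e.\ those with $\Tr(A^2)=0$. By the standard count of isotropic vectors of a nondegenerate quadratic form in $N$ variables, their number is $q^{N-1}+O(q^{N/2})$. Hence the number of isotropic lines is $\frac{q^{N-1}-1}{q-1}\bigl(1+O(q^{-(N-2)/2})\bigr)$. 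In characteristic $2$ I would instead use the diagonal/trace description, since $\Tr(A^2)=\Tr(A)^2$ there, so the isotropic vectors form a hyperplane. This gives the same leading term with no error.

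Next, fix an isotropic line $\ell$. The condition $\hull(\cA)=\ell$ forces $\ell\subseteq\cA\subseteq\ell^\perp$. Then $\bar\cA:=\cA/\ell$ is an $(m-1)$-dimensional subspace of the nondegenerate quadratic space $W:=\ell^\perp/\ell$, which has dimension $N-2$. Moreover $\hull(\cA)=\ell$ holds iff $\bar\cA$ is nondegenerate in $W$. This reduction is the key structural step. To verify it, note that $\cA^\perp\cap\ell^\perp\supseteq\cA^\perp$, and that the radical of $\cA$ is the preimage of the radical of $\bar\cA$. So the count for each $\ell$ equals the number of nondegenerate $(m-1)$-subspaces of $W$. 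The same computation without the $\ell$ restriction (nondegenerate $m$-subspaces, hull $0$) gives the normalizer.

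The main obstacle is estimating the fraction of nondegenerate $k$-subspaces in an $M$-dimensional quadratic space, with uniform error. I would do this by iterating Witt's theorem: choose a basis one vector at a time, requiring each new vector to keep the Gram matrix nonsingular. The fraction of bad choices at step $j$ is $O(q^{-1})$ times something tending to $1$. Summed over steps, however, this only gives constants, not $1+O(\cdot)$. So instead I would use the exact orbit formula: the number of nondegenerate $k$-subspaces is $|\mathrm O(M)|/(|\mathrm O(k)|\,|\mathrm O(M-k)|)$, summed over the two isometry types. Comparing this with $\binom{M}{k}_q$ yields a ratio of the form $\prod(1-q^{-i})$-type constants. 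The same constants then appear in the numerator (with $M=N-2$, $k=m-1$) and in the denominator count of all $m$-subspaces.

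Finally, I divide: the fraction is
\[
\frac{\#\{\ell\}\cdot\#\{\text{nondeg }(m-1)\text{-subspaces of }W\}}{\binom{N}{m}_q}.
\]
Inserting $\binom{N-2}{m-1}_q/\binom{N}{m}_q\approx q^{-(N-m)-(m-1)}\cdot(\text{ratio})$ and $\#\{\ell\}\approx q^{N-2}$, the powers of $q$ combine to $q^{-1}$. The residual orthogonal-group constants should cancel up to factors $1+O(q^{-\min\{m,N-m\}/2})$, coming from the type-dependent $\pm q^{-k/2}$ terms in $|\mathrm O^\pm|$. Summing over $O(\min\{m,N-m\})$ such terms, or bounding the error crudely, gives the stated $O(\min\{m,n^2-m\}/q^{(n^2-1)/2})$ form. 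The last equality is immediate since $\min\{m,n^2-m\}\le n^2$. Matching the exact exponent $(n^2-1)/2$ is where I expect the delicate bookkeeping; if it fails I would cite Sendrier's computation directly.
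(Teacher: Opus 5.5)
\textbf{There is a genuine gap: the claimed cancellation in your final step fails, and the printed error term is itself false.}

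The paper gives no proof of this proposition; it only cites Sendrier and Couvreur--Levrat. So your argument has to stand on its own.

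Your structural reduction is correct and is the right tool. If $\hull(\cA)=\ell$, then $\ell\subseteq\cA\subseteq\ell^\perp$, and in that case $\hull(\cA)=\ell$ holds iff $\cA/\ell$ is nondegenerate in $W_\ell=\ell^\perp/\ell$.

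The argument breaks at the last step, where you claim the orthogonal-group constants cancel up to $1+O(q^{-\min\{m,N-m\}/2})$.
\begin{itemize}
\item The $\prod(1-q^{-i})$ factors of $\binom{N-2}{m-1}_q$ and $\binom{N}{m}_q$ do cancel.
\item However, the number of nondegenerate $(m-1)$-subspaces of $W_\ell$ is $\rho\binom{N-2}{m-1}_q$, where $\rho$ is the fraction of nondegenerate subspaces.
\item As $m,N-m\to\infty$, $\rho\to\prod_{i\ge1}(1-q^{1-2i})=\prod_{i\ge1}(1+q^{-i})^{-1}$. This is the limiting probability that a random symmetric matrix over $\F_q$ is invertible.
\item Your denominator counts all $m$-subspaces and carries no such factor, so nothing cancels $\rho$.
\item Normalizing by the hull-$0$ count instead would cancel $\rho$. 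But then you would be computing a ratio tending to $\frac{1}{q-1}$, not the proportion.
\end{itemize}
Carried through correctly, your computation gives
\[
\frac{\#\{\ell\}\cdot\rho\binom{N-2}{m-1}_q}{\binom{N}{m}_q}\longrightarrow\frac{\rho}{q-1}=\frac{1}{q}\prod_{i\ge2}\bigl(1-q^{1-2i}\bigr)=\frac{1}{q}\bigl(1-q^{-3}+O(q^{-5})\bigr).
\]
This is Sendrier's constant $R_1=\frac{1}{q-1}\prod_{i\ge1}(1+q^{-i})^{-1}$. Its relative distance from $1/q$ does not decay with $n$: for $q=3$ the limit is about $0.3195$, not $1/3$. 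Your own remark that the step-by-step Witt count ``only gives constants'' was pointing at exactly this.

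A finite check confirms it. Take odd $q$, $N=9$, $m=3$. There are $\frac{q^8-1}{q-1}$ isotropic lines, and each has $q^6(q^4+q^2+1)$ nondegenerate planes in $W_\ell$. Your reduction therefore gives the exact proportion
\[
\frac{q^6(q^6-1)(q^3-1)}{(q^9-1)(q^7-1)}=\frac{1}{q}\bigl(1-q^{-3}+O(q^{-6})\bigr).
\]
That is a relative error of order $q^{-3}$, not $O(q^{-4})$.

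Similarly, for $m=1$ the proportion is exactly the proportion of isotropic lines. For odd $q$ and even $N$, your own count shows this deviates from $1/q$ by a relative $\Theta(q^{-(N-2)/2})$, which exceeds the claimed $O(q^{-(N-1)/2})$ by a factor $q^{1/2}$.

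So no bookkeeping will yield the displayed error term. The statement as printed, with main term $1/q$ and an error exponentially small in $n^2$, is not correct. What your method does prove, and all the algorithmic analysis actually uses, is that the proportion is $\Theta(1/q)$, with limit $R_1=\frac{1}{q}(1+O(q^{-3}))$.

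Separately, your orthogonal-group machinery does not cover characteristic $2$. There the trace form is a non-alternating symmetric bilinear form, not the polar form of a quadratic form. Moreover, $W_\ell$ is alternating for $\ell=\linspan\{I\}$ when $n$ is even. That case is best handled by noting that the trace form is then isometric to the standard dot product, so Sendrier's analysis applies verbatim.
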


We also need the following proposition.
\begin{proposition}\label{prop:random-self-dual}
    Conditioned on a random $\cA\leq \M(n, q)$ having its hull of dimension $1$, the hull is a random $1$-dimensional self-dual subspace of $\M(n, q)$.
\end{proposition}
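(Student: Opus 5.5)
The plan is a symmetry argument: the conditional distribution of the hull is invariant under a group that acts transitively on the set of $1$-dimensional self-dual subspaces, and is therefore uniform on that set. Here $\cA$ is uniform over $m$-dimensional subspaces of $\M(n, q)$ with $m=n$ in our application; the argument works for any fixed $m$.

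The group is the isometry group $G=\mathrm{O}(\M(n,q),f)$ of the non-degenerate symmetric bilinear form $f(A,B)=\Tr(AB)$. Symmetry of $f$ follows from $\Tr(AB)=\Tr(BA)$. For $g\in G$ we have $(g\cA)^\perp=g(\cA^\perp)$, hence $\hull(g\cA)=g\,\hull(\cA)$.

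First, $G$ preserves the uniform distribution on $m$-dimensional subspaces, since it permutes them bijectively. Second, the event ``$\dim\hull(\cA)=1$'' is $G$-invariant. Together these imply that, conditioned on this event, $\cA$ is still uniform on the conditioned set, and this set is $G$-stable. For any $1$-dimensional self-dual $\cL$ and any $g\in G$, the map $\cA\mapsto g\cA$ is then a bijection between $\{\cA:\hull(\cA)=\cL\}$ and $\{\cA:\hull(\cA)=g\cL\}$. So the conditional probability of $\hull(\cA)=\cL$ is constant on $G$-orbits of such lines.

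It remains to show that $G$ acts transitively on $1$-dimensional self-dual subspaces, i.e.\ on isotropic lines $\linspan\{A\}$ with $A\neq 0$ and $f(A,A)=0$. I would use Witt's extension theorem. Any isometry between subspaces of a non-degenerate quadratic/symmetric bilinear space extends to a global isometry. In odd characteristic, the map $\linspan\{A\}\to\linspan\{A'\}$, $A\mapsto A'$, is an isometry of totally isotropic lines, so Witt's theorem gives $g\in G$ with $gA=A'$.

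The main obstacle is characteristic $2$. There $f$ is symmetric but $f(A,A)=\Tr(A^2)=\Tr(A)^2$, so $f$ is not the polar form of a quadratic form in the usual way, and Witt's theorem for symmetric bilinear forms can fail. For instance, alternating vectors (those with $f(A,A)=0$) may or may not lie in the radical of the restriction of $f$ to the hyperplane $\{\Tr=0\}$, and this can split the isotropic lines into two orbits. I would first try to bypass this with an explicit transitive subgroup. Conjugation $A\mapsto TAT^{-1}$ and transposition both preserve $f$, but these alone are not transitive on self-dual lines, since they preserve similarity type.

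So in characteristic $2$ I would fall back on a direct count. Fix $\cL=\linspan\{A\}$ self-dual. Counting $m$-spaces $\cA$ with $\cA\cap\cA^\perp=\cL$ amounts to counting $\cA\supseteq\cL$, $\cA\subseteq\cL^\perp$, whose image in $\cL^\perp/\cL$ is non-degenerate for the induced form. The number of such $\cA$ depends only on the isometry class of the $(n^2-2)$-dimensional non-degenerate space $\cL^\perp/\cL$. I would aim to show that this class is independent of $\cL$: in odd characteristic by the discriminant computation, and in characteristic $2$ by checking whether the induced form is alternating, which can be read off from $\Tr$ restricted to $\cL^\perp$. If it turns out to depend on whether $A$ is traceless-type, the proposition in characteristic $2$ would need the weaker reading of ``random'' as uniform within each orbit. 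The odd-characteristic Witt argument carries the main case.
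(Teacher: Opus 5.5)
In odd characteristic your argument is the paper's proof: use the isometry group of $f$, apply Witt's theorem to move one isotropic line onto another, and use the bijection $\cA\mapsto g\cA$ between codes with hull $\cL$ and codes with hull $g\cL$. Your worry about characteristic $2$ is justified. The paper cites Witt's theorem with no restriction on $q$, and this is a gap in its proof. In fact the proposition is \emph{false} when $q$ and $n$ are both even, so the case you left open cannot be closed as stated.

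\textbf{Why the statement fails.} In characteristic $2$ we have $f(A,A)=\Tr(A)^2$. Hence the self-dual matrices form the hyperplane $W=\{A:\Tr(A)=0\}=\linspan\{I_n\}^\perp$. Every isometry of $f$ preserves $W$, and therefore preserves $W^\perp=\linspan\{I_n\}$. The radical of $f|_W$ that you suspected is exactly $W\cap\linspan\{I_n\}$, which is nonzero precisely when $n$ is even. In that case $\linspan\{I_n\}$ is a self-dual line fixed by every isometry, so transitivity fails.

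\textbf{Your counting reduction settles the distribution.} We have $\hull(\cA)=\cL$ iff $\cL\subseteq\cA\subseteq\cL^\perp$ and $\cA/\cL$ is non-degenerate in $\cL^\perp/\cL$. The induced form on $\cL^\perp/\cL$ is alternating iff $\Tr$ vanishes on $\cL^\perp$, i.e.\ iff $I_n\in\cL$.
\begin{itemize}
\item For $\cL=\linspan\{I_n\}$, the space $\cL^\perp/\cL=W/\linspan\{I_n\}$ is symplectic. It therefore has no non-degenerate subspace of the odd dimension $n-1$. So an $n$-dimensional code never has hull $\linspan\{I_n\}$.
\item Other self-dual lines do occur. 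For $n=q=2$, any $2$-dimensional $\cA$ with $I_2\in\cA\subseteq W$ is totally isotropic, while $\linspan\{E_{12},E_{11}\}$ has hull $\linspan\{E_{12}\}$.
\item For every self-dual $\cL\neq\linspan\{I_n\}$, the space $\cL^\perp/\cL$ is non-degenerate, non-alternating, of dimension $n^2-2$. Over the perfect field $\F_q$ it is therefore isometric to $\sum x_iy_i$. So all these lines receive the same count.
\item When $n$ is odd, $\M(n,q)=W\perp\linspan\{I_n\}$ with $W$ non-degenerate alternating. The group $\mathrm{Sp}(W)\times\{1\}$ consists of isometries of $f$ acting transitively on lines of $W$, so the proposition holds.
\end{itemize}

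\textbf{Conclusion.} The correct characteristic-$2$ statement is the weaker reading you anticipated: for even $n$, the hull is uniform on self-dual lines other than $\linspan\{I_n\}$. Only one line out of $(q^{n^2-1}-1)/(q-1)$ is excluded, so the downstream average-case analysis changes only by a negligible error.
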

\begin{proof}
Recall that $f$ is a non-degenerate bilinear form on $\M(n, q)$. As vector spaces, $\M(n, q)\cong\F_q^{n^2}$. Let $\mathrm{O}(n^2, q)$ be the orthogonal group that preserves $f$. Take a self-dual matrix $A\in\M(n, q)$. For an arbitrary self-dual matrix $B\in\M(n, q)$, by a theorem of Witt, there exists $P\in \mathrm{O}(n^2, q)$ such that $P(A)=B$.
For any $\cA$ with $\hull(\cA)=\linspan\{A\}$, consider $\cB:=P(\cA)=\{P(A)\mid A\in\cA\}$. As $P\in\mathrm{O}(n^2, q)$ preserves $f$, $\hull(\cB)=\linspan\{B\}$. This sets up a one-to-one correspondence between matrix codes with their hull being $\linspan\{A\}$ and those with their hull being $\linspan\{B\}$. As $A$ and $B$ are arbitrary self-dual matrices, this concludes the proof. 
\end{proof}

After this step, we assume that $\cA$, as a matrix code spanned by $n$ random $n\times n$ matrices, has its hull as $\hull(\cA)=\linspan\{A\}$. By Lemma~\ref{obs:hull}, if $\hull(\cB)$ is not of dimension $1$, then $\cA$ and $\cB$ are not conjugate. Therefore, we can assume that $\hull(\cB)=\linspan\{B\}$. 

\paragraph{Step 2. Extracting more information from the hull.} We now examine $\hull(\cA)=\linspan\{A\}$. By Proposition~\ref{prop:random-self-dual}, $A$ is a random self-dual matrix, that is, $\Tr(A^2)=0$. We observe that we can impose such a matrix $A$ with a spectral restriction: \emph{$A$ has a unique eigenvalue in $\F_q$ with algebraic multiplicity $1$ with at least constant probability}. Our main technical result for the algorithmic analysis is the following result, which makes the above observation rigorous. 
\begin{theorem}\label{thm:main-technical-informal}
    Let $A$ be a random matrix in $\M(n, q)$ satisfying $\Tr(A^2)=0$. Then, as $n\rightarrow \infty$, with probability $\Theta(1)$, 
    $A$ has a unique eigenvalue in $\F_q$ with algebraic multiplicity $1$. Furthermore, when this holds, the $1$-dimensional eigenspace is a random $1$-dimensional subspace of $\F_q^n$.
\end{theorem}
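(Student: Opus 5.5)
The plan is to reduce the statement to an equidistribution result for $\Tr(A^2)$ over matrices with no eigenvalues in $\F_q$, and then obtain the ``random eigenspace'' part by a symmetry argument.

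\textbf{Step 1: decomposing matrices with a unique simple eigenvalue.} Let $\cU_\lambda\subseteq\M(n,q)$ be the set of matrices whose only eigenvalue in $\F_q$ is $\lambda$, with algebraic multiplicity $1$. The primary decomposition splits $\F_q^n=L\oplus W$, where $L$ is the $1$-dimensional $\lambda$-eigenspace and $W$ is an $A$-invariant complement on which $A$ acts eigenvalue-free. Hence every $A\in\cU_\lambda$ is conjugate to $\mathrm{diag}(\lambda, E)$ with $E\in\cE_{n-1}$, the eigenvalue-free matrices in $\M(n-1,q)$. Counting gives
\[
|\cU_\lambda|=\frac{|\GL(n,q)|}{(q-1)\,|\GL(n-1,q)|}\,\cdot\,\#\{E\in \cE_{n-1}\},
\]
where we sum over choices of the pair $(L,W)$ and the restriction. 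Moreover $\Tr(A^2)=\lambda^2+\Tr(E^2)$. So the count of self-dual $A\in\cU_\lambda$ equals the same prefactor times $\#\{E\in\cE_{n-1}:\Tr(E^2)=-\lambda^2\}$.

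\textbf{Step 2: equidistribution of $\Tr(E^2)$ on $\cE_m$ (the main obstacle).} We need to show that for each $k\in\F_q$,
\[
\#\{E\in\cE_m:\Tr(E^2)=k\}=\tfrac1q|\cE_m|\,(1+o(1)).
\]
Write the indicator as $\frac1q\sum_b\chi_b(c_E)\overline{\psi_b(k)}$. This reduces the claim to showing that $\sum_{E\in\cE_m}\chi_b(c_E)=o(|\cE_m|)$ for $b\neq0$, where $\chi_b$ is the quadratic-trace character on characteristic polynomials.

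The generating function over $\cE$ factors like the $\GL$ one. By the Neumann--Praeger primary decomposition, $L_{\cE}(u,\chi_b)$ is obtained from $L_{\GL}(u,\chi_b)$ by removing the Euler factors at the linear polynomials $t-a$. Each removed factor is a unipotent-type generating function twisted by $\psi_b(a^2)$, and these factors are explicit.

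Gorodetsky--Rodgers show that $L_{\GL}(u,\chi_b)$ has a controlled singularity structure that gives cancellation, i.e.\ that its coefficients are of smaller order than $|\GL|$. I would first try to show that dividing by the finitely many explicit linear-factor contributions preserves this radius-of-convergence gap. Then a Cauchy estimate on a circle slightly beyond $|u|=1/q$ (after the normalization $u\mapsto u/q^{m}$-scaling as in their paper) gives the $o(\cdot)$ bound. The delicate part is to verify that the removed factors have no zeros that would introduce new poles near the critical radius.

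\textbf{Step 3: assembling the probability.} $|\cE_m|/q^{m^2}$ tends to a positive constant, by Neumann--Praeger and the generating-function count of Theorem~\ref{thm:eigen-condition}. Summing over $\lambda$, we get
\[
\Pr[A\in\bigcup_\lambda\cU_\lambda \mid \Tr(A^2)=0]\approx \frac{q\cdot\frac{q^{n}}{q-1}\,q^{(n-1)^2}\,c_{\cE}/q}{q^{n^2}/q}.
\]
This is $\Theta(1)$, using Proposition~\ref{prop:hull-dim} and the fact that the self-dual matrices form a $1/q$ fraction up to $1+O(q^{-n^2/2})$ (a nondegenerate quadric). The constant is bounded below uniformly in $q$ via Proposition~\ref{prop: cqq}.

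\textbf{Step 4: randomness of the eigenspace.} The set of self-dual matrices in $\bigcup_\lambda\cU_\lambda$ is invariant under conjugation by $\GL(n,q)$, because both the trace condition and the eigenvalue profile are conjugation-invariant. The map $A\mapsto L_A$ is $\GL(n,q)$-equivariant, and $\GL(n,q)$ acts transitively on lines. Therefore all fibres have equal size, and the eigenspace is uniform among $1$-dimensional subspaces.
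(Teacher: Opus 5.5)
This is essentially the paper's argument. It uses the same splitting $A\sim\mathrm{diag}(\lambda,E)$ with $E$ eigenvalue-free, the same character-sum reduction to equidistribution of $\Tr(E^2)$ over eigenvalue-free matrices, the same deletion of the Euler factors at $t-a$ from the $P_{\GL}$-weighted product of Gorodetsky--Rodgers, and the same Bayes ratio against $\sigma(n,q)\to 1/q$.

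The step you call delicate is immediate. Deleting those factors amounts to multiplying by the entire function $H(u,b)=\prod_{a\in\F_q^\times}\prod_{i\ge 1}(1-\psi_b(a^2)u/q^i)$, so no new poles can appear. The paper then convolves an explicit coefficient bound for $H$ with the Gorodetsky--Rodgers estimate, which yields an explicit error term, whereas your Cauchy estimate as sketched gives only the limit. Your Step~4 equivariance argument correctly supplies the random-eigenspace claim, which the paper leaves implicit.

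There is one arithmetic slip in Step~3. The prefactor is $|\GL(n,q)|/\bigl((q-1)|\GL(n-1,q)|\bigr)=q^{n-1}(q^n-1)/(q-1)\approx q^{2n-1}/(q-1)$, not $q^{n}/(q-1)$. As written, your displayed ratio tends to $0$. With the correct prefactor it equals $\frac{q}{q-1}c(q)^q$, which is exactly the paper's limit.
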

See Theorem~\ref{thm:main-technical} in Section~\ref{subsec:main-technical} for an exact version of Theorem~\ref{thm:main-technical-informal} and its proof. 

By Theorem~\ref{thm:main-technical-informal}, we assume that $A$ has a unique eigenvalue $\alpha\in\F_q$ of algebraic multiplicity $1$, and we can compute $\alpha$ and its corresponding left eigenvector $\trp{v}\in\F_q^n$ (i.e.~$vA=\alpha v$) via standard randomized polynomial-time algorithms for polynomial factorization \cite{CantorZassenhaus81,vzGG13} and Gaussian elimination over finite fields. 
Let $E_A=\linspan\{v\}$ be the corresponding $1$-dimensional left eigenspace. Recall that $\hull(\cB)=\linspan\{B\}$, and $\hull(\cA)$ and $\hull(\cB)$ must be conjugate if $\cA$ and $\cB$ were to be conjugate by Lemma~\ref{obs:hull}. We can then identify the matrix $B\in\hull(\cB)$ such that $B$ has a unique eigenvalue $\alpha\in\F_q$ of algebraic multiplicity $1$; otherwise, we shall report that $\tA$ and $\tB$ are not isomorphic as algebras. 

Let $\trp{u}\in\F_q^n$ be a left
eigenvector of $B$ corresponding to $\alpha$ (i.e.~$uB=\alpha u$) and let $E_B=\linspan\{u\}$ be the corresponding $1$-dimensional left eigenspace. We find that any potential algebra isomorphism $T\in \GL(n, q)$ will map $E_B$ to $E_A$ from right. That is, $v=c_{A,B} u T$ for some $c_{A,B}\in\F_q$, since
\[
\alpha v= v A=c_{A,B}u TA=c_{A,B}uBT=c_{A,B}\alpha u T=\alpha v,
\]
where the third equality uses the fact that $TA=BT$.

\paragraph{Step 3. Matching up two pairs of matrices.} From the above step, we can restrict ourselves to consider only $T\in \GL(n, q)$ sending $E_B=\linspan\{u\}$ to $E_A=\linspan\{v\}$, where $E_B, E_A$ are left eigenspaces of $A$ and $B$ of dimensions $1$, respectively. Let $T=(t_{i,i'})\in\GL(n, q)$, $v=(v_1,\dots,v_n)$ and $u=(u_1,\dots,u_n)$, then there exists a scalar $c_{A,B}\in\F_q$ such that $v=c_{A,B}uT$. More precisely,
\begin{equation}\label{eq: T acting on the eigenvectors}
    v_{i'}=\sum_{i=1}^n c_{A,B}u_{i}t_{i,i'}~\forall\ i'\in[n].
\end{equation}
Note that $T$ also serves as a change-of-basis matrix from $T\cA T^{-1}$ to $\cB$ as in Eq.~(\ref{eq:algebra-iso-to-conjugacy}). If we let $A'=\sum_{i=1}^n v_iA_i$ and $B'=\sum_{i=1}^n u_iB_i$, we have
\[
c_{A,B}B'=\sum_{i=1}^n c_{A,B}u_iB_i=\sum_{i=1}^n\sum_{i'=1}^n c_{A,B}u_it_{i,i'}TA_{i'}T^{-1}=\sum_{i'=1}^n v_{i'}TA_{i'}T^{-1}=TA'T^{-1}.
\]
This implies $T\linspan\{A'\}T^{-1}=\linspan\{B'\}$. 
Furthermore, as $\trp{E}_A$ is a random $1$-dimensional subspace, $A'$, as a random linear combination of $A_i$ which are also uniformly sampled random matrices, is a random matrix in $\M(n, q)$. Inspired by \textbf{Step 2}, we prove that \emph{$A'$ also admits a unique \textbf{nonzero} eigenvalue in $\F_q$ of algebraic multiplicity $1$ with at least constant probability}. In fact, we have the following. 
\begin{theorem}\label{thm:eigen-condition-informal}
    Let $A$ be a random matrix in $\M(n, q)$. Then, as $n\rightarrow \infty$, with probability $\Theta(1)$, 
    $A$ has a unique nonzero eigenvalue in $\F_q$ with algebraic multiplicity $1$. Furthermore, when this holds, the $1$-dimensional eigenspace is a random $1$-dimensional subspace of $\F_q^n$.
\end{theorem}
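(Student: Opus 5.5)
The plan is to count directly, via the primary decomposition, the matrices in $\M(n,q)$ with a prescribed eigenvalue profile. Fix $\alpha\in\F_q^\times$. A matrix $A$ has $\alpha$ as its unique eigenvalue in $\F_q$, with algebraic multiplicity $1$, exactly when $\F_q^n=L\oplus W$ with $L$ a line on which $A$ acts as $\alpha$, and $W$ an $A$-invariant complement on which $A$ has no eigenvalue in $\F_q$. This decomposition is uniquely determined by $A$: $L$ is the $(t-\alpha)$-primary component and $W$ is the sum of the other primary components.

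\textbf{Step 1: the count.} Let $e_m(q)$ be the number of eigenvalue-free matrices in $\M(m,q)$, meaning no eigenvalue in $\F_q$, including $0$. By the uniqueness of the decomposition, the number of matrices with the required profile is
\[
(q-1)\cdot\frac{|\GL(n,q)|}{|\GL(1,q)|\,|\GL(n-1,q)|}\cdot e_{n-1}(q).
\]
The middle factor counts choices of the pair $(L,W)$. It equals $q^{n-1}(q^n-1)/(q-1)$ by Proposition~\ref{prop:glnq and mnq}. Dividing by $q^{n^2}$ gives the probability
\[
\frac{(q^n-1)\,q^{n-1}\,e_{n-1}(q)}{q^{n^2}}=(1-q^{-n})\cdot\frac{e_{n-1}(q)}{q^{(n-1)^2}}.
\]

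\textbf{Step 2: eigenvalue-free matrices.} Next I need that $e_m(q)/q^{m^2}$ converges, as $m\to\infty$, to a positive constant, with error $O(q^{-m/2})$ or similar. This is the generating-function input adapted from Neumann and Praeger~\cite{NEUMANN_PRAEGER_1998}. Writing the cycle-index factorization of $\sum_m |\M(m,q)|^{-1}\#\{\cdot\}u^m$ as a product over irreducible $\phi$, removing the $q$ linear factors gives
\[
\sum_m \frac{e_m(q)}{|\GL(m,q)|}u^m=\frac{1}{1-u}\cdot N(u)^{-q},
\]
where $N(u)=\sum_m \frac{\#\text{nilpotent}_m}{|\GL(m,q)|}u^m=\sum_m \frac{u^m}{|\GL(m,q)|}q^{m^2-m}$. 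This uses Fine--Herstein for the nilpotent count. A standard singularity analysis, with a pole at $u=1$ and the next singularity at radius $q$, gives
\[
\frac{e_m(q)}{|\GL(m,q)|}\to N(1)^{-q}=\prod_{i\ge1}(1-q^{-i})^{q},
\]
with geometric error. Hence
\[
\frac{e_m(q)}{q^{m^2}}\to c(q)\cdot c(q)^{q}\cdot c(q)^{-1}
\]
up to the factor bookkeeping, and in any case the limit is bounded below by $e^{-23/9}$-type constants via Proposition~\ref{prop: cqq}. Summing over the $q-1$ nonzero choices of $\alpha$ (the profiles are disjoint) gives total probability about $(q-1)\cdot q^{-1}\cdot(\text{const})$. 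Here the $1/q$ comes from the ratio $(q^n-1)q^{n-1}/q^{n^2}$ relative to $q^{-(n-1)^2}$ once $e$ is normalized per $\alpha$. I will redo the normalization carefully, but the outcome should be a limit of the form $\frac{q-1}{q}c(q)^q\cdot\Theta(1)$, which is $\Theta(1)$ uniformly in $q$. This is where the second inequality of Proposition~\ref{prop: cqq} will be used.

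\textbf{Step 3: uniformity of the eigenspace.} Conjugation by $\GL(n,q)$ preserves the set of matrices with this profile and the uniform measure on it. It maps the eigenspace of $A$ to that of $TAT^{-1}$. Since $\GL(n,q)$ acts transitively on lines, each line is the eigenspace for equally many matrices in the set. Hence, conditioned on the event, the eigenspace is uniform among $1$-dimensional subspaces. The same argument applies to left eigenspaces.

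The main obstacle is Step 2: getting the exact generating function for eigenvalue-free matrices in $\M(m,q)$ and an effective error term. I will first try to express it as the unipotent/nilpotent generating function raised to the power $-q$ times the full-matrix series. Then I will extract coefficients by comparing with $\prod(1-q^{-i})$, bounding the tail with the estimate~\eqref{eq:convergence bound of cnq}.
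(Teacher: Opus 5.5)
Your route is essentially the paper's. Step 1 is exactly the count behind Corollary~\ref{cor: probability of unique nonzero eigenvalue}: $(1-q^{-n})e_{n-1}(q)/q^{(n-1)^2}=c_n(q)\,v(\GL,n-1,q)=\alpha^*(n,q)$. From there the paper simply quotes Neumann--Praeger's $v(\GL,m,q)\to c(q)^{q-1}$, with its error term (Proposition~\ref{lem: gen functions of v and u}). Your Step 3 is correct: the eigenspace map is equivariant, $E(TAT^{-1})=T\,E(A)$ (or $v\mapsto vT^{-1}$ for left eigenvectors), and $\GL(n,q)$ acts transitively on lines. This actually proves the ``furthermore'' clause, which the paper asserts but never argues.

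However, the generating function in Step 2 is wrong. The series $\frac{1}{1-u}=\sum_m \frac{|\GL(m,q)|}{|\GL(m,q)|}u^m$ is the $\GL$ series, not the full-matrix series. The full-matrix series is $\sum_m \frac{q^{m^2}}{|\GL(m,q)|}u^m=\sum_m u^m/c_m(q)=\frac{N(u)}{1-u}$, where $N(u)=\prod_{i\ge1}(1-u/q^i)^{-1}$. Removing the $q$ linear factors therefore gives
\[
\sum_{m}\frac{e_m(q)}{|\GL(m,q)|}u^m=\frac{N(u)^{1-q}}{1-u},
\]
which is Neumann--Praeger's $VU^{q-1}=(1-z)^{-1}$; on the $\GL$ side one removes only the $q-1$ nonzero linear factors. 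Your version $\frac{N(u)^{-q}}{1-u}$ already fails at $u^1$: it gives coefficient $-1/(q-1)$, whereas $e_1(q)=0$.

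With the correct formula, $e_m/|\GL(m,q)|\to N(1)^{1-q}=c(q)^{q-1}$, so $e_m/q^{m^2}\to c(q)^q$ without any ad hoc factor $c(q)^{-1}$. On the error term: $[u^m]\frac{F(u)}{1-u}=\sum_{k\le m}[u^k]F$, where $F=N^{1-q}=\prod_{i\ge1}(1-u/q^i)^{q-1}$ is entire. So the error is the tail of $\sum_k [u^k]F$ and is superexponentially small; there is no singularity at radius $q$.

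The extra factor $(q-1)/q$ in your last paragraph is also spurious. Your Step 1 already summed over $\alpha$, and the prefactor $(1-q^{-n})$ tends to $1$, so the limit is exactly $c(q)^q$. The uniform lower bound then comes from the \emph{first} inequality of Proposition~\ref{prop: cqq}, $c(q)^q\ge e^{-23/9}$. The second inequality, with the factor $q/(q-1)$, belongs to Theorem~\ref{thm:eigen-condition}, where $0$ is allowed as the eigenvalue.

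Since the statement only asks for $\Theta(1)$, these slips do not change the conclusion. Still, the identity as written is false and must be replaced before the limit can be claimed.
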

See Corollary~\ref{cor: probability of unique nonzero eigenvalue} in Section~\ref{subsec:eigen-condition} for an exact version of Theorem~\ref{thm:eigen-condition-informal} and its proof. 

Recall that $T\linspan\{A'\}T^{-1}=\linspan\{B'\}$. Based on Theorem~\ref{thm:eigen-condition-informal}, let $\alpha'$ be the unique eigenvalue of $A'$, and update $B'\in\linspan\{B'\}$ such that $\alpha'$ is also its unique eigenvalue. 


We shall repeat what we have done with $A$ and $B$ for $A'$ and $B'$, making use of the $1$-dimensional left eigenspaces of $A'$ and $B'$ corresponding to their unique eigenvalues to obtain another pair of matrices $A''$ and $B''$, such that any potential algebra isomorphism $T$ must also satisfy $T\linspan\{A''\}T^{-1}=\linspan\{B''\}$. Again, by Theorem~\ref{thm:eigen-condition-informal}, there is a constant probability that $A''$ and $B''$ admit unique nonzero eigenvalues $\alpha''$ and $\beta''$ in $\F_q$ with algebraic multiplicity $1$, respectively. We then update $B''$ to satisfy $TA''T^{-1}=B''$.

\paragraph{Step 4. The final step.} 
From \textbf{Step 3}, we have obtained two pairs of matrices $(A', B')$ and $(A'', B'')$, such that any potential algebra isomorphism $T\in \GL(n, q)$ must satisfy $TA'T^{-1}=B'$ and $TA''T^{-1}=B''$. 

By Theorem~\ref{thm:module-iso}, $\conj_\to((A', A''), (B', B''))$ can be computed in polynomial time. Furthermore, $A'$ and $A''$ are uniformly random matrices. This allows the use of Corollary~\ref{cor:conj-bound} to obtain $|\conj_\to((A', A''), (B', B''))|\leq q-1$. This means that there exist an invertible matrix $T$ and a scalar $\lambda\in\F_q^\times$ such that $\conj_\to((A', A''), (B', B''))=\{\lambda T:~\lambda\in\F_q^\times\}$. The algorithm in Theorem~\ref{thm:module-iso} computes $T$. To verify that if there exists some $\lambda\in\F_q^\times$ such that $\lambda T$ gives the algebra isomorphism, we simply substitute $T$ in Eq.~(\ref{eq:algebra-iso-to-conjugacy}) and check if there exists $\lambda\in\F_q^\times$ such that
\[
\lambda \sum_{i'\in[n]} t_{i,i'}T A_{i'} T^{-1}=B_i~\forall\ i\in[n]
\]
hold. If such a $\lambda$ exists, we report that $\tA$ and $\tB$ are isomorphic and output $\lambda T$. Otherwise, we declare that $\tA$ and $\tB$ are not isomorphic. 





\paragraph{Summary of the algorithm steps.} We now give a quick summary of the algorithm steps for algebra isomorphism. 
\begin{itemize}
    \item Input: Structure constants $\tA,\tB\in T(n\times n\times n, q)$ of two algebras $\tA,\tB$.
    \item Output: One of the following three outputs: (1) $T=(t_{i,i'})\in\GL(n, q)$, such that for any $i\in[n]$, $\sum_{i'\in[n]} t_{i,i'}T A_{i'} T^{-1}=B_i$; (2) Not isomorphic; (3) Failure. 
\end{itemize}
\begin{enumerate}
    \item Compute the horizontal slices $\vA=(A_1,\dots,A_n)\in \M(n,q)^n$ and $\vB=(B_1,\dots,B_n)\in\M(n,q)$ of $\tA$ and $\tB$, respectively.
    \begin{enumerate}
        \item If $(A_1,\dots,A_n)$ are linearly dependent, report ``Failure''. In the following, assume $\cA=\linspan\{A_1,\dots,A_n\}$ is of dimension $n$.
        \item If $(B_1,\dots,B_n)$ are linearly dependent, report ``Not isomorphic''. In the following, let $\cB=\linspan\{B_1,\dots,B_n\}$.
    \end{enumerate}
    \item Compute $\hull(\cA)$ and $\hull(\cB)$. 
    \begin{enumerate}
        \item If $\hull(\cA)$ is not of dimension $1$, report ``Failure''. In the following, assume $\hull(\cA)=\linspan\{A\}$.
        \item If $\hull(\cB)$ is not of dimension $1$, report ``Not isomorphic''. In the following, assume $\hull(\cB)=\linspan\{B\}$. 
    \end{enumerate}
    \item Compute the primary decomposition of $A$ and $B$. 
    \begin{enumerate}
        \item If $A$ does not have a unique eigenvalue in $\F_q$ with algebraic multiplicity $1$, report ``Failure''. In the following, assume $E_A=\linspan\{v\}$ is the $1$-dimensional left eigenspace. Let $A'$ be the matrix obtained by applying the $v$-linear combination of $\vA$. 
        \item If $B$ does not have a unique eigenvalue in $\F_q$ with algebraic multiplicity $1$, report ``Not isomorphic''. In the following, assume $E_B=\linspan\{u\}$ is the $1$-dimensional left eigenspace. Let $B'$ be the matrix obtained by applying the $u$-linear combination of $\vB$. 
    \end{enumerate}
    \item  Compute the primary decomposition of $A'$ and $B'$. 
    \begin{enumerate}
        \item If $A'$ does not have a unique nonzero eigenvalue in $\F_q$ with algebraic multiplicity $1$, report ``Failure''. In the following, assume $E_{A'}=\linspan\{v'\}$ is the $1$-dimensional left eigenspace corresponding to the unique nonzero eigenvalue $\alpha'$ of $A'$. Let $A''$ be the matrix obtained by applying the $v'$-linear combination of $\vA$. 
        \item If $B'$ does not have a unique nonzero eigenvalue in $\F_q$ with algebraic multiplicity $1$, report ``Not isomorphic''. 
        Otherwise, rescale $B'$ such that the eigenvalue of $B'$ is $\alpha'$. Assume $E_{B'}=\linspan\{u'\}$ is the $1$-dimensional left eigenspace corresponding to the unique nonzero eigenvalue $\alpha'$ of $B'$. Let $B''$ be the matrix obtained by applying the $u'$-linear combination of $\vB$. 
    \end{enumerate}
    \item 
    Compute the primary decomposition of $A''$ and $B''$
    \begin{enumerate}
        \item If $A''$ does not have a unique nonzero eigenvalue in $\F_q$ with algebraic multiplicity $1$, report ``Failure''. In the following, assume $\alpha''$ is the unique eigenvalue of $A''$ in $\F_q$. 
        \item If $B''$ does not have a unique nonzero eigenvalue in $\F_q$ with algebraic multiplicity $1$, report ``Not isomorphic''. 
        Otherwise, rescale $B''$ such that the eigenvalue of $B''$ is $\alpha''$.
    \end{enumerate}
    \item Compute $\conj((A', A''))$. 
    \begin{enumerate}
        \item If $|\conj((A', A''))|>q-1$, report ``Failure''. In the following, we assume $|\conj((A', A''))|\leq q-1$. 
        \item Compute the coset representative $T\in\conj_\to((A', A''), (B', B''))$. 
        \item Verify if 
            \[
            \lambda \sum_{i'\in[n]} t_{i,i'}T A_{i'} T^{-1}=B_i~\forall\ i\in[n]
            \]
        holds for some scalar $\lambda\in\F_q^\times$. If this is the case, output $\lambda T$. If no such $\lambda$ is found, report ``Not isomorphic''.
    \end{enumerate}
\end{enumerate}

\paragraph{Correctness of the algorithm.} The correctness of the algorithm follows from the detailed description. More specifically, our algorithm is the process of limiting the potential algebra isomorphism $T\in\GL(n, q)$ as follows:
\begin{enumerate}
    \item For $\tA$ and $\tB$ to be isomorphic, $T$ must conjugate their hulls (Lemma~\ref{obs:hull}). 
    \item Assuming that the hulls are $1$-dimensional and the basis matrices $A$ and $B$ have unique eigenvalues in $\F_q$ of algebraic multiplicity $1$, $T$ needs to preserve these $1$-dimensional eigenspaces. 
    \item As $T$ serves as the change-of-basis matrix from $\cA$ to $\cB$, this gives rise to a pair of matrices $A'$ and $B'$ that needs to be matched by conjugation by $T$.
    \item Assuming again that $A'$ and $B'$ have unique eigenvalues in $\F_q$ of algebraic multiplicity $1$, then $T$ also needs to preserve these $1$-dimensional eigenspaces.
    \item As $T$ serves as the base matrix change from $\cA$ to $\cB$, this gives rise to another pair of matrices $A''$ and $B''$ that must be matched by conjugation with $T$.
    \item Assuming that $A'$ and $A''$ generate the full matrix algebra, the conjugacy coset from $(A', A'')$ to $(aB', bB'')$ is of order at most $q-1$ for any $a,b\in\F_q^\times$.
\end{enumerate}

\paragraph{Running time analysis.} The algorithm runs in time $\poly(n, \log q)$. All individual steps can be computed in time $\poly(n,\log q)$ using linear algebraic computation (such as computing the hull and computing the primary decomposition) or computing conjugacy cosets (cf.~Theorem~\ref{thm:module-iso}).

\paragraph{Average-case analysis.} The algorithm works for at least $1/\Theta(q)$ fraction of the inputs. In particular, for a random $\tA\in T(n\times n\times n,q)$: 
\begin{enumerate}
    \item With probability $1-1/q^{\Omega(n^2)}$, $\dim(\cA)=n$.
    \item Conditioned on $1$, with probability $1/\Theta(q)$, $\hull(\cA)=\linspan\{A\}$ for some random self-dual matrix $A$. (cf.~Proposition~\ref{prop:hull-dim})
    \item Conditioned on $2$, with probability $\Theta(1)$, $A$ has a unique eigenvalue in $\F_q$ with algebraic multiplicity $1$, and the $1$-dimensional eigenspace is a random $1$-dimensional subspace of $\F_q^n$. (cf.~Theorem~\ref{thm:main-technical-informal} and Theorem~\ref{thm:main-technical})
    \item Conditioned on $3$, with probability $\Theta(1)$, the constructed $A'$ (corresponding to the $1$-dimensional eigenspace of $A$) has a unique nonzero eigenvalue in $\F_q$ with algebraic multiplicity $1$, and the $1$-dimensional eigenspace is a random $1$-dimensional subspace of $\F_q^n$. (cf.~Theorem~\ref{thm:eigen-condition-informal} and Corollary~\ref{cor: probability of unique nonzero eigenvalue})
    \item 
    Conditioned on $4$, with probability $\Theta(1)$, the constructed $A''$ (corresponding to the $1$-dimensional eigenspace of $A$) has a unique nonzero eigenvalue in $\F_q$ with algebraic multiplicity $1$. (cf.~Theorem~\ref{thm:eigen-condition-informal} and Corollary~\ref{cor: probability of unique nonzero eigenvalue})
    \item Conditioned on $3$ and $4$, $A'$ and $A''$ are random matrices in $\M(n,q)$. Then with probability $1-1/q^{\Omega(n)}$, $|\conj(A',A'')|\leq q-1$. (cf.~Theorem~\ref{cor:conj-bound})
\end{enumerate}
Summarising the above, the probability that the algorithm does not report ``Failure'' is $1/\Theta(q)$. 


\subsection{Average-case algorithm for matrix code conjugacy}\label{sec:matrix-sp-conj}

In this section, we outline the average-case algorithm for matrix code conjugacy to prove Theorem~\ref{thm:matrix-sp-conj}.

\paragraph{Problem set up.} Let $\cA, \cB\leq \M(n, q)$ be two matrix codes of dimension $n$. Suppose $\cA=\linspan\{A_1, \dots, A_n\}$ and $\cB=\linspan\{B_1, \dots, B_n\}$. Two matrix codes are conjugate if and only if there exist $S\in\GL(n, q)$ and $T=(t_{i,i'})\in\GL(n, q)$ such that 
\begin{equation}\label{eq:matrix-conjugacy}
    \sum_{i'\in[n]} t_{i,i'}S A_{i'} S^{-1}=B_i~\forall\ i\in[n].
\end{equation}

\paragraph{The main changes in the algorithm for matrix code conjugacy.} The algorithm for matrix code conjugacy follows the strategy of the algorithm for algebra isomorphism in Section~\ref{sec:alg-iso}, but requires a more involved step 2 to produce $A'$ from $A$.

In Step 1, we compute $\hull(\cA)$. With probability $\approx 1/q$, $\hull(\cA)=\linspan\{A\}$ and we assume that this is the case. Similarly, we compute $\hull(\cB)$ and assume that $\hull(\cB)=\linspan\{B\}$. As in Theorem~\ref{thm:main-technical-informal}, we still find that $A$ has a unique eigenvalue $\lambda\in\F_q$ with algebraic multiplicity $1$, and we shall choose $B$ such that it also has a unique eigenvalue $\lambda\in\F_q$ with algebraic multiplicity $1$. Let $E_\lambda$ and $F_\lambda$ be the $1$-dimensional eigenspaces of $A$ and $B$, respectively. A potential $S\in \GL(n, q)$ for the matrix code conjugacy needs to send $E_\lambda$ to $F_\lambda$. 

Previously in algebra isomorphism, as $S$ also plays the role of the change-of-basis matrix, we immediately get a matched up $A'$ and $B'$. In fact, this would only require $S$ to have a unique eigenvalue of geometric multiplicity $1$, a somewhat relaxed condition in comparison to algebraic multiplicity $1$. 

For matrix code conjugacy, the algebraic multiplicity condition becomes essential in passing from these conditions on $S$ to a restriction on the change-of-basis matrix $T$. Indeed, due to the algebraic multiplicity condition, we have a direct sum decomposition of $\F_q^n$ as $E_\lambda\oplus E_0$, where $E_0$ is the sum of the primary components associated with other irreducible factors of the minimal polynomial of $A$. After applying a change-of-basis matrix if necessary, we can assume $E_\lambda=\linspan\{\stdb_1\}$ and $E_0=\linspan\{\stdb_2, \dots, \stdb_n\}$.


Similarly, for $\hull(\cB)=\linspan\{B\}$, by Lemma~\ref{obs:hull}, we can choose $B$ which is conjugate to $A$, if $\cA$ and $\cB$ were conjugate. So $B$ must have a unique eigenvalue $\lambda\in\F_q$ with algebraic multiplicity $1$. Denoting $F_\lambda$ as the eigenspace corresponding to $\lambda$, we can assume $\F_q^n=F_\lambda\oplus F_0$ where $F_0$ is defined similarly as above. Again, after applying a change-of-basis matrix if needed, we can assume $F_\lambda=\linspan\{\stdb_1\}$ and $F_0=\linspan\{\stdb_2, \dots, \stdb_n\}$.

After these preparations, we see that if $\cA$ and $\cB$ were to be conjugate, they must be conjugated by a block diagonal matrix $S$ of the form 
\begin{equation}\label{eq:block-form}
\begin{bmatrix}
    1 & 0 \\
    0 & S_0
\end{bmatrix}
\end{equation}
where $S_0\in\GL(n-1, q)$. 

We now rewrite Eq.~(\ref{eq:matrix-conjugacy}) as follows: We need to compute $S\in\GL(n, q)$ and $T=(t_{i,i'})\in\GL(n, q)$ such that for any $i\in[n]$, 
\begin{equation}\label{eq:matrix-conj-2}
    \sum_{i'\in[m]} t_{i,i'}S A_{i'} =B_iS.
\end{equation}
As $S$ is in the form of Eq.~(\ref{eq:block-form}), we obtain the following. Let $\tilde A=(a_{i,j})\in \M(n, q)$, where $a_{i,j}=A_j(i, 1)$. That is, $\tilde A$ is the first vertical slice of the tensor with frontal slices $(A_1, \dots, A_n)$. Let $\tilde B=(b_{i, j})\in \M(n, q)$, where $b_{i,j}=B_j(i, 1)$. By Eq.~(\ref{eq:block-form}) and Eq.~(\ref{eq:matrix-conj-2}), we have 
$S\tilde A \trp{T}=\tilde B$. 

Let $\hat A,\hat B\in \M((n-1)\times n, q)$ be the last $(n-1)$ rows of $\tilde A$ and $\tilde B$, respectively. By the block-diagonal structure of $S$ (Eq.~(\ref{eq:block-form})), we have $S_0\hat A \trp{T}=\hat B$, or equivalently, $T \trp{\hat A} \trp{S_0}=\trp{\hat B}$. Let $V_A$ be the subspace of $\F_q^n$ spanned by the columns of $\trp{\hat A}$. As $\trp{\hat A}$ is a random matrix in $\M(n\times (n-1), q)$, $V_A$ is of dimension $(n-1)$, that is, a hyperplane in $\F_q^n$. Similarly, define $V_B\leq \F_q^n$ from $\trp{\hat B}$. We then see that $TV_A=V_B$. Let $u^t\in \F_q^n$ (resp. $v^t\in \F_q^n$) be the normal vector of the hyperplane $V_A$ (resp. $V_B$). We then have $u=vT$. This replaces the corresponding step in step 2 of the algorithm for algebra isomorphism in Section~\ref{sec:alg-iso}, where $u$ and $v$ span $E_A$ and $E_B$, respectively. The rest of the algorithm, including the execution and analysis, follows the same process as in the algorithm in Section~\ref{sec:alg-iso}, so we omit here. 



\section{Average-case algorithm for 4-tensor isomorphism}\label{sec:4-tensor-iso}

In this section, we present the average-case algorithm for $4$-tensor isomorphism to prove Theorem~\ref{thm:4-tensor-iso}.

\subsection{Algorithm description}
\paragraph{Problem set up.} Let $U,V,W,X\cong\F_q^n$, and let $\tA, \tB\in U\otimes V\otimes W\otimes X$ be two $4$-tensors. After fixing bases of these vector spaces, $\tA$ is represented by a $4$-way array $\tA=(a_{i,j,k,\ell})\in \F_q^n\otimes\F_q^n\otimes\F_q^n\otimes\F_q^n$. The goal is to determine whether there exist invertible matrices $L, R, S, T \in \GL(n,q)$ such that:
\[\tB = (L \otimes R \otimes S \otimes T) \cdot \tA.\]
A useful viewpoint to deal with higher-order tensor isomorphism is to consider the \emph{flattened version} of higher-order tensors. More precisely, we flatten a $4$-tensor $\tA\in \F_q^n\otimes\F_q^n\otimes\F_q^n\otimes\F_q^n$ into an $n^2\times n^2$ matrix $\tilde{A}\in M(n^2,q)$ by mapping
\[
    \stdb_i\otimes\stdb_j\otimes\stdb_k\otimes\stdb_\ell\mapsto (\stdb_i\otimes\stdb_j)\trp{(\stdb_k\otimes\stdb_\ell)}
\]
for any $i,j,k,\ell\in[n]$. We shall call $\tilde{A}$ the \emph{flattened version} of $\tA$. Moreover, for any $4$-tensor $\tA,\tB$ and their flattened versions $\tilde{A},\tilde{B}$, it is clear that 
\[
    \tB = (L \otimes R \otimes S \otimes T) \cdot \tA~\Leftrightarrow~\tilde{B}=(L\otimes R)\cdot \tilde{A}\cdot\trp{(S\otimes T)}.
\]

\paragraph{Isomorphism invariant: left kernel spaces.}
We define the \emph{left kernel space} $\mathcal{L}_\tA$ of a $4$-tensor $\tA\in\F_q^n\otimes\F_q^n\otimes\F_q^n\otimes\F_q^n$ as
\[
    \mathcal{L}_\tA := \{ \trp{v}\in \F_q^{1 \times n^2}:~\trp{v}\tilde{A}=0\}.
\]
We then have the following:
\begin{proposition}\label{prop:left-kernel-space is isomorphism invariant}
    If $4$-tensors $\tA,\tB\in \F_q^n\otimes\F_q^n\otimes\F_q^n\otimes\F_q^n$ are isomorphic with respect to invertible matrices $L,R,S,T\in\GL(n,q)$, then $\trp{v}\in\cL_\tB$ if and only if $\trp{v}(L\otimes R)\in\cL_\tA$. Moreover, we have $\cL_{\tB}=\cL_{\tA}\cdot(L\otimes R)^{-1}$.
\end{proposition}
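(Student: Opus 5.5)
The plan is to work entirely with the flattened matrices, using the equivalence stated just before the proposition: $\tB=(L\otimes R\otimes S\otimes T)\cdot\tA$ holds if and only if $\tilde B=(L\otimes R)\,\tilde A\,\trp{(S\otimes T)}$. Two facts carry the argument. First, the Kronecker products $L\otimes R$ and $S\otimes T$ are invertible, with inverses $L^{-1}\otimes R^{-1}$ and $S^{-1}\otimes T^{-1}$. Second, right multiplication by the invertible matrix $\trp{(S\otimes T)}$ does not change which row vectors annihilate a matrix.

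For the first claim, take $\trp{v}\in\F_q^{1\times n^2}$ and compute
\[
\trp{v}\tilde B=\bigl(\trp{v}(L\otimes R)\bigr)\tilde A\,\trp{(S\otimes T)}.
\]
Because $\trp{(S\otimes T)}$ is invertible, the left-hand side vanishes exactly when $\bigl(\trp{v}(L\otimes R)\bigr)\tilde A=0$. So $\trp{v}\in\cL_\tB$ if and only if $\trp{v}(L\otimes R)\in\cL_\tA$.

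For the set equality, use the map $\trp{v}\mapsto\trp{v}(L\otimes R)$. The first part says this map sends $\cL_\tB$ into $\cL_\tA$. Its inverse is right multiplication by $(L\otimes R)^{-1}$. The converse direction of the equivalence, applied to $\trp{v}=\trp{w}(L\otimes R)^{-1}$ for $\trp{w}\in\cL_\tA$, shows this inverse sends $\cL_\tA$ into $\cL_\tB$. Hence $\cL_\tB=\cL_\tA\cdot(L\otimes R)^{-1}$.

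There is no real obstacle here. The only point needing care is the flattening identity, and the paper already records it: it follows from $(L\otimes R\otimes S\otimes T)$ acting on the basis tensors $\stdb_i\otimes\stdb_j\otimes\stdb_k\otimes\stdb_\ell$, which gives $(L\otimes R)(\stdb_i\otimes\stdb_j)\trp{\bigl((S\otimes T)(\stdb_k\otimes\stdb_\ell)\bigr)}$, and then extending linearly.
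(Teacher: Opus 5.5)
Your proposal is correct and follows essentially the same route as the paper. Both rewrite $\trp{v}\tilde{B}$ via the flattening identity, cancel the invertible factor $\trp{(S\otimes T)}$ to get the equivalence, and deduce $\cL_\tB=\cL_\tA\cdot(L\otimes R)^{-1}$ from the invertibility of right multiplication by $L\otimes R$; your explicit two-inclusion argument for this last step is slightly more careful than the paper's one-line appeal.
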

\begin{proof}
    Recall that $\tilde{B}=(L\otimes R)\cdot\tilde{A}\cdot\trp{(S\otimes T)}$. Thus, if $\trp{v}\tilde{B}=0$, we have $\trp{v}(L\otimes R)\cdot\tilde{A}\cdot\trp{(S\otimes T)}=0$. Since $S\otimes T$ is invertible, we have $\trp{v}(L\otimes R)\in\cL_\tA$. 

    On the other hand, if $\trp{v}(L\otimes R)\in\cL_\tA$, we also have $\trp{v}(L\otimes R)\cdot\tilde{A}\cdot\trp{(S\otimes T)}=0$ since $\trp{v}(L\otimes R)\cdot\tilde{A}=0$. Thus $\trp{v}\in\cL_\tB$.

    Moreover, since $L\otimes R$ is invertible and acts linearly in $\F_q^{1 \times n^2}$, $L\otimes R:~\trp{v}\mapsto\trp{v}(L\otimes R)$ is an invertible linear map in $\F_q^{1 \times n^2}$. It follows that $\cL_{\tB}=\cL_{\tA}\cdot(L\otimes R)^{-1}$.
\end{proof}

Proposition~\ref{prop:left-kernel-space is isomorphism invariant} indicates that the left kernel space of a $4$-tensor is an isomorphism invariant. Moreover, such an invariant is easy to compute by solving a system of linear equations of size $n^2$. 

\paragraph{Algorithm description.}
Note that each vector $\trp{v} \in \F_q^{1\times n^2}$ can be identified as a $2$-tensor in $\F_q^n\otimes \F_q^n$. We can then consider the flattened version of $\trp{v}$ as an \( n \times n \) matrix $A_v$. In particular, if $\trp{u}=\trp{v}(L\otimes R)$ for some invertible matrix $L$ and $R$, their flattened versions $A_v$ and $B_u$ satisfies $\trp{L}A_vR=B_u$. Note that this flattening procedure converts each left kernel space $\cL_\tA$ into a matrix space $\cC_\tA$. Moreover, finding invertible matrices $L$ and $R$ such that $\cL_{\tB}=\cL_{\tA}\cdot(L\otimes R)^{-1}$ converts to finding invertible matrices $L$ and $R$ such that the corresponding matrix spaces $\cC_{\tA}$ and $\cC_{\tB}$ satisfy $\trp{L}\cC_\tA R=\cC_\tB$, that is, whether $\cC_\tA$ and $\cC_\tB$ are equivalent as matrix spaces. 

Without loss of generality, suppose both $\cC_\tA$ and $\cC_\tB$ are of dimension $c$. It is easy to see that matrix space equivalence is the same problem as 3-tensor isomorphism\footnote{By listing linear bases of $\cC_\tA$ and $\cC_\tB$ as $(A_1, \dots, A_c)$ and $(B_1, \dots, B_c)$ respectively, we see that testing whether $\cC_\tA$ and $\cC_\tB$ are equivalent is the same problem of testing whether the $3$-tensor with frontal slices $(A_1, \dots, A_c)$ and the one with frontal slices $(B_1, \dots, B_c)$ are isomorphic.}. In the case of $\dim(\cC_\tA)=\dim(\cC_\tB)=c$ being a constant, at a multiplicative cost $q^{c^2}$, we can reduce matrix space equivalence to matrix tuple equivalence by enumerating the linear bases of $\cC_\tB$. When $\cC_\tA$ is a random matrix space, this would put a serious constraint on the possible transformation matrices as matrix tuple equivalence, which would give us the algorithm as follows.

\begin{enumerate}
  \item For given $\tA,\tB\in\F_q^n\otimes\F_q^n\otimes\F_q^n\otimes\F_q^n$, compute left kernel spaces $\cL_\tA$ and $\cL_\tB$, and flatten them into matrix codes $\cC_\tA$ and $\cC_\tB$.
  \item If $\dim(\cC_\tA)\neq\dim(\cC_\tB)$, output ``$\tA$ and $\tB$ are not isomorphic''; otherwise,
  \item Fix an ordered basis $(A_1,\dots,A_c)$ of $\cC_\tA$ (where $\dim(\cC_\tA)=c$). Compute the group $G:=\{(L, R)\in \GL(n, q)\times\GL(n, q)\mid \forall i\in[c], \trp{L}A_iR=A_i\}$. If $|G|>q$, return ``Fail''.
  \item Let $K_1$ be the empty set. For each ordered basis \( (B_1, \dots, B_c) \) of $\cC_\tB$, compute $L,R\in\GL(n,q)$ such that
  \[ \trp{L} A_i R = B_i, \quad \forall i \in[c], \]
  and put $(L, R)$ in $K_1$.
  \item Do the above procedure for the right kernel spaces to compute a set $K_2$ that consists of $(S, T)\in\GL(n, q)\times\GL(n, q)$ such that $(S, T)$ sends a fixed ordered basis of $\cC_\tC$ to some ordered basis of $\cC_\tD$. 
  \item For each $(L, R)\in K_1$ and $(S, T)\in K_2$, test if $\tB = (L \otimes R \otimes S \otimes T) \cdot \tA$. Accept if such $(L, R, S, T)$ exists, and reject otherwise. 
\end{enumerate}

We explain some steps in the algorithm. First, in Step 3, a generating set of the group $G$ can be done by an algorithm of Brooksbank and O'Brien \cite{BO08}. Second, in Step 4, a representative of the coset $H=\{(L, R)\in \GL(n, q)\times\GL(n, q)\mid \forall i\in[c], \trp{L}A_iR=B_i\}$ can be computed by \cite[arXiv, Proposition 14]{IQ19}. Third, in Step 6, $|K_1|$ is upper bounded by $q^{c^2}\cdot q=q^{O(c^2)}$, where $q^{c^2}$ is an upper bound on the number of ordered bases of $\cC_\tB$, and $q$ is an upper bound on $|H|=|G|$.

The time complexity of the above algorithm is dominated by $2$ parts: First, enumerating all ordered basis of $\cC_\tB$ takes time $q^{c^2}$ when $\dim(\cC_\tA)=\dim(\cC_\tB)=c$. Second, enumerating all pairs of $(L,R)\in K_1$ and $(S, T)\in K_2$ takes time $q^{O(c^2)}$ as well. 
All the other parts can be computed in time $\poly(n,q)$.

\subsection{Average-case analysis} We shall show that for $\frac{1}{\poly(q)}$ fraction of random $4$-tensor $\tA$ and arbitrary $4$-tensor $\tB$, the above algorithm runs in $\poly(n,q)$, thus proving theorem~\ref{thm:4-tensor-iso}. For this, we utilise random matrix theory over finite fields to obtain average-case guaranties.

\paragraph{Random 4-tensors with constant-dimensional left kernel spaces.}
We first deal with the first dominating running time, enumerating all ordered basis of $\cC_\tB$. We shall prove the following:
\begin{proposition}\label{prop:random 4 tensor has constant left kernel}
Let $c$ be a constant integer. Then with probability $O(\frac{1}{q^{c^2}})$, the left kernel space $\cL_\tA$ of a random $4$-tensor $\tA$ is of dimension $c$.
\end{proposition}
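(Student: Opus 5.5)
The plan is to reduce the statement to the rank distribution of a uniformly random square matrix. The flattening $\tA\mapsto\tilde A$ is a linear bijection from $\F_q^n\otimes\F_q^n\otimes\F_q^n\otimes\F_q^n$ onto $\M(n^2,q)$. It sends the entry $a_{i,j,k,\ell}$ to position $((i,j),(k,\ell))$. Hence if the entries of $\tA$ are independent and uniform on $\F_q$, then $\tilde A$ is a uniformly random matrix in $\M(n^2,q)$. By definition, $\cL_\tA$ is the left kernel of $\tilde A$, so $\dim\cL_\tA = n^2-\rk(\tilde A)$. The event $\dim\cL_\tA=c$ is therefore exactly the event $\rk(\tilde A)=N-c$ with $N=n^2$.

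Next I will invoke the rank distribution quoted in the preliminaries, that $\Pr[\rk(A)=N-c\mid A\rin\M(N,q)]$ is of order $q^{-c^2}$. I will make the constant explicit rather than rely on the shorthand there. The number of $N\times N$ matrices of rank $r=N-c$ equals the number of $r$-dimensional column spaces times the number of surjections from $\F_q^N$ onto that space:
\[
\binom{N}{r}_q\prod_{i=0}^{r-1}(q^N-q^i).
\]
I will divide by $q^{N^2}$ and write the Gaussian binomial as a ratio of products of factors $(1-q^{-i})$. This gives
\[
\Pr[\rk=N-c]=q^{-c^2}\,\frac{c_N(q)^2}{c_c(q)\,c_{N-c}(q)},
\]
with $c_m(q)=\prod_{i=1}^m(1-q^{-i})$ as in Proposition~\ref{prop:glnq and mnq}.

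To finish, I use that every $c_m(q)$ lies in $[c(q),1]\subseteq(1/4,1]$ by Proposition~\ref{prop:glnq and mnq}. The correction factor is then at most $1/(c(q)\,c(q))<16$, so the probability is at most $16q^{-c^2}=O(q^{-c^2})$, uniformly in $n$ and $q$. It is also at least $c(q)^2>1/16$ times $q^{-c^2}$, so the event has probability $\Theta(q^{-c^2})$. This lower bound is what the average-case analysis actually uses, since it says that a $1/\poly(q)$ fraction of inputs have a $c$-dimensional left kernel.

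The only potential obstacle is the first step. I must check that flattening really yields a uniform matrix, which it does because it is a coordinate permutation. I must also check that the rank formula applies with $N=n^2$ in place of $n$, which it does because the formula is uniform in $N$. Everything else is routine bookkeeping with the $q$-products.
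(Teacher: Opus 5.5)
Your proof follows the paper's argument. Flattening is a coordinate bijection, so $\tilde A$ is uniform in $\M(n^2,q)$, and $\dim\cL_\tA=c$ is exactly the corank-$c$ event. The paper then simply cites the rank distribution, whereas you derive it explicitly.

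There is one slip in that derivation. With $N=n^2$, the factor $c_c(q)$ arises twice: once from $\binom{N}{N-c}_q=q^{c(N-c)}c_N(q)/\bigl(c_c(q)\,c_{N-c}(q)\bigr)$, and once from $\prod_{i=0}^{N-c-1}(q^N-q^i)=q^{N(N-c)}c_N(q)/c_c(q)$. The exact probability is therefore
\[
q^{-c^2}\,\frac{c_N(q)^2}{c_c(q)^2\,c_{N-c}(q)}.
\]
As a check, take $N=c=1$: the answer must be $q^{-1}$, but your version gives $q^{-1}(1-q^{-1})$.

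Your conclusion still holds. Since $c_N(q)\le c_{N-c}(q)$, the correction factor is at most $1/c_c(q)^2\le 1/c(q)^2<16$. It is also at least $c_N(q)^2\ge c(q)^2>1/16$. So the $\Theta(q^{-c^2})$ bound survives, including the lower bound that the average-case analysis actually needs.
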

\begin{proof}
    For random $\tA$, its flattened version $\tilde{A}\in\M(n^2,q)$ is a random matrix in $\M(n^2,q)$. A classical result in random matrix theory~\cite[Pp. 38]{MR2689583} (see also~\cite{Fulman2015}) states that for any $c\in\{0,\dots,n\}$,
\begin{equation}\label{eq: random matrix have constant corank}
    \Pr[\rk(A)=n-c\mid A\rin\M(n, q)]=O(\frac{1}{q^{c^2}}).
\end{equation}
Thus, the probability of $\dim(\cL_\tA)=c$ is $O(\frac{1}{q^{c^2}})$.
\end{proof}
We choose $c=3$ and assume now that $\cC_\tA$ is of dimension $3$. By Proposition~\ref{prop:left-kernel-space is isomorphism invariant}, if $\tA$ and $\tB$ are isomorphic, we have $\dim(\cC_\tA)=\dim(\cC_\tB)=c$ for some constant $c$. Thus, enumerating all ordered basis of $\cC_\tB$ can be done in time $q^{3^2}\cdot\poly(n,q)=\poly(n,q)$. 

\paragraph{Reducing matrix tuple equivalence to matrix tuple conjugacy.} 
Now we deal with the second dominated running time; namely, bounding the number of $L,R\in\GL(n,q)$ such that
\begin{equation}\label{eq: matrix tuple equivalence}
    \trp{L}A_1R=B_1, \trp{L}A_2R=B_2, \trp{L}A_3R=B_3.
\end{equation}

Note that when $\tA$ is chosen uniformly at random with $\dim(\cL_\tA)=3$, $\cC_\tA$ is also chosen uniformly at random from all dimension-$3$ matrix codes. In particular, with probability $c_n(q)\approx 1$ (as defined in Eq.~(\ref{eq: ratio of GLnq and Mnq})), we can assume that $A_1$ is invertible. In this case, $B_1$ is also invertible, and $L=B_1R^{-1}A_1^{-1}$. Then we have 
\begin{equation}\label{eq: matrix tuple conjugacy}
    R^{-1}A_1^{-1}A_2R=B_1^{-1}B_2,~R^{-1}A_1^{-1}A_3R=B_1^{-1}B_3.
\end{equation}
Thus, we can test matrix tuple equivalence between $(A_1,A_2,A_3)$ and $(B_1,B_2,B_3)$ by testing matrix tuple conjugacy between $(A_1^{-1}A_2,A_1^{-1}A_3)$ and $(B_1^{-1}B_2,B_1^{-1}B_3)$, with the condition that $A_1$ being invertible. This can be done in $\poly(n,\log q)$ time (cf.~Theorem~\ref{thm:module-iso}).

To bound the number of solutions for matrix tuple conjugacy, we again resort to the result of Neumann and Praeger. Since $\cC_\tA$ is a random dimension-$3$ matrix codes, 
$(A_1^{-1}A_2,A_1^{-1}A_3)$ is a random matrix tuple in $\M(n,q)^{2}$. Then by Corollary~\ref{cor:conj-bound}, the choices of $R\in\GL(n,q)$ satisfying Eq.~(\ref{eq: matrix tuple conjugacy}) is at most $q-1$.

\section{Random matrices over finite field with eigenvalue and trace conditions}\label{sec:random matrix}
In this section, we prove the results regarding random matrices over finite fields required for our algorithm analysis. The properties of random matrices that interest us are as follows. 
\begin{itemize}
    \item Eigenvalue condition: matrices with a unique eigenvalue of algebraic multiplicity $1$. See Theorem~\ref{thm:eigen-condition} and Corollary~\ref{cor: probability of unique nonzero eigenvalue} and their proofs in Section~\ref{subsec:eigen-condition}; these results support Theorem~\ref{thm:eigen-condition-informal}.
    \item Self-dual condition: matrices $A$ with $\Tr(A^2)=0$. See Theorem~\ref{thm:random matrix being self-dual} and its proof in Section~\ref{subsec: random self-dual}; this result is required to prove Theorem~\ref{thm:main-technical}.
    \item Both eigenvalue and self-dual conditions: matrices $A$ with a unique eigenvalue of algebraic multiplicity $1$, conditioned on $\Tr(A^2)=0$. See Theorem~\ref{thm:main-technical} and its proof in Section~\ref{subsec: random main}; this result supports Theorem~\ref{thm:main-technical-informal}.
\end{itemize}

\subsection{Random matrices with a unique eigenvalue of algebraic multiplicity 1}\label{subsec:eigen-condition}
We shall estimate the probability of a random matrix $A\in\M(n,q)$ with only $1$ eigenvalues in $\F_q$ with algebraic multiplicity $1$. But before that, we present a generalization of the counting method proposed by Neumann and Praeger~\cite{NEUMANN_PRAEGER_1998} to count matrices over finite fields with prefixed algebraic multiplicities.

\paragraph{Matrices over $\F_q$ with prefixed algebraic multiplicities.} 
Let $m_A:\F_q\to\N$ be the function that maps $\F_q$-eigenvalues of $A$ to its algebraic multiplicity. It is clear that $m_A$ is a vector over $\N$ of length $q$. Denote $D(m_A)=\sum_{\lambda\in\F_q} m_A(\lambda)$ and $r_A=n-D(m_A)$. Let $c_A(t)=\det(t\idmat-A)=f_0(t)\prod_{\lambda\in\F_q}(t-\lambda)^{m_A(\lambda)}$ be the characteristic polynomial of $A$, where $f_0(t)$ is a degree-$r_A$ polynomial over $\F_q$ satisfying $f_0(t)\neq 0$ for any $t\in\F_q$. The set of $n\times n$ matrices over $\F_q$ with respect to a given function $m:\F_q\to\N$ is denoted as
\[
C_m(n,q):=\{A\in \M(n,q):~c_A(t)=f_0(t)\prod_{\lambda\in\F_q}(t-\lambda)^{m(\lambda)},~f_0(t)\neq 0~\forall\ t\in \F_q\}.
\]

We call a matrix $A\in\M(n,q)$ $\lambda$-potent, if $c_A(t)=(t-\lambda)^n$ for some $\lambda\in\F_q$. Equivalently, $A$ is conjugated with $\lambda \idmat_n+N$ for some nilpotent matrix $N$. $1$-potent matrices are more widely known as unipotent matrices. Note that the number of $\lambda$-potent matrices for any fixed $\lambda$ in $\M(n,q)$ is $q^{n^2-n}$~\cite{MR96677} (see also~\cite[Thm. 15.1]{MR230728}).

Neumann and Praeger studied the problem of counting eigenvalue-free matrices over $\F_q$~\cite{NEUMANN_PRAEGER_1998} (i.e.~$m(\lambda)=0$ for any $\lambda\in\F_q$) via a generating function method to obtain asymptotic formulas and quantitative bounds for such counting problems. 

We shall adapt these results to study $|C_m(n,q)|$. For this, we recall some notation and results in~\cite{NEUMANN_PRAEGER_1998} (see also~\cite{Ful02}). We denote 
\begin{equation}\label{eq: eigenvalue-free}
v(\GL,n,q):=\frac{|\{\text{eigenvalue-free matrices in}~\GL(n,q)\}|}{|\GL(n,q)|}=\frac{|C_0(n,q)|}{|\GL(n,q)|}
\end{equation}
be the proportion of $n\times n$ eigenvalue-free matrices over $\F_q$ in $\GL(n,q)$, and denote
\begin{equation}\label{eq: unipotent}
u(\GL,n,q):=\frac{|\{\text{unipotent matrices in}~\GL(n,q)\}|}{|\GL(n,q)|}     
\end{equation}
be the proportion of $n\times n$ unipotent matrices over $\F_q$ in $\GL(n,q)$. For $n=0$, we let $u(\GL,0,q):=v(\GL,0,q)=1$. The generating functions of $v(\GL,n,q)$ and $u(\GL,n,q)$ are denoted as 
\[
V(\GL,q,z)=\sum_{n=1}^\infty v(\GL,n,q)z^n~\text{and}~U(\GL,q,z)=\sum_{n=1}^\infty u(\GL,n,q)z^n,
\]
respectively. We summarise some useful results on the expressions, relations, and quantitative bounds between $V(\GL,q,z)$ and $U(\GL,q,z)$ as follows:
\begin{proposition}\label{lem: gen functions of v and u}
Recall that $c_n(q)=\prod_{i=1}^n(1-q^{-i})$ and $c(q)=\lim_{n\to\infty}c_n(q)$ as in Eq.~(\ref{eq: ratio of GLnq and Mnq}), we have
\begin{itemize}
    \item $V(\GL,q,z)U(\GL,q,z)^{q-1}=(1-z)^{-1}$~\cite[Theorem 4.1]{NEUMANN_PRAEGER_1998};
    \item $u(\GL,n,q)=\frac{1}{c_n(q)\cdot q^n}$~\cite[Theorem 4.2]{NEUMANN_PRAEGER_1998};
    \item Let $v(\GL,\infty,q):=\lim_{n\to\infty} v(\GL,n,q)$. Then $v(\GL,\infty,q)=c(q)^{q-1}$ and 
    \[
        v(\GL,\infty,q)-v(\GL,n,q)=(-1)^{n+1}\varepsilon_n,~0<\varepsilon_n<\frac{1}{2}16^{q-1}q^{-\frac{(n+1)(n+q)}{2(q-1)}}
    \]
    whenever $n>5(q-1)^2$~\cite[Theorem 4.3]{NEUMANN_PRAEGER_1998}.
\end{itemize}
\end{proposition}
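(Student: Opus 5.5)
The statement collects three facts from \cite{NEUMANN_PRAEGER_1998}. I would prove them in order, since each one feeds the next: the first is a combinatorial identity from the primary decomposition, the second is Steinberg's count of unipotent matrices, and the third is a coefficient analysis of the explicit generating function that the first two produce.

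\textbf{First item.} For $A\in\GL(n,q)$, the primary decomposition splits $\F_q^n$ uniquely as
\[
\F_q^n=E_0\oplus\bigoplus_{\lambda\in\F_q^\times}E_\lambda .
\]
Here $A|_{E_\lambda}$ is $\lambda$-potent and $A|_{E_0}$ is eigenvalue-free. Fix the dimensions $r=\dim E_0$ and $n_\lambda=\dim E_\lambda$. The number of ordered direct-sum decompositions with these dimensions is
\[
\frac{|\GL(n,q)|}{|\GL(r,q)|\prod_\lambda|\GL(n_\lambda,q)|}.
\]
Hence the number of matrices with this decomposition type equals this quantity times $|C_0(r,q)|$ times $\prod_\lambda\#\{\lambda\text{-potent in }\GL(n_\lambda,q)\}$.

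Dividing by $|\GL(n,q)|$ gives the proportion $v(\GL,r,q)\prod_\lambda u_\lambda(n_\lambda)$, where $u_\lambda$ is the proportion of $\lambda$-potent matrices. The scaling $A\mapsto\lambda^{-1}A$ is a bijection of $\GL(n_\lambda,q)$ sending $\lambda$-potent matrices to unipotent ones, so $u_\lambda=u(\GL,\cdot,q)$. Summing over all compositions $r+\sum_\lambda n_\lambda=n$ gives total proportion $1$. Comparing coefficients of $z^n$ (with constant terms $1$) yields
\[
V(\GL,q,z)\,U(\GL,q,z)^{q-1}=\sum_{n\ge0}z^n=(1-z)^{-1}.
\]

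\textbf{Second item.} Steinberg's theorem \cite{MR96677} says there are $q^{n^2-n}$ unipotent matrices. Dividing by $|\GL(n,q)|=q^{n^2}c_n(q)$ from Proposition~\ref{prop:glnq and mnq} gives $u(\GL,n,q)=1/(c_n(q)q^n)$.

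\textbf{Third item.} This is the analytic part and the main obstacle. By Euler's identity $\sum_{n}x^n/\prod_{i\le n}(1-q^{-i})=\prod_{i\ge0}(1-xq^{-i})^{-1}$ with $x=z/q$, we get
\[
U=\prod_{i\ge1}(1-zq^{-i})^{-1},
\qquad
V=(1-z)^{-1}P(z),\quad P(z):=\prod_{i\ge1}(1-zq^{-i})^{q-1}.
\]
Here $P$ is entire, with Taylor coefficients $p_k$. Then $v(\GL,n,q)=\sum_{k\le n}p_k$, so $v(\GL,\infty,q)=P(1)=c(q)^{q-1}$, and $v(\GL,\infty,q)-v(\GL,n,q)=\sum_{k>n}p_k$.

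To control the tail I would use Euler's second identity,
\[
\prod_{i\ge1}(1-zq^{-i})=\sum_k\frac{(-1)^kq^{-k(k+1)/2}z^k}{\prod_{j\le k}(1-q^{-j})},
\]
and raise it to the power $q-1$.
\begin{itemize}
\item \emph{Signs.} Every product of $q-1$ terms contributing to $z^k$ has sign $(-1)^k$, so $p_k=(-1)^k|p_k|$.
\item \emph{Size.} The exponent $\sum_i k_i(k_i+1)/2$ with $\sum_i k_i=k$ is minimized by an even split, which gives $|p_k|\lesssim q^{-k(k+q-1)/(2(q-1))}$. The number of compositions of $k$ into $q-1$ parts and the denominators $\prod_{j\le k}(1-q^{-j})^{-1}\le 4$ per factor contribute the factor $16^{q-1}$ (or similar).
\item \emph{Tail.} Once $n>5(q-1)^2$, the ratio $|p_{k+1}/p_k|$ is far below $1$. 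The tail is then an alternating series with decreasing terms, so it has the sign of $p_{n+1}$, giving the factor $(-1)^{n+1}$, and it is bounded by $|p_{n+1}|$, which yields $\varepsilon_n<\tfrac12 16^{q-1}q^{-(n+1)(n+q)/(2(q-1))}$.
\end{itemize}
The delicate step is proving monotonic decay of $|p_k|$ with explicit constants. I would do this by crudely comparing the dominant balanced compositions against all the others.
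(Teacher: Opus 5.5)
\textbf{Verdict.} The paper gives no proof of this proposition; it cites Neumann--Praeger. Your reconstruction follows the natural route, and the first two items are correct. The primary-decomposition count gives $\sum_{r+\sum_\lambda n_\lambda=n}v(\GL,r,q)\prod_{\lambda\in\F_q^\times}u(\GL,n_\lambda,q)=1$. This is the identity once the constant terms $u(\GL,0,q)=v(\GL,0,q)=1$ are included in $U$ and $V$, as you do; the paper's sums should start at $n=0$, not $n=1$. The unipotent count $q^{n^2-n}$ gives the second item; note that \cite{MR96677} is Fine--Herstein's nilpotent count, transported via $N\mapsto\idmat+N$. The product formula $V=(1-z)^{-1}P(z)$, the limit $P(1)=c(q)^{q-1}$, and the sign pattern $p_k=(-1)^k|p_k|$ are also right. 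The gap is in the quantitative part of the third item, which is where all the content of the cited Theorem 4.3 lies.

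\textbf{The size bound.} As you justify it, the size bound fails. The number of compositions of $k$ into $q-1$ parts is $\binom{k+q-2}{q-2}$, which grows like $k^{q-2}$. Bounding every term by the balanced one therefore only gives $|p_k|\le c(q)^{-(q-1)}\binom{k+q-2}{q-2}q^{-k(k+q-1)/(2(q-1))}$. That is a polynomial loss in $k$, the same loss the paper later incurs for $h_n$, not a constant like $16^{q-1}$. You must use that unbalanced compositions are penalized. Writing $k_i=\frac{k}{q-1}+d_i$, one has $\sum_i\frac{k_i(k_i+1)}{2}=\frac{k(k+q-1)}{2(q-1)}+\frac12\sum_i d_i^2$. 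Dropping the constraint $\sum_i d_i=0$ bounds the composition sum by $\bigl(\sup_\theta\sum_{m\in\mathbb{Z}}q^{-(m-\theta)^2/2}\bigr)^{q-1}<3.02^{q-1}$. Together with $1/c_{k_i}(q)\le 1/c(q)<3.5$, this gives $|p_k|<\frac12 16^{q-1}q^{-k(k+q-1)/(2(q-1))}$; for $q=2$ there is a single part and the constant is $3.5<8$. Taking $k=n+1$ gives the stated exponent.

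\textbf{The tail.} The claim that $|p_{k+1}/p_k|$ is far below $1$ is asserted, not proved. A ratio bound via your comparison also needs a lower bound on $|p_k|$, and that is where a threshold like $n>5(q-1)^2$ would enter. A cleaner route avoids this. The truncations $\prod_{i\le M}(1+zq^{-i})^{q-1}$ have only negative real roots, so by Newton's inequalities their coefficients are log-concave. Letting $M\to\infty$ gives $|p_k|^2\ge|p_{k-1}||p_{k+1}|$. Since $\log P(-z)=z-\frac{z^2}{2(q+1)}+O(z^3)$, you get $|p_0|=|p_1|=1$ and $|p_2|=\frac{q}{2(q+1)}$. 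Hence $|p_{k+1}|<\frac12|p_k|$ for all $k\ge1$. The alternating-series estimate then gives $\frac12|p_{n+1}|<\varepsilon_n<|p_{n+1}|$ for every $n\ge0$. Combined with the corrected size bound, this proves the third item, and in fact without needing $n>5(q-1)^2$.
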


Note that the polynomial factorization of $c_A(t)=f_0(t)\prod_{\lambda\in\F_q}(t-\lambda)^{m_A(\lambda)}$ corresponds to the primary decomposition of $\F_q^n=V_0\oplus\bigoplus_{\lambda\in\F_q} V_\lambda$, where $\dim(V_0)=r_A$ and $\dim(V_\lambda)=m_A(\lambda)$ for any $\lambda\in\F_q$. In particular, $A$ is conjugated with a block diagonal matrix, where $A_0$ acts on $V_0$ as an eigenvalue-free matrix and $A_\lambda$ acts on $V_\lambda$ as a $\lambda$-potent matrix.

This block-diagonal form leads to the following strategy to count the matrices in $C_m(n,q)$ for any given function $m:\F_q\to\N$: We go over all the direct-sum decompositions of $\F_q^n=V_0\oplus\bigoplus_{\lambda\in\F_q} V_\lambda$, where $\dim(V_0)=r=n-D(m)$ and $\dim(V_\lambda)=m(\lambda)$ for any $\lambda\in\F_q$. Then we assign each $V_0$ an eigenvalue-free matrix $A_0$ and each $V_\lambda$ an $\lambda$-potent matrix for each $\lambda\in\F_q$. This results in the following expression of $|C_m(n,q)|$:
\begin{equation}\label{eq:expression of Cmnq}
\begin{split}
|C_m(n,q)|&=|\GL(n,q)|\cdot v(\GL,r,q)\cdot\prod_{\lambda\in\F_q}\frac{q^{m(\lambda)^2-m(\lambda)}}{|\GL(m(\lambda),q)|}\\
&=|\GL(n,q)|\cdot v(\GL,r,q)\cdot\prod_{\lambda\in\F_q}\frac{q^{-m(\lambda)}}{|c_{m(\lambda)}(q)|},
\end{split}
\end{equation}
where the second inequality uses the convention that $|\GL(m(\lambda),q)|/\M(m(\lambda),q)=c_{m(\lambda)}(q)$ as defined in Eq.~(\ref{eq: ratio of GLnq and Mnq}). With Eq.~(\ref{eq:expression of Cmnq}), it is clear that we can express the probability of a random matrix $A\in\M(n,q)$ whose $\F_q$-eigenvalues' algebraic multiplicities are given by a fixed function $m:\F_q\to\N$ as
\[
\alpha_m(n,q):=\frac{|C_m(n,q)|}{|\M(n,q)|}=c_n(q)\cdot v(\GL,r,q)\cdot\prod_{\lambda\in\F_q}\frac{q^{-m(\lambda)}}{|c_{m(\lambda)}(q)|}.
\]

In particular, if the function $m:\F_q\to\N$ satisfies that $D(m)=O(1)$, we have the following:
\begin{theorem}
    Let $\alpha_m(n,q)$ be defined as above with respect to some function $m:\F_q\to\N$ satisfying $D(m)=O(1)$. Then we have
    \[
    \alpha_m(\infty,q):=\lim_{n\to\infty}(n,q)=c(q)^q\cdot\prod_{\lambda\in\F_q}\frac{q^{-m(\lambda)}}{|c_{m(\lambda)}(q)|}.
    \]
\end{theorem}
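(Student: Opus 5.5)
The plan is to pass to the limit $n\to\infty$ in the closed formula already established,
\[
\alpha_m(n,q)=c_n(q)\cdot v(\GL,r,q)\cdot\prod_{\lambda\in\F_q}\frac{q^{-m(\lambda)}}{c_{m(\lambda)}(q)},\qquad r=n-D(m).
\]
This formula comes from Eq.~(\ref{eq:expression of Cmnq}) after dividing by $|\M(n,q)|=|\GL(n,q)|/c_n(q)$. The function $m$ is fixed and $D(m)=O(1)$, so the product $\prod_{\lambda}q^{-m(\lambda)}/c_{m(\lambda)}(q)$ does not depend on $n$ at all. Each factor $c_{m(\lambda)}(q)$ is a finite product of positive numbers, so the product is a well-defined positive constant; when $m(\lambda)=0$ the factor is $1$, with $c_0(q)=1$ as an empty product. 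The limit therefore factors as this constant times $\lim_n c_n(q)\cdot\lim_n v(\GL,n-D(m),q)$, provided both limits exist.

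For the two limits, Proposition~\ref{prop:glnq and mnq} gives $c_n(q)\to c(q)$, with error at most $4q^{-(n+1)}$ by Eq.~(\ref{eq:convergence bound of cnq}). Since $D(m)$ is fixed, $r=n-D(m)\to\infty$. The third item of Proposition~\ref{lem: gen functions of v and u} then gives $v(\GL,r,q)\to v(\GL,\infty,q)=c(q)^{q-1}$. Multiplying, we get $c(q)\cdot c(q)^{q-1}=c(q)^q$, which yields exactly the stated value $\alpha_m(\infty,q)=c(q)^q\prod_\lambda q^{-m(\lambda)}/c_{m(\lambda)}(q)$. Here the absolute value bars in the statement are harmless because $c_k(q)>0$.

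The only point needing care is the justification of Eq.~(\ref{eq:expression of Cmnq}) itself, which I take as given from the preceding discussion. That discussion counts the ordered direct-sum decompositions $V_0\oplus\bigoplus_\lambda V_\lambda$ via the orbit–stabilizer count $|\GL(n,q)|/(|\GL(r,q)|\prod_\lambda|\GL(m(\lambda),q)|)$. It then multiplies by $v(\GL,r,q)|\GL(r,q)|$ eigenvalue-free blocks and $q^{m(\lambda)^2-m(\lambda)}$ $\lambda$-potent blocks, using the uniqueness of the primary decomposition.

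I would also record a quantitative rate, since later corollaries need an error term. The bound $|c(q)-c_n(q)|\le 4q^{-(n+1)}$ together with $0<\varepsilon_r<\tfrac12 16^{q-1}q^{-(r+1)(r+q)/(2(q-1))}$ for $r>5(q-1)^2$ gives $|\alpha_m(n,q)-\alpha_m(\infty,q)|=O_q(q^{-n})$, with constant $\prod_\lambda q^{-m(\lambda)}/c_{m(\lambda)}(q)$. No step is a real obstacle. The only mild subtlety is that the error bound on $v$ requires $n-D(m)>5(q-1)^2$, but convergence alone needs nothing beyond the cited limit.
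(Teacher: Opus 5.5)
Your proposal is correct and follows the paper's own argument. You start from the closed form $\alpha_m(n,q)=c_n(q)\,v(\GL,r,q)\prod_{\lambda}q^{-m(\lambda)}/c_{m(\lambda)}(q)$, note that the product does not depend on $n$, and combine $c_n(q)\to c(q)$ with $v(\GL,r,q)\to c(q)^{q-1}$ as $r=n-D(m)\to\infty$; the paper does exactly this. Your extra quantitative $O_q(q^{-n})$ remark goes beyond what the statement asks for.
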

\begin{proof}
    Since $D(m)=O(1)$, $m(\lambda)=O(1)$ for any $\lambda\in\F_q$ and $r\to\infty$ as $n\to \infty$. Thus, $\prod_{\lambda\in\F_q}\frac{q^{-m(\lambda)}}{|c_{m(\lambda)}(q)|}=O(1)$ and $\lim_{n\to\infty}c_n(q)\cdot v(\GL,r,q)=c(q)\cdot v(\GL,\infty,q)=c(q)^q$.
\end{proof}

Recall that we aim to compute the probability of a random matrix $A$ that has only $1$ eigenvalue in $\F_q$ with algebraic multiplicity $1$. These matrices correspond to functions $m:\F_q\to\N$ whose vector forms in $\N^q$ are precisely the standard basis $\stdb_0,\dots,\stdb_{q-1}\in\N^q$. Let 
\[
C_1(n,q):=\{A\in\M(n,q):~A~\text{has only}~1~\text{eigenvalue in}~\F_q~\text{with algebraic multiplicity}~1\}.
\]
It is clear that $C_1(n,q)=\bigcup_{i=0}^{q-1}C_{\stdb_i}(n,q)$. Then we have the following:

\begin{theorem}\label{thm:eigen-condition}
Let $\alpha(n,q):=\frac{|C_1(n,q)|}{|\M(n,q)|}$ be the probability of a random matrix $A\in\M(n,q)$ which has only $1$ eigenvalue in  $\F_q$ with algebraic multiplicity $1$. Then we have
\[
\alpha_1(\infty,q):=\lim_{n \to \infty}\alpha_1(n,q)=\frac{q\cdot c(q)^{q}}{q-1}\geq e^{-23/9}.
\]
Moreover, for any $n>5(q-1)^2$, we have:
\begin{equation}\label{eq:~convergence bound alpha}
|\alpha(\infty,q)-\alpha(n+1,q)|\leq \frac{q}{q-1}(\frac{1}{2}16^{q-1}q^{-(n+1)(n+q)/(2(q-1))}+4q^{-(n+1)}).    
\end{equation}
\end{theorem}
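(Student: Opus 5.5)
The plan is to specialise the exact formula for $\alpha_m(n,q)$ derived just before the statement to the profiles $m=\stdb_i$, and then sum over $i$. Since $C_1(n,q)=\bigcup_{i=0}^{q-1}C_{\stdb_i}(n,q)$ is a disjoint union (the profiles differ), we have $\alpha(n,q)=\sum_{i}\alpha_{\stdb_i}(n,q)$. For $m=\stdb_i$ we have $D(m)=1$ and $r=n-1$. The factor $\prod_\lambda q^{-m(\lambda)}/c_{m(\lambda)}(q)$ equals $q^{-1}/c_1(q)=q^{-1}/(1-q^{-1})=1/(q-1)$, because $c_0(q)=1$ is the empty product. Hence
\[
\alpha(n,q)=q\cdot\frac{1}{q-1}\cdot c_n(q)\, v(\GL,n-1,q),\qquad\text{so}\qquad \alpha(n+1,q)=\frac{q}{q-1}\,c_{n+1}(q)\,v(\GL,n,q).
\]
Letting $n\to\infty$ and using $c_n(q)\to c(q)$ together with $v(\GL,\infty,q)=c(q)^{q-1}$ from Proposition~\ref{lem: gen functions of v and u} gives $\alpha(\infty,q)=\frac{q\,c(q)^q}{q-1}$. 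The lower bound $e^{-23/9}$ is exactly the second inequality of Proposition~\ref{prop: cqq}.

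For the quantitative bound, write $c(q)^{q}=c(q)\cdot v(\GL,\infty,q)$ and split the difference with a telescoping step:
\[
c(q)v(\GL,\infty,q)-c_{n+1}(q)v(\GL,n,q)=c(q)\bigl(v(\GL,\infty,q)-v(\GL,n,q)\bigr)+v(\GL,n,q)\bigl(c(q)-c_{n+1}(q)\bigr).
\]
For the first term, use $c(q)\le 1$ and the Neumann--Praeger error bound $\varepsilon_n<\frac12 16^{q-1}q^{-(n+1)(n+q)/(2(q-1))}$, which is valid for $n>5(q-1)^2$. For the second term, use $v(\GL,n,q)\le 1$ (it is a proportion) together with Eq.~(\ref{eq:convergence bound of cnq}). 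Applied at index $n+1$, that equation gives $|c(q)-c_{n+1}(q)|\le 4q^{-(n+2)}\le 4q^{-(n+1)}$. Multiplying by the prefactor $q/(q-1)$ yields Eq.~(\ref{eq:~convergence bound alpha}).

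The only real subtlety is bookkeeping. One must check that the counting formula Eq.~(\ref{eq:expression of Cmnq}) correctly uses $v(\GL,r,q)$ with $r=n-D(m)$, and one must track the index shift between $n$ and $n+1$, so that the Neumann--Praeger error is applied at index $n$ and the $c$-error at index $n+1$. Everything else is a direct substitution into results already stated.
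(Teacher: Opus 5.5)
Your proposal is correct and follows the paper's proof essentially line by line. You specialise the counting formula to the profiles $\stdb_i$ to get $\alpha(n,q)=\frac{q}{q-1}c_n(q)\,v(\GL,n-1,q)$, pass to the limit via Proposition~\ref{lem: gen functions of v and u} and Proposition~\ref{prop: cqq}, and split the error with the same telescoping identity. Your index bookkeeping is actually slightly cleaner than the paper's, which writes $c_n(q)$ where $c_{n+1}(q)$ belongs in $\alpha(n+1,q)$; as you note, the correct index only improves $4q^{-(n+1)}$ to $4q^{-(n+2)}$, so the stated bound is unaffected.
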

\begin{proof}
Note that $\alpha_{\stdb_i}(n,q)=c_n(q)v(\GL,n-1,q)\frac{q^{-1}}{1-q^{-1}}$ for any $i=0,\dots,q-1$. Then we have
\begin{equation}\label{eq: alpha n q}
    \alpha(n,q)=\sum_{i=0}^{q-1}\alpha_{\stdb_i}(n,q)=\frac{q}{q-1}v(\GL,n-1,q)c_n(q).
\end{equation}
Thus,
\[
\alpha(\infty,q)=\lim_{n \to \infty}\alpha(n,q)=\frac{q\cdot c(q)^{q}}{q-1},    
\] 
where we use $v(\GL,n,q)\to v(\GL,\infty,q)=c(q)^{q-1}$ as in Proposition~\ref{lem: gen functions of v and u} and $c_n(q)\to c(q)$ as in Proposition~\ref{prop:glnq and mnq}.
For the convergence rate, note that
\[
\begin{split}
&|\alpha(\infty,q)-\alpha(n+1,q)|\\
=&\frac{q}{q-1}|v(\GL,\infty,q)\prod_{i=1}^{\infty}(1-q^{-i})- v(\GL,n,q)\prod_{i=1}^{n}(1-q^{-i})|\\
=&\frac{q}{q-1}|(v(\GL,\infty,q)-v(\GL,n,q))\prod_{i=1}^{\infty}(1-q^{-i})+ v(\GL,n,q)(\prod_{i=1}^{\infty}(1-q^{-i})-\prod_{i=1}^{n}(1-q^{-i}) )|\\
\leq&\frac{q}{q-1}\left(|(v(\GL,\infty,q)-v(\GL,n,q))\prod_{i=1}^{\infty}(1-q^{-i})|+|v(\GL,n,q)(\prod_{i=1}^{\infty}(1-q^{-i})-\prod_{i=1}^{n}(1-q^{-i}) )|\right).
\end{split}
\]
By the quantitative bound of $v(\GL,\infty,q)$ in Proposition~\ref{lem: gen functions of v and u}, we have
\[
|v(\GL,\infty,q)-v(\GL,n,q)|\prod_{i=1}^{\infty}(1-q^{-i})\le\frac{1}{2}16^{q-1}q^{-(n+1)(n+q)/(2(q-1))}    
\]
and
    \[
    |v(\GL,n,q)(\prod_{i=1}^{\infty}(1-q^{-i})-\prod_{i=1}^{n}(1-q^{-i}) )|\leq 4q^{-(n+1)},
    \]
    where we use $v(\GL,n,q)\leq 1$ and the convergence bound of $c_n(q)$ in Eq.~(\ref{eq:convergence bound of cnq}). By Proposition~\ref{prop: cqq}, we know that $\alpha(\infty,q)> e^{-23/9}$, which concludes the proof.
\end{proof}

Theorem~\ref{thm:eigen-condition} can be easily generalized to  estimate the probability of a random matrix $A\in\M(n,q)$ admitting a unique \emph{non-zero} eigenvalue in $\F_q$ with algebraic multiplicity $1$. We simply replace the summation of standard basis of $\N^q$ in Eq.~(\ref{eq: alpha n q}). Let 
\[
C_1^*(n,q):=\left \{ A\in \M(n,q):~A~\text{has only}~1~\text{eigenvalue in}~\F^*_q~\text{with algebraic multiplicity}~1 \right\}. 
\]
We have the following, which is a more precise version of Theorem~\ref{thm:eigen-condition-informal}.
\begin{corollary}[Quantitative version of Theorem~\ref{thm:eigen-condition-informal}]\label{cor: probability of unique nonzero eigenvalue}
    Let $\alpha^*(n,q):=\frac{|C_1^*(n,q)|}{|\M(n,q)|}$ be the probability of a random matrix $A\in\M(n,q)$ admitting a unique nonzero eigenvalue in $\F_q$ with algebraic multiplicity $1$. Then we have
\[
\alpha^*(\infty,q):=\lim_{n \to \infty}\alpha^*(n,q)=c(q)^{q}> e^{-23/9}.
\]
Moreover, for any $n>5(q-1)^2$, we have:
\begin{equation}\label{eq:~convergence bound alpha^*}
|\alpha^*(\infty,q)-\alpha^*(n+1,q)|\leq \frac{1}{2}16^{q-1}q^{-(n+1)(n+q)/(2(q-1))}+4q^{-(n+1)}.    
\end{equation}
\end{corollary}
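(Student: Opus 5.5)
Our plan is to copy the proof of Theorem~\ref{thm:eigen-condition}, summing only over the $q-1$ nonzero profiles. Interpret $C_1^*(n,q)$ as the disjoint union $\bigcup_{\lambda\in\F_q^\times}C_{\stdb_\lambda}(n,q)$, where $\stdb_\lambda$ is the profile with $m(\lambda)=1$ and $m(\mu)=0$ for $\mu\neq\lambda$. This reading means that $A$ has eigenvalue $\lambda\neq0$ with algebraic multiplicity $1$ and no other eigenvalue in $\F_q$; in particular $0$ is not an eigenvalue, which is what the algorithm needs.

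By Eq.~(\ref{eq:expression of Cmnq}), each such profile gives
\[
\alpha_{\stdb_\lambda}(n,q)=c_n(q)\,v(\GL,n-1,q)\,\frac{q^{-1}}{1-q^{-1}} .
\]
Summing over the $q-1$ nonzero $\lambda$ yields
\[
\alpha^*(n,q)=(q-1)\cdot\frac{1}{q-1}\,c_n(q)\,v(\GL,n-1,q)=c_n(q)\,v(\GL,n-1,q).
\]

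For the limit, Proposition~\ref{lem: gen functions of v and u} gives $v(\GL,n-1,q)\to c(q)^{q-1}$, and Proposition~\ref{prop:glnq and mnq} gives $c_n(q)\to c(q)$. Hence $\alpha^*(\infty,q)=c(q)^q$. The lower bound $c(q)^q\ge e^{-23/9}$ is the first inequality of Proposition~\ref{prop: cqq}. The strict inequality should follow because $\exp(-1-\tfrac{3}{2q}-\tfrac{29}{9q^2})$ is strictly larger than $e^{-23/9}$ for $q>2$. For $q=2$ we would instead need to check that one of the discarded terms, for example the $j\ge3$ tail, is strictly smaller than the bound used for it.

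For the convergence rate we use $\alpha^*(n+1,q)=c_{n+1}(q)v(\GL,n,q)$. The difference from $c(q)v(\GL,\infty,q)$ splits by the triangle inequality into
\[
|v(\GL,\infty,q)-v(\GL,n,q)|\,c(q)\quad\text{and}\quad v(\GL,n,q)\,|c(q)-c_{n+1}(q)|,
\]
exactly as in the proof of Theorem~\ref{thm:eigen-condition} but without the factor $\tfrac{q}{q-1}$. The first term is bounded by $\varepsilon_n$ from Proposition~\ref{lem: gen functions of v and u}, valid for $n>5(q-1)^2$, using $c(q)\le1$. The second term is at most $4q^{-(n+2)}\le4q^{-(n+1)}$ by Eq.~(\ref{eq:convergence bound of cnq}) and $v\le1$.

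The main obstacle is essentially bookkeeping. We must keep the index shift straight, namely that $r=n-1$ pairs $v(\GL,n,q)$ with $c_{n+1}(q)$. We must also justify the strict inequality: if that fails, we state $\ge$, which is all the algorithm requires.
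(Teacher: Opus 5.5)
Your proposal is correct and follows the paper's proof. The paper likewise obtains $\alpha^*(n,q)=c_n(q)\,v(\GL,n-1,q)$ by summing the $q-1$ nonzero profiles, and then reuses the triangle-inequality argument of Theorem~\ref{thm:eigen-condition} for the rate. You are slightly more careful than the paper on two points: you correctly pair $v(\GL,n,q)$ with $c_{n+1}(q)$, and you address strictness at $q=2$, where it does hold because the $j\ge 3$ tail bound in Proposition~\ref{prop: cqq} is strict ($1/(1-2^{-3})=8/7<4/3$).
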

\begin{proof}
    We have 
    \[
    \alpha^*(n,q)=\frac{|C_1^*(n,q)|}{|\M(n,q)|}=v(\GL,n-1,q)c_n(q).
    \]
    Thus, $\alpha^*(\infty,q)=c(q)^q$, and the error bound follows identically with the proof stretagy in Theorem~\ref{thm:eigen-condition}.
\end{proof}

\subsection{Random matrices that are self-dual}\label{subsec: random self-dual}
We then estimate the probability of a random matrix being self-dual. We have the following.
\begin{theorem}\label{thm:random matrix being self-dual}
    Let $\sigma(n,q)$ be the probability of a random matrix $A\in\M(n,q)$ being self-dual, i.e.~$\Tr(A^2)=0$. Then we have
    \[
        \sigma(\infty,q):=\lim_{n\to\infty}\sigma(n,q)=\frac{1}{q}.
    \]
    Moreover, we have 
    \begin{equation}\label{eq: convergence bound sigma}
        \left|\sigma(\infty,q)-\sigma(n,q)\right|\leq \Sigma(n,q):= (q-1)q^{-\frac{n^2}{2}-1}.
    \end{equation}
\end{theorem}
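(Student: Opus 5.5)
The plan is to compute $\sigma(n,q)$ exactly, by viewing $\Tr(A^2)$ as a quadratic form in the $n^2$ entries of $A$ and applying the additive character $\psi$ and its twists $\psi_\lambda$ introduced in the overview. Write $A=(a_{i,j})$. Then
\[
\Tr(A^2)=\sum_{i,j}a_{i,j}a_{j,i}=\sum_{i=1}^n a_{i,i}^2+2\sum_{i<j}a_{i,j}a_{j,i}.
\]
So $\Tr(A^2)$ is a quadratic form $Q$ on $\F_q^{n^2}$. It splits as an orthogonal sum of the diagonal form $\sum_i a_{i,i}^2$ in $n$ variables and $k:=n(n-1)/2$ binary forms $2xy$ in the pairs $(a_{i,j},a_{j,i})$, $i<j$. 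I will treat characteristic $2$ and odd characteristic separately.

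\emph{Characteristic $2$.} The cross terms $2a_{i,j}a_{j,i}$ vanish. Since squaring is additive, $\sum_i a_{i,i}^2=(\Tr A)^2$, and so $\Tr(A^2)=0$ holds if and only if $\Tr(A)=0$. The map $A\mapsto\Tr(A)$ is a nonzero linear functional, so $\sigma(n,q)=1/q$ exactly. The bound \eqref{eq: convergence bound sigma} then holds trivially.

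\emph{Odd characteristic.} I will use the orthogonality relation
\[
\#\{x\in\F_q^{n^2}: Q(x)=0\}=\frac1q\sum_{\lambda\in\F_q}\sum_{x}\psi(\lambda Q(x)).
\]
The $\lambda=0$ term gives $q^{n^2-1}$, which is the main term $\sigma(\infty,q)=1/q$. For $\lambda\neq0$, the inner sum factors over the orthogonal summands.
\begin{itemize}
\item Each hyperbolic block contributes $\sum_{x,y}\psi(2\lambda xy)=q$, because summing over $y$ forces $x=0$.
\item Each diagonal variable contributes the quadratic Gauss sum $G(\lambda)=\sum_{x}\psi(\lambda x^2)$, which satisfies $|G(\lambda)|=\sqrt q$.
\end{itemize}
Hence the $\lambda$-term equals $q^{k}G(\lambda)^n$, with absolute value $q^{k+n/2}=q^{n^2/2}$.

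Summing over the $q-1$ nonzero $\lambda$ and dividing by $q\cdot q^{n^2}$ gives
\[
|\sigma(n,q)-1/q|\le (q-1)q^{n^2/2-1}/q^{n^2}=(q-1)q^{-n^2/2-1}=\Sigma(n,q),
\]
which tends to $0$ and so also yields the limit.

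The only step needing care is the evaluation of the Gauss sum modulus, $|G(\lambda)|=\sqrt q$ for $\lambda\neq0$ in odd characteristic. This is standard: compute $|G|^2=\sum_{x,y}\psi(\lambda(x^2-y^2))$ and substitute $u=x-y$, $v=x+y$ to reduce it to a hyperbolic-plane sum equal to $q$. The same substitution also justifies the hyperbolic-block computation above.
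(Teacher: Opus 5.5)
Your proposal is correct and follows essentially the same route as the paper: in characteristic $2$ you use $\Tr(A^2)=(\Tr A)^2$, and in odd characteristic you expand with additive characters and factor each $\lambda\neq 0$ term into $n$ quadratic Gauss sums of modulus $\sqrt q$ and $n(n-1)/2$ hyperbolic sums equal to $q$, which gives exactly the bound $(q-1)q^{-n^2/2-1}$. Your direct evaluation of $|G(\lambda)|^2=q$ for every $\lambda\neq 0$ via $u=x-y$, $v=x+y$ is slightly cleaner than the paper's detour through quadratic residues, but it is the same argument.
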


To prove Theorem~\ref{thm:random matrix being self-dual}, we distinguish between the cases of odd and even $\Char(\F_q)$. For $\Char(\F_q)=2$, the calculation is relatively straightforward. 

For odd $\Char(\F_q)$, we could note that the number of solutions of $\Tr(A^2)=0$ over $\F_q$ for odd $q$ is equivalent to counting non-singular quadric over $\F_q$. This has been extensively studied in projective geometry over finite fields (see~\cite[Sec.~6]{Lidl_Niederreiter_1996} or~\cite[Sec.~5]{10.1093/oso/9780198502951.001.0001}), and the desired result could be deduced by following these standard estimates. Still, we shall take a different approach, namely the characteristic sum method as used by Gorodetsky and Rodgers~\cite{MR4273172}, to prove this case for Theorem~\ref{thm:random matrix being self-dual}. This is because this method will also be adopted to prove Theorem~\ref{thm:main-technical}, and we hope that this proof can serve as a warm-up for the readers who are not familiar with characteristic sums.

\subsubsection{Preparations for characters over finite fields}
We first review some necessary notation and results about characters of finite fields before presenting the proof.

For a finite field $\F_q$, we have $q=p^m$, where $p=\Char(\F_q)$ is a prime number called the characteristic of $\F_q$. Let $\Tr_{\F_q/\F_p}: \F_q\to \F_p$ be the absolute trace function from $\F_q$ to $\F_p$, defined as 
\[
\Tr_{\F_q/\F_p}(a)=a+a^p+\cdots +a^{p^{m-1}}.
\]
Then the function $\psi:~\F_q\to \C$, defined as
\begin{equation}\label{eq: additive character}
    \psi(a)=e^{\frac{2\pi i}{p}\Tr_{\F_q/\F_p}(a)}
\end{equation}
is a character of the \emph{additive group} of $\F_q$. We shall call characters of the additive group of $\F_q$ \emph{additive characters}. In fact, all additive characters of $\F_q$ can be expressed in terms of $\psi$:
\begin{lemma}[Theorem 5.7 in~\cite{Lidl_Niederreiter_1996}]
    For $b\in\F_q$, the function $\psi_b:~\F_q\to \C$ defined as $\psi_b(a)=\psi(ba)$ for all $a\in\F_q$ is an additive character of $\F_q$, and every additive character of $\F_q$ is obtained in this way.
\end{lemma}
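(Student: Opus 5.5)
This is the standard classification of additive characters of $\F_q$ (Lidl--Niederreiter, Theorem 5.7). The plan has two parts: first show that each $\psi_b$ is an additive character, then show by counting that the $\psi_b$ are pairwise distinct and exhaust the character group.

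\emph{Each $\psi_b$ is a character.} The absolute trace $\Tr_{\F_q/\F_p}$ is $\F_p$-linear, because the Frobenius map $a\mapsto a^p$ is additive in characteristic $p$. Its values lie in $\F_p$, since $\Tr_{\F_q/\F_p}(a)^p=\Tr_{\F_q/\F_p}(a)$ using $a^{p^m}=a$. The exponential $x\mapsto e^{2\pi i x/p}$ is well defined on $\F_p=\mathbb{Z}/p\mathbb{Z}$, so $\psi$ in Eq.~(\ref{eq: additive character}) is a homomorphism from $(\F_q,+)$ to $\C^\times$. Since $a\mapsto ba$ is additive, each $\psi_b=\psi\circ(b\,\cdot)$ is also an additive character.

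\emph{Surjectivity by counting.} The first step is to show that $\psi$ is nontrivial. This amounts to showing that $\Tr_{\F_q/\F_p}$ is not identically zero. It is a polynomial of degree $p^{m-1}<q$, so it cannot vanish on all $q$ elements of $\F_q$; hence some $a$ has nonzero trace, and then $\psi(a)\neq 1$. Next, for $b\neq c$ we have $\psi_b/\psi_c=\psi_{b-c}$, and $\psi_{b-c}(a)=\psi\bigl((b-c)a\bigr)$. As $a$ ranges over $\F_q$, the element $(b-c)a$ ranges over all of $\F_q$, so $\psi_{b-c}$ is nontrivial because $\psi$ is. Therefore the map $b\mapsto\psi_b$ is injective, and in fact an injective group homomorphism from $(\F_q,+)$ to the character group $\widehat{\F_q}$.

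Finally, $\F_q\cong(\mathbb{Z}/p\mathbb{Z})^m$ is a finite abelian group of order $q$. Every character takes values in the $p$-th roots of unity, so $\widehat{\F_q}\cong\widehat{\mathbb{Z}/p\mathbb{Z}}^{\,m}$ has exactly $q$ elements. An injective map between two sets of size $q$ is a bijection, so every additive character equals $\psi_b$ for a unique $b$.

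The only nonroutine point is the nontriviality of the trace, and the degree argument above settles it.
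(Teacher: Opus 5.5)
Your proof is correct, and it is essentially the standard argument behind Lidl--Niederreiter's Theorem~5.7, which the paper cites without giving a proof of its own. The steps match: $\psi$ is nontrivial because the trace is not identically zero, $b\mapsto\psi_b$ is injective via $\psi_b/\psi_c=\psi_{b-c}$, and the count $|\widehat{\F_q}|=q$ gives surjectivity.
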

By the representation theory of finite abelian groups, we have the following orthogonality result:
\begin{lemma}\label{lem:orthogonal}
    $\forall\ a_1,a_2\in\F_q$, 
    \begin{equation*}
         \frac{1}{q}\sum_{b\in\F_q}\psi_b(a_1)\overline{\psi_b(a_2)}=
         \left\{
         \begin{aligned}
             1 &  & \text{if}~a_1=a_2\\
             0 &  & \text{if}~a_1\neq a_2
         \end{aligned}~.
         \right.    
    \end{equation*}
\end{lemma}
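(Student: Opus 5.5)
The plan is to reduce the statement to the vanishing of the full character sum of a single nontrivial additive character. First, I will rewrite the summand. Each $\psi_b$ is a homomorphism from $(\F_q,+)$ into the unit circle, so $\overline{\psi_b(a_2)}=\psi_b(a_2)^{-1}=\psi_b(-a_2)$. Hence
\[
\psi_b(a_1)\overline{\psi_b(a_2)}=\psi_b(a_1-a_2)=\psi\bigl(b(a_1-a_2)\bigr).
\]
Setting $c:=a_1-a_2$, the quantity to compute becomes $\frac1q\sum_{b\in\F_q}\psi(bc)$.

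Next I split into two cases. If $c=0$, every term equals $\psi(0)=1$, and the average is $1$. If $c\neq 0$, the map $b\mapsto bc$ is a bijection of $\F_q$. The sum therefore equals $S:=\sum_{a\in\F_q}\psi(a)$, and it suffices to show $S=0$.

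For this I use a translation argument, after first showing that $\psi$ is nontrivial. The map $\Tr_{\F_q/\F_p}$ is $\F_p$-linear. As a polynomial $a+a^p+\cdots+a^{p^{m-1}}$ it has degree $p^{m-1}<q$, so it cannot vanish on all of $\F_q$. Its image is therefore a nonzero $\F_p$-subspace of $\F_p$, namely all of $\F_p$. Pick $a_0$ with $\Tr_{\F_q/\F_p}(a_0)=1$; then $\psi(a_0)=e^{2\pi i/p}\neq 1$. Since $a\mapsto a+a_0$ permutes $\F_q$, we get $S=\sum_a\psi(a+a_0)=\psi(a_0)S$, which forces $S=0$.

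The only step that is not a formality is the nontriviality of $\psi$, that is, the surjectivity of the absolute trace. The degree-counting argument above handles it. Alternatively one could cite the fact, alongside the Theorem 5.7 of \cite{Lidl_Niederreiter_1996} used in the preceding lemma, that the $\psi_b$ are pairwise distinct, which already implies $\psi$ is nontrivial.
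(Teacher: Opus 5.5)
Your proof is correct. The paper, however, gives no computation: it only invokes ``the representation theory of finite abelian groups,'' i.e.\ the orthogonality relation $\sum_{\chi}\chi(a_1)\overline{\chi(a_2)}=q\,\delta_{a_1,a_2}$ with $\chi$ running over the dual group of $(\F_q,+)$. Turning that relation into a sum over $b\in\F_q$ silently uses the full strength of the preceding lemma (Theorem 5.7 of \cite{Lidl_Niederreiter_1996}): $b\mapsto\psi_b$ is a bijection onto the character group, so each character is counted exactly once.

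You bypass both the general theory and this completeness statement. After rewriting the summand as $\psi(b(a_1-a_2))$, the only input you need is that the single character $\psi$ is nontrivial. You verify this directly from the degree bound $p^{m-1}<q$ on the trace polynomial, and then finish with the translation trick $S=\psi(a_0)S$.

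Both routes ultimately rest on the surjectivity of $\Tr_{\F_q/\F_p}$, which is also what makes the $\psi_b$ pairwise distinct. The paper's citation buys brevity, while your version makes the one essential input explicit and self-contained. Your mechanism, a bijective substitution annihilating a complete character sum, is also exactly the one the paper uses later when it asserts $\sum_{y\in\F_q}\psi(2\lambda xy)=0$ and $\sum_{x\in\F_q}\psi(x)=0$.
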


We use $\cM_{n,q}\subseteq \F_q[t]$ to denote the set of all monic polynomials of degree $n$ over $\F_q$ and $\cM_q=\bigcup_{n\geq 0}\cM_{n,q}$. We also use $\cM_{n,q}^{gl}\subseteq \cM_{n,q}$ to denote the set of all monic polynomials of degree $n$ over $\F_q$ which are coprime with the polynomial $f(t)=t$ and $\cM_q^{gl}=\bigcup_{n\geq 0}\cM_{n,q}^{gl}$. It is clear that if $A\in\GL(n,q)$, then its characteristic polynomial $c_A(t)\in\cM_{n,q}^{gl}$.

As we will deal with $\Tr(A^2)$ for $A\in\M(n, q)$, we introduce a function $\chi$ that is useful for connecting to $\Tr(A^2)$. Let $f(t)\in \cM_{n, q}$. For $\lambda\in\F_q$, we define a function $\chi_{\lambda}:\cM_q\to\C$ by
\begin{equation}\label{eq:chi}
   \chi_\lambda(f(t))= e^{\frac{2\pi i}{p}\Tr_{\F_q/\F_p}(\lambda(\alpha_1^2+\cdots+\alpha_n^2))}
\end{equation}
where $\alpha_1,\dots,\alpha_n $ are the roots of $f(t)$ in $\overline{\F_q}$, listed with multiplicities.
 In particular, if $f(t)=c_A(t)$ for some matrix $A\in \M(n,q)$, we have
\begin{equation}\label{eq:characteristic polynomial and character of Fq}
   \chi_\lambda(c_A(t))= e^{\frac{2\pi i}{p}\Tr_{\F_q/\F_p}(\lambda\Tr(A^2))}=\psi_\lambda(\Tr(A^2)).
\end{equation}
Combining Eq.~(\ref{eq:characteristic polynomial and character of Fq}) with Lemma~\ref{lem:orthogonal}, where we set $a_1=\Tr(A^2)$ and $a_2=k$, we obtain the following characteristic sum expression for the condition of $\Tr(A^2)=k$:
\begin{proposition}\label{lem: character sum expression for self-dual}
    For any matrix $A\in \M(n,q)$, we have the following.
    \begin{equation*}
         \mathbb{I}\{\Tr(A^2)=k\}=\frac{1}{q}\sum_{\lambda\in\F_q}\psi_\lambda(\Tr(A^2))\overline{\psi_\lambda(k)}=\frac{1}{q}\sum_{\lambda\in\F_q}\chi_\lambda(c_A(t))\overline{\psi_\lambda(k)}=
         \left \{
         \begin{aligned}
             1 &  & \text{if}~\Tr(A^2)=k\\
             0 &  & \text{if}~\Tr(A^2)\neq k
         \end{aligned}
         ~.
         \right.
    \end{equation*}
\end{proposition}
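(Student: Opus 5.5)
The plan is to chain two equalities. The first identifies the character value $\chi_\lambda(c_A(t))$ with the additive character $\psi_\lambda$ evaluated at $\Tr(A^2)$; this is Eq.~(\ref{eq:characteristic polynomial and character of Fq}), which I will justify explicitly. The second applies the orthogonality relation of Lemma~\ref{lem:orthogonal} with $a_1=\Tr(A^2)$ and $a_2=k$.

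\textbf{Step 1: well-definedness of $\chi_\lambda$.} Let $f(t)=c_A(t)=\prod_{i=1}^n (t-\alpha_i)$ over $\overline{\F_q}$, with roots listed with multiplicity. The power sum $p_2=\alpha_1^2+\cdots+\alpha_n^2$ is a symmetric function of the roots. By Newton's identity, $p_2=e_1^2-2e_2$, where $e_1,e_2$ are elementary symmetric functions of the roots. These are, up to sign, coefficients of $f$, so $p_2\in\F_q$. This identity is a polynomial identity over $\Z$, so it also holds in characteristic $2$, where it reads $p_2=e_1^2$. Hence $\Tr_{\F_q/\F_p}(\lambda p_2)$ is defined, and $\chi_\lambda(f)=\psi(\lambda p_2)=\psi_\lambda(p_2)$.

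\textbf{Step 2: $p_2=\Tr(A^2)$ for $f=c_A$.} Over $\overline{\F_q}$, $A$ is conjugate to an upper triangular matrix whose diagonal entries are $\alpha_1,\dots,\alpha_n$; for instance, take the Jordan form over the algebraic closure. The square of that triangular matrix is again upper triangular, with diagonal $\alpha_1^2,\dots,\alpha_n^2$. Trace is invariant under conjugation, so $\Tr(A^2)=\sum_i\alpha_i^2=p_2$. Combined with Step 1, this gives $\chi_\lambda(c_A(t))=\psi_\lambda(\Tr(A^2))$, which is Eq.~(\ref{eq:characteristic polynomial and character of Fq}).

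\textbf{Step 3: orthogonality.} Substitute $a_1=\Tr(A^2)\in\F_q$ and $a_2=k$ into Lemma~\ref{lem:orthogonal}. This gives
\[
\frac{1}{q}\sum_{\lambda\in\F_q}\psi_\lambda(\Tr(A^2))\overline{\psi_\lambda(k)}=\mathbb{I}\{\Tr(A^2)=k\}.
\]
Replacing each $\psi_\lambda(\Tr(A^2))$ by $\chi_\lambda(c_A(t))$ using Step 2 gives the middle expression in the statement, which completes the chain.

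There is no real obstacle here. The only points needing care are the following:
\begin{itemize}
\item showing $p_2$ lies in $\F_q$, so that the absolute trace in the definition of $\chi_\lambda$ makes sense;
\item checking that the argument does not rely on dividing by $2$ in characteristic $2$.
\end{itemize}
Both are handled in Step 1.
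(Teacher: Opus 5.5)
Your proposal is correct and follows the same route as the paper, which obtains the identity by combining Eq.~(\ref{eq:characteristic polynomial and character of Fq}) with the orthogonality relation of Lemma~\ref{lem:orthogonal} at $a_1=\Tr(A^2)$, $a_2=k$. Your Steps 1--2 add the justification that the paper asserts without proof. Namely, $\sum_i\alpha_i^2=e_1^2-2e_2$ lies in $\F_q$ in every characteristic, and it equals $\Tr(A^2)$ by triangularizing $A$ over $\overline{\F_q}$.
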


\subsubsection{Proof of Theorem~\ref{thm:random matrix being self-dual}}

With the indicator function above, we are ready to compute the probability of a random matrix $A\in\M(n,q)$ being self-dual, i.e.~$\Tr(A^2)=0$. 
\begin{proof}[Proof of Theorem~\ref{thm:random matrix being self-dual}]
Our main focus will be on counting the solution of $\Tr(A^2)=0$.
Note that for $A=(a_{i,j})_{i,j\in[n]}\in\M(n,q)$, we have
\begin{equation}\label{eq: expression of trace} \Tr(A^2)=\sum_{i=1}^n\sum_{j=1}^na_{i,j}a_{j,i}=\sum_{i=1}^na_{i,i}^2+\sum_{1\le i \ne j \le n}a_{i,j}a_{j,i}=\sum_{i=1}^na_{i,i}^2+2\sum_{1\le i<j \le n}a_{i,j}a_{j,i}.
\end{equation}

\paragraph{Case 1. $\Char(\F_q)=2$.} In this case, $2\sum_{1\le i<j \le n}a_{i,j}a_{j,i}=0$, and hence $\Tr(A^2)=\sum_{i=1}^na_{i,i}^2=\left(\sum_{i=1}^na_{i,i}\right)^2$.
We only need to calculate the number of solutions to the linear equation:
\[
\sum_{i=1}^na_{i,i}=0,
\]
which is $q^{n-1}$.
 Thus, we have 
 \[
     \Pr[\Tr(A^2)=0|~A\in_R\M(n,q)]=\frac{q^{n-1}}{q^n}=\frac{1}{q}.
 \]


\paragraph{Case 2. $\Char(\F_q)$ is odd.} 
Now we estimate the probability of a random matrix $A\in\M(n,q)$ being self-dual, i.e.~$\Tr(A^2)=0$.

By Proposition~\ref{lem: character sum expression for self-dual}, we know that
\begin{equation}\label{eq: probability of self-dual}
\begin{split}
    \sigma(n,q)=\Pr[\Tr(A^2)=0|~A\in_R\M(n,q)]&=\frac{1}{|\M(n,q)|}\sum_{A\in\M(n,q)}\mathbb{I}\{\Tr(A^2)=0\}\\
    &=\frac{1}{|\M(n,q)|}\sum_{A\in\M(n,q)}\left(\frac{1}{q}\sum_{\lambda\in\F_q}\psi_\lambda(\Tr(A^2))\right)\\
    &=\frac{1}{q}\sum_{\lambda\in\F_q}\underbrace{\left(\frac{1}{|\M(n,q)|}\sum_{A\in\M(n,q)}\psi(\lambda\cdot\Tr(A^2))\right)}_{:=\mathbb{E}[\psi(\lambda\cdot\Tr(A^2))]}.
\end{split}
\end{equation}
Note that when $\lambda=0$, we have
\begin{equation}\label{eq: expectation value when lambda=0}
    \mathbb{E}[\psi(0\cdot\Tr(A^2))]=1.
\end{equation}
We shall focus on estimating $\mathbb{E}[\psi(\lambda\cdot\Tr(A^2))]$ for any $\lambda\in\F_q^\times$.
 We utilize Eq.~(\ref{eq: expression of trace}) to obtain
 \begin{equation}\label{eq: expectation decomposition}
 \begin{split}
     \mathbb{E}[\psi(\lambda\cdot \Tr(A^2))]&=\mathbb{E}[\psi(\lambda (\sum_{i=1}^na_{i,i}^2+2\sum_{1\le i<j \le n}a_{i,j}a_{j,i})]\\
     &=\mathbb{E}[\prod_{i=1}^n\psi(\lambda a_{i,i}^2)+\prod_{1\le i<j \le n}\psi(2\lambda a_{i,j}a_{j,i})]\\
     &=\mathbb{E}[\psi(\lambda a^2)]^{n}\cdot\mathbb{E}[\psi(2 \lambda xy)]^{n(n-1)/2},
 \end{split}
 \end{equation}
 where the second equality uses $\psi(a+b)=\psi(a)\psi(b)$ for any $a,b\in\F_q$ and the third equality uses the fact that all the $a_{i,j}$'s are chosen independently and uniformly at random.

We first estimate 
\[
\mathbb{E}[\psi(2\lambda xy)]=\frac{1}{q^2}\sum_{x,y\in\F_q}\psi(2\lambda xy).
\]
For any $\lambda\in\F_q^\times$.
Note that for any $x\in\F_q^\times$, the map $y\mapsto 2\lambda xy$ is a bijection in $\F_q$. Thus, 
$\sum_{y\in\F_q}\psi(2\lambda xy)=0$. This implies that for any $\lambda\in\F_q^\times$
\begin{equation}\label{eq: estimate of E2}
\mathbb{E}[\psi(2\lambda xy)]=\frac{1}{q^2}\sum_{x,y\in\F_q}\psi(2\lambda xy)=\frac{1}{q}\sum_{y\in\F_q}\psi(0)=\frac{1}{q}.    
\end{equation}

Now, we estimate
 \[
 \mathbb{E}[\psi(\lambda a^2)]=\frac{1}{q}\sum_{x\in \F_q}\psi(\lambda x^2)
 \] 
 for any $\lambda\in\F_q^\times$. Recall that an element $\lambda\in\F_q$ is called a \emph{quadratic residue} if $\lambda=c^2$ for some $c\in\F_q$; otherwise, $\lambda$ is called a quadratic nonresidue. Denote by $\mathbf{qr}(q)$ the set of nonzero quadratic residues in $\F_q$, and denote by $\mathbf{nqr}(q)$ the set of nonzero quadratic nonresidues in $\F_q$. It is clear that $\F_q^\times=\mathbf{qr}(q)\cup \mathbf{nqr}(q)$ and $\mathbf{qr}(q)\cap\mathbf{nqr}(q)=\emptyset$.
 We have the following lemma.
 \begin{lemma}\label{lem: sum of psi lambda x2}
 For any $\lambda\in\F_q^\times$, we have
 \begin{equation*}
         \sum_{x\in \F_q}\psi(\lambda x^2)=
         \left\{
         \begin{aligned}
             &\sum_{x\in\F_q}\psi(x^2) &  & \text{if}~\lambda=\mathbf{qr}(q)&\\
             &-\sum_{x\in\F_q}\psi(x^2) &  & \text{if}~\lambda=\mathbf{nqr}(q)&
         \end{aligned}~.
         \right.    
    \end{equation*}
 \end{lemma}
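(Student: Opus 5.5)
The plan is to use the fact that multiplying by a nonzero square is a bijection of $\F_q$ which preserves the sets of squares and non-squares. Throughout, $q$ is odd, as in Case~2 of the proof of Theorem~\ref{thm:random matrix being self-dual}.

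\textbf{Squares.} If $\lambda\in\mathbf{qr}(q)$, write $\lambda=c^2$ with $c\in\F_q^\times$. Then $x\mapsto cx$ is a bijection of $\F_q$, and substituting $y=cx$ gives
\[
\sum_{x\in\F_q}\psi(\lambda x^2)=\sum_{x\in\F_q}\psi((cx)^2)=\sum_{y\in\F_q}\psi(y^2).
\]
This settles the first case.

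\textbf{Non-squares.} We first rewrite the sum by counting square roots. For $a\in\F_q$, let $N(a)=|\{x\in\F_q: x^2=a\}|$. Since $q$ is odd, $N(0)=1$, $N(a)=2$ for $a\in\mathbf{qr}(q)$, and $N(a)=0$ for $a\in\mathbf{nqr}(q)$. Hence, for every $\mu\in\F_q^\times$,
\[
\sum_{x\in\F_q}\psi(\mu x^2)=\sum_{a\in\F_q}N(a)\psi(\mu a)=1+2\sum_{a\in\mathbf{qr}(q)}\psi(\mu a).
\]
Next we use the vanishing of the full character sum. Since $\psi_\mu$ is a nontrivial additive character for $\mu\neq 0$, orthogonality (Lemma~\ref{lem:orthogonal}) gives $\sum_{a\in\F_q}\psi(\mu a)=0$. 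Therefore
\[
1+\sum_{a\in\mathbf{qr}(q)}\psi(\mu a)+\sum_{a\in\mathbf{nqr}(q)}\psi(\mu a)=0.
\]
Now take $\lambda\in\mathbf{nqr}(q)$. Multiplication by $\lambda$ is a bijection of $\F_q^\times$. Because $\mathbf{qr}(q)$ is an index-$2$ subgroup of the cyclic group $\F_q^\times$, this bijection swaps $\mathbf{qr}(q)$ and $\mathbf{nqr}(q)$, so that $\lambda\cdot\mathbf{qr}(q)=\mathbf{nqr}(q)$. Consequently
\[
\sum_{x\in\F_q}\psi(\lambda x^2)=1+2\sum_{b\in\mathbf{nqr}(q)}\psi(b).
\]
By the vanishing relation with $\mu=1$, we have $\sum_{b\in\mathbf{nqr}(q)}\psi(b)=-1-\sum_{a\in\mathbf{qr}(q)}\psi(a)$. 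Substituting this in gives
\[
\sum_{x\in\F_q}\psi(\lambda x^2)=-1-2\sum_{a\in\mathbf{qr}(q)}\psi(a)=-\sum_{x\in\F_q}\psi(x^2),
\]
which is the second case.

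The only points needing care are the standard facts that $N(a)\in\{0,1,2\}$ as described and that $\mathbf{qr}(q)$ has index $2$ in $\F_q^\times$. Both rely on $q$ being odd, which is exactly the setting of Case~2. No further obstacle is expected.
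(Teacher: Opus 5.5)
Your proof is correct and follows essentially the same route as the paper: count square roots to write $\sum_{x}\psi(\lambda x^2)$ as $1+2\sum$ over the coset $\lambda\cdot\mathbf{qr}(q)$, then use $\sum_{a\in\F_q}\psi(a)=0$ to relate the non-residue sum to the residue sum. The only cosmetic difference is that you handle the residue case by the substitution $y=cx$, whereas the paper uses the same counting argument for both cases.
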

 \begin{proof}
 Note that if $\lambda\in\mathbf{qr}(q)$, then $\lambda x^2\in\mathbf{qr}(q)$ for any $x\in\F^\times_q$. Moreover, $\lambda x^2$ goes through $\mathbf{qr}(q)$ twice when $x$ goes through $\F_q^\times$, since the quadratic map $x\mapsto x^2$ is $2$-to-$1$. Similarly, if $\lambda\in\mathbf{nqr}(q)$, then $\lambda x^2\in\mathbf{nqr}(q)$ for any $x\in\F^\times_q$ and $\lambda x^2$ goes through $\mathbf{nqr}(q)$ twice when $x$ goes through $\F_q^\times$. Thus,
 \begin{equation*}
         \sum_{x\in \F_q}\psi(\lambda x^2)=
         \left\{
         \begin{aligned}
             &1+2\sum_{x\in\mathbf{qr}(q)}\psi(x) &  & \text{if}~\lambda=\mathbf{qr}(q)&\\
             &1+2\sum_{x\in\mathbf{nqr}(q)}\psi(x) &  & \text{if}~\lambda=\mathbf{nqr}(q)&
         \end{aligned}~.
         \right.    
    \end{equation*}
On the other hand,  we know that $1+2\sum_{x\in\mathbf{qr}(q)}\psi(x)=\sum_{x\in\F_q}\psi(x^2)$ and
\[
1+2\sum_{x\in\mathbf{qr}(q)}\psi(x)+ 1+2\sum_{x\in\mathbf{nqr}(q)}\psi(x)=2\sum_{x\in\F_q}\psi(x)=0.
\]
This establishes the desired equality.
\end{proof}
The expression of $\sum_{x\in \F_q}\psi(\lambda x^2)$ in Lemma~\ref{lem: sum of psi lambda x2} gives the following estimate:
 \begin{proposition}\label{prop: estimate of E_1}
 Let
     For any $\lambda\in\F_q^\times$, we have:
     \[|\sum_{x\in \F_q}\psi(\lambda x^2)|=\sqrt{q}\]
     Moreover, $|\mathbb{E}[\psi(\lambda a^2)]|=\frac{1}{q}|\sum_{x\in \F_q}\psi(\lambda x^2)|=q^{-1/2}$.
 \end{proposition}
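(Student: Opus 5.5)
This is the classical evaluation of the quadratic Gauss sum, and I would prove it by computing $|G(\lambda)|^2$ directly, where $G(\lambda):=\sum_{x\in\F_q}\psi(\lambda x^2)$ and $q$ is odd (the setting of Case 2). By Lemma~\ref{lem: sum of psi lambda x2}, $G(\lambda)=\pm G(1)$ for every $\lambda\in\F_q^\times$, so it suffices to show $|G(1)|^2=q$. I would nevertheless compute $|G(\lambda)|^2$ for a general $\lambda\in\F_q^\times$, since the argument is the same and this avoids any case split.

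The first step is to expand the square modulus. Since $\overline{\psi(a)}=\psi(-a)$,
\[
|G(\lambda)|^2=\sum_{x,y\in\F_q}\psi\bigl(\lambda(x^2-y^2)\bigr).
\]
The second step is the substitution $x=y+h$, which gives $x^2-y^2=h^2+2hy$. Using additivity of $\psi$, this becomes
\[
|G(\lambda)|^2=\sum_{h\in\F_q}\psi(\lambda h^2)\sum_{y\in\F_q}\psi(2\lambda h y).
\]

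The third step evaluates the inner sum, exactly as in the derivation of Eq.~(\ref{eq: estimate of E2}). Because $q$ is odd, $2\lambda h\neq 0$ whenever $h\neq 0$, so $y\mapsto 2\lambda hy$ is a bijection of $\F_q$ and the inner sum equals $\sum_{a\in\F_q}\psi(a)=0$. This last equality holds because $\psi$ is a nontrivial character; Lemma~\ref{lem:orthogonal} gives it with $a_1=1$, $a_2=0$. For $h=0$ the inner sum is $q$. Only the $h=0$ term survives, so $|G(\lambda)|^2=q$, that is, $|G(\lambda)|=\sqrt q$.

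The second claim then follows by dividing by $q$, since $\mathbb{E}[\psi(\lambda a^2)]=\frac1q G(\lambda)$ by definition. The only point needing care is the use of odd characteristic so that $2\lambda h$ is invertible. In characteristic $2$ the statement is not needed, because Case 1 is handled separately there, and I would remark that the proposition is stated for odd $q$.
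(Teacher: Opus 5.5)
Your proof is correct and follows essentially the same route as the paper: expand $|G(\lambda)|^2$ as a double sum, shift one variable, and kill every term with nonzero shift using $\sum_{a\in\F_q}\psi(a)=0$ (which needs $q$ odd). The differences are cosmetic. The paper first reduces to quadratic residues via Lemma~\ref{lem: sum of psi lambda x2}, whereas you treat every $\lambda\in\F_q^\times$ directly. You also state explicitly the odd-characteristic hypothesis that the paper uses only implicitly; this matters, since in characteristic $2$ the sum is actually $0$.
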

\begin{proof}
    By Lemma~\ref{lem: sum of psi lambda x2}, we can focus on $\lambda\in\mathbf{qr}(q)$. Note that
    \[|\sum_{x\in \F_q}\psi(\lambda x^2)|^2=(\sum_{x\in \F_q}\psi(x^2))(\sum_{y\in \F_q}\overline{\psi(y^2)})=\sum_{x,y\in\F_q}\psi(x^2-y^2).\]
    Substituting $y=x+u$, we have:
    \[|\sum_{x\in \F_q}\psi(\lambda x^2)|^2=\sum_{x,y\in\F_q}\psi(x^2-y^2)=\sum_{x,u\in\F_q}\psi(-2xu-u^2)=\sum_{u\in\F_q}\psi(-u^2)(\sum_{x\in\F_q}\psi(-2xu)).\]
    Note that for any $u\in\F_q^\times$, $\sum_{x\in\F_q}\psi(-2xu)=0$. Then, we have
    \[
    |\sum_{x\in \F_q}\psi(\lambda x^2)|^2=\sum_{u\in\F_q}\psi(-u^2)(\sum_{x\in\F_q}\psi(-2xu))=\psi(0)(\sum_{x\in\F_q}\psi(0))=q.
    \]
    Thus, $|\sum_{x\in \F_q}\psi(\lambda x^2)|=\sqrt{q}$ and $|\mathbb{E}[\psi(\lambda a^2)]|=q^{-1/2}$ for any $\lambda\in\mathbf{qr}(q)$. The proof for $\lambda\in\mathbf{nqr}(q)$ is identical, which concludes the proof.
\end{proof}

Combining Eq.~(\ref{eq: estimate of E2}) and Proposition~\ref{prop: estimate of E_1}, we have:
\[|\mathbb{E}[\psi(\lambda\cdot \Tr(A^2))]|=|\mathbb{E}[\psi(\lambda a^2)]|^{n}\cdot|\mathbb{E}[\psi(2 \lambda xy)]|^{n(n-1)/2}=q^{-n/2}q^{-n(n-1)/2}=q^{-n^2/2}.\]
Therefore, we have
\[\left|\sum_{\lambda\in \F_q^\times}\mathbb{E}[\psi(\lambda\cdot \Tr(A^2))]\right|\le(q-1)q^{-n^2/2}.\]
Together with Eq.~(\ref{eq: probability of self-dual}) and Eq.~(\ref{eq: expectation value when lambda=0}), we obtained the desired estimate in Theorem~\ref{thm:random matrix being self-dual} for odd $q$.
\end{proof}
\subsection{Random matrices with both eigenvalue and trace conditions}\label{subsec:main-technical}\label{subsec: random main}

Now, we estimate the probability of a random self-dual matrix admitting a unique eigenvalue with algebraic multiplicity $1$.
\begin{theorem}\label{thm:main-technical}
    Let $\gamma(n,q)$ be the probability of a random self-dual matrix $A\in\M(n, q)$ (i.e.~$\Tr(A^2)=0$) with a unique eigenvalue in $\F_q$ with algebraic multiplicity $1$. Then we have
    \[
        \gamma(\infty,q):=\lim_{n\to\infty}\gamma(n,q)=\frac{q\cdot c(q)^q}{q-1}>e^{-23/9}.
    \]
    Moreover, we have
    \[
         |\gamma(\infty,q)-\gamma(n,q)|\leq \frac{q^2}{1-(q-1)q^{-\frac{n^2}{2}}}\left((q-1)q^{-\frac{n^2}{2}-1}+\frac{q}{q-1} (q^{-(n+1)}+\Gamma(n,q))\right),
    \]
    where 
    \[
        \Gamma(n,q):=\frac{1}{2}16^{q-1}q^{-\frac{(n+1)(n+q)}{2(q-1)}-1}+(n+1)^2\binom{n+q-2}{q-2}(\frac{q}{q-1})^n\cdot q^{-\frac{n^2}{2(q+1)}+\frac{(q-1)n}{2(q+1)}+\frac{q-1}{4(q+1)}}.
    \]

\end{theorem}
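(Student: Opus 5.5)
We will write $\gamma(n,q)=\Pr[A\in C_1(n,q)\wedge \Tr(A^2)=0]/\sigma(n,q)$. Theorem~\ref{thm:random matrix being self-dual} already controls the denominator: $\sigma(n,q)=1/q+O((q-1)q^{-n^2/2-1})$. It therefore suffices to show that the joint probability $\Pr[A\in C_1(n,q),\ \Tr(A^2)=0]$ equals $\frac1q\alpha(n,q)$ up to an error of order $\Gamma(n,q)$. Dividing by $\sigma(n,q)$ and using Theorem~\ref{thm:eigen-condition} to replace $\alpha(n,q)$ by $\alpha(\infty,q)=\frac{q c(q)^q}{q-1}$ then produces the stated bound. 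The prefactor $q^2/(1-(q-1)q^{-n^2/2})$ is exactly the cost of inverting $\sigma$, and the final inequality $>e^{-23/9}$ is Proposition~\ref{prop: cqq}.

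\textbf{Reduction to eigenvalue-free matrices.} Fix the eigenvalue $\mu\in\F_q$. Using the primary decomposition $\F_q^n=V_\mu\oplus V_0$ exactly as in the derivation of Eq.~(\ref{eq:expression of Cmnq}), every $A\in C_{\stdb_\mu}(n,q)$ is determined by a splitting $V_\mu\oplus V_0$ with $\dim V_\mu=1$ together with an eigenvalue-free $A_0$ on $V_0$. Moreover, $\Tr(A^2)=\mu^2+\Tr(A_0^2)$. The condition $\Tr(A^2)=0$ therefore becomes $\Tr(A_0^2)=-\mu^2$ for an eigenvalue-free $(n-1)\times(n-1)$ matrix $A_0$. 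The counting weight attached to each $A_0$ is the same constant as in Eq.~(\ref{eq:expression of Cmnq}). Hence it suffices to prove equidistribution of $\Tr(B^2)$ over eigenvalue-free $B\in\M(n-1,q)$: for every $k\in\F_q$,
\[
\Bigl|\Pr[\Tr(B^2)=k \mid B\ \text{eigenvalue-free}] - \tfrac1q\Bigr| \lesssim \Gamma(n,q),
\]
uniformly in $k$. Summing over the $q$ values of $\mu$ then gives the joint probability $\frac1q\alpha(n,q)+O(\Gamma)$. This equidistribution statement is the content of Theorem~\ref{thm: probability of a random eigenvalue-free matrix with trace conditions}.

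\textbf{Equidistribution via characteristic sums.} By Proposition~\ref{lem: character sum expression for self-dual}, the required count equals $\frac1q\sum_\lambda \overline{\psi_\lambda(k)}\sum_{B}\chi_\lambda(c_B)$, where $B$ ranges over eigenvalue-free matrices. The $\lambda=0$ term gives the main term $\frac1q v(\GL,n-1,q)\,|\GL(n-1,q)|$. For $\lambda\neq0$, we group matrices by characteristic polynomial. The number of matrices with a given characteristic polynomial factors multiplicatively over irreducible factors, and $\chi_\lambda$ is multiplicative on polynomials. The generating function $\sum_n\sum_{B\ \text{eigenvalue-free}}\chi_\lambda(c_B)u^n/|\GL(n,q)|$ is therefore an Euler product, namely $L_{\GL}(u,\chi_\lambda)$ with the local factors at the $q-1$ degree-one primes $t-a$ ($a\neq0$) removed. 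Each removed factor is the unipotent-type generating function $U(\GL,q,\cdot)$ evaluated at $\chi_\lambda(t-a)u=\psi(\lambda a^2)u$. We then quote Gorodetsky--Rodgers' analysis of $L_{\GL}(u,\chi_\lambda)$ for $\lambda\ne0$: it is essentially a polynomial-type object whose coefficients decay like $q^{-n^2/(2(q+1))}$. Multiplying by the inverses of the $q-1$ removed factors, whose coefficients are bounded by $(q/(q-1))^n$ via $u(\GL,n,q)=1/(c_n(q)q^n)$, gives a convolution. Bounding this convolution should produce the term $(n+1)^2\binom{n+q-2}{q-2}(\tfrac{q}{q-1})^n q^{-n^2/(2(q+1))+\cdots}$ in $\Gamma$. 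The binomial coefficient counts the compositions of the degree among the $q-1$ removed factors, and the $(n+1)^2$ absorbs the remaining sums. The first term of $\Gamma$ comes from replacing $v(\GL,n-1,q)$ by its limit via Proposition~\ref{lem: gen functions of v and u}.

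\textbf{Main obstacle.} The main difficulty is the $\lambda\neq0$ coefficient extraction. It requires adapting Gorodetsky--Rodgers' bound from $\GL$ to the eigenvalue-free set and checking that removing the linear local factors only costs polynomial-in-$n$ and $(q/(q-1))^n$ factors, which the Gaussian decay $q^{-n^2/(2(q+1))}$ absorbs. I would first verify the precise form of their coefficient bound for $L_{\GL}(u,\chi_\lambda)$ for the degree-$2$ trace character. I would then carry out the convolution with the explicit inverse local factors, which are $\prod_{i\ge1}(1-\psi(\lambda a^2)u/q^i)$ up to normalization. After that step, the remaining assembly is bookkeeping with Theorems~\ref{thm:eigen-condition} and~\ref{thm:random matrix being self-dual}.
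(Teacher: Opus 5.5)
Your architecture matches the paper's. It uses Bayes' rule $\gamma=\delta/\sigma$ and the primary decomposition to reduce to an eigenvalue-free $(n-1)\times(n-1)$ block with $\Tr(A_0^2)=-\mu^2$. It then uses the character-sum expansion, the $P_{\GL}$-weighted Euler product with the $q-1$ linear local factors at $t-a$ removed, the Gorodetsky--Rodgers bound, and a convolution. Your bookkeeping is, if anything, tidier than the paper's: the main term $\frac{1}{q}\alpha(n,q)$ is exact, so only the $\lambda\neq 0$ part needs bounding.

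\textbf{The gap.} The mechanism you describe for the coefficient extraction, the step you flag as the main obstacle, would not work. You bound the coefficients of the inverted local factors by $(q/(q-1))^n$, citing $u(\GL,n,q)$. Those are (up to unimodular factors) the coefficients of the removed factor $U(\GL,q,\psi(\lambda a^2)u)$, not of its inverse. You then expect Gaussian decay $q^{-n^2/(2(q+1))}$ from Gorodetsky--Rodgers to absorb this cost. Write $H(u,\lambda)=\sum_m h_m u^m$ for the product of the inverted factors, and $g_k=\sum_{f\in\cM^{gl}_{k,q}}P_{\GL}(f)\chi_\lambda(f)$. The coefficient you need is $\sum_{k=0}^n h_{n-k}g_k$, which contains the term $h_n g_0$ with $g_0=1$. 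A factor with merely bounded (or exponentially growing) coefficients is not a multiplicative cost that the other factor's decay can absorb. With only $|h_n|\le\binom{n+q-2}{q-2}(q/(q-1))^n$, the error term would not even tend to $0$.

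\textbf{The missing ingredient: $H$ itself decays in a Gaussian way.}
\begin{itemize}
\item For $|\zeta|=1$, the coefficient of $u^m$ in $\prod_{i\ge 1}(1-\zeta u/q^i)$ is $(-\zeta)^m/\prod_{j=1}^m(q^j-1)$, of modulus at most $(q-1)^{-m}q^{-m(m-1)/2}$.
\item Convolve the $q-1$ such factors over compositions $(n_c)_{c\in\F_q^\times}$ of $m$, and use $\sum_c n_c^2\ge m^2/(q-1)$. This gives $|h_m|\le\binom{m+q-2}{q-2}q^{-m^2/(2(q-1))+O(m)}$, which is the paper's proposition on the overhead series.
\item Gorodetsky--Rodgers give $|g_k|\le(k+1)q^{-k^2/4}(q/(q-1))^k$, not $q^{-k^2/(2(q+1))}$.
\item The exponent $n^2/(2(q+1))$ in $\Gamma$ comes from the convolution. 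It equals $\min_k\bigl[\frac{(n-k)^2}{2(q-1)}+\frac{k^2}{4}\bigr]$, attained at $k=2n/(q+1)$.
\item The factor $(n+1)^2$ comes from the $n+1$ terms of the sum together with the $(k+1)$ factor.
\end{itemize}
With these two Gaussian bounds in place, your outline goes through and yields the second term of $\Gamma$.

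A minor point: the $P_{\GL}$-weighted series is $\prod_{i\ge1}L_{\GL}(u/q^i,\chi_\lambda)$, not $L_{\GL}(u,\chi_\lambda)$ itself. Your local factors $U(\GL,q,\psi(\lambda a^2)u)$ are nonetheless the correct ones for the weighted series.
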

\begin{proof}
We shall use the Bayes' rule to express
\begin{equation}\label{eq:bayes}
\gamma(n,q)=\frac{\Pr[A~\text{has a unique eigenvalue in}~\F_q~\text{of algebraic multiplicity}~1~\text{and}~\Tr(A^2)=0]}{\Pr[\Tr(A^2)=0|A\in_R \M(n,q)]}.
\end{equation}
The denominator term in Eq.~(\ref{eq:bayes}) is exactly $\sigma(n,q)$, where we resort to Theorem~\ref{thm:random matrix being self-dual}. For the enumerator term in Eq.~(\ref{eq:bayes}), let 
\[
    D_1(n,q):=\left \{ A\in \M(n,q)| c_A(t)=f_0(t)(t-\lambda)~\text{for some}~\lambda \in \F_q,~f_0(t)\neq 0~\forall\ t\in\mathbb F_q,~\Tr(A^2)=0\right \}
\]
be the set of matrices $A\in\M(n,q)$ that admits exactly $1$ eigenvalue of algebraic multiplicity $1$ and satisfies $\Tr(A^2)=0$. We introduce the following method to count $|D_1(n,q)|$: Recall that a matrix $A\in \M(n,q)$ admitting a unique eigenvalue of algebraic multiplicity $1$ is conjugated with 
\[
    TAT^{-1}=\begin{bmatrix}\lambda & 0\\0 & A_0\end{bmatrix}
\]
through some invertible matrix $T\in\GL(n,q)$, where the matrix $A_0$ is eigenvalue-free in $\F_q$. This implies that $A_0\in\GL(n-1,q)$ (otherwise, $A_0$ has $0$ as its eigenvalue). Moreover, If $A$ is also self-dual, we have
\[
    0=\Tr(A^2)=\Tr(TAT^{-1}TAT^{-1})=\lambda^2+\Tr(A_0^2).
\]
Let $\beta(n-1,q,k)$ be the probability of a random matrix $A_0\in\GL(n-1,q)$ that satisfies (1) $A_0$ is eigenvalue-free in $\F_q$ and (2) $\Tr(A_0^2)=k$. More precisely,
\[
    \beta(n-1,q,k)=\Pr[c_{A_0}(t)\neq 0~\forall\ t\in\F_q,~\Tr(A_0^2)=k|~A_0\in_R\GL(n-1,q)].
\]
Then, we can compute $|D_1(n,q)|$ as 
\[
    |D_1(n,q)|=\frac{|\GL(n,q)|}{|\GL(n-1,q)|\cdot |\GL(1,q)|}\left(\sum_{\lambda\in\F_q}\beta(n-1,q,-\lambda^2)\cdot\GL(n-1,q)\right),
\]
where the fraction term counts the number of ordered direct-sum decomposition $\F_q^n=V_0\oplus V_\lambda$, with $\dim(V_0)=n-1$ and $\dim(V_\lambda)=1$, and the summation term counts the number of eigenvalue-free matrices $A_0\in \M(n-1,q)$ satisfying $\Tr(A_0^2)+\lambda^2=0$. Thus, the probability of a random matrix $A\in\M(n,q)$ that has a unique eigenvalue in $\F_q$ of algebraic multiplicity $1$ and is self-dual is 
\begin{equation}\label{eq: expression of delta}
    \delta(n,q):=\frac{|D_1(n,q)|}{|\M(n,q)|}=\frac{c_n(q)}{q-1}\sum_{k\in\F_q}\beta(n-1,q,k).
\end{equation}

To estimate $\gamma(\infty,q)$, we obtain the following estimates for $\beta(n-1,q,k)$:
\begin{theorem}\label{thm: probability of a random eigenvalue-free matrix with trace conditions}
    For any $k\in\F_q$, let $\beta(n,q,k)$ be the probability of a random matrix $A\in\GL(n,q)$ that satisfies that (1) $A$ is eigenvalue-free in $\F_q$ and (2) $\Tr(A^2)=k$. Then the limit $\beta(\infty,q,k)=\lim_{n\to\infty}\beta(n,q,k)$ exists for any $k\in\F_q$ and 
    \[
    \beta(\infty,q,k)=\frac{1}{q}v(\GL,\infty,q)=\frac{1}{q}c(q)^{q-1}.
    \] 
    Moreover, for any $n\geq 5(q-1)^2$ and any $k\in\F_q$, we have 
    \begin{equation}\label{eq:~convergence bound beta}
    |\beta(\infty,q,k)-\beta(n,q,k)|\leq \Gamma(n,q),
    \end{equation} 
    where
    \[
    \Gamma(n,q):=\frac{1}{2}16^{q-1}q^{-\frac{(n+1)(n+q)}{2(q-1)}-1}+(n+1)^2\binom{n+q-2}{q-2}(\frac{q}{q-1})^n\cdot q^{-\frac{n^2}{2(q+1)}+\frac{(q-1)n}{2(q+1)}+\frac{q-1}{4(q+1)}}.
    \]
\end{theorem}
We shall prove Theorem~\ref{thm: probability of a random eigenvalue-free matrix with trace conditions} in the next subsection. 

We continue to prove Theorem~\ref{thm:main-technical}: With Eq.~(\ref{eq: expression of delta}), Theorem~\ref{thm:random matrix being self-dual} and Theorem~\ref{thm: probability of a random eigenvalue-free matrix with trace conditions}, it is clear that $\delta(\infty,q):=\lim_{n\to\infty}\delta(n,q)$ exists and
\[
    \delta(\infty,q)=\frac{c(q)\cdot q\cdot \beta(\infty,q,k)}{q-1}=\frac{c(q)^{q}}{q-1}.
\]
To estimate $\gamma(\infty,q)$, we utilise Eq.~(\ref{eq:bayes}) and plug-in the limits obtained above and in Theorem~\ref{thm:random matrix being self-dual} to obtain
\[
\gamma(\infty,q)=\lim_{n\to\infty}\frac{\delta(n,q)}{\sigma(n,q)}=\frac{\delta(\infty,q)}{\sigma(\infty,q)}=\frac{q\cdot c(q)^q}{q-1}>e^{-23/9}.
\]
For the error rate, note that
\begin{equation}\label{eq: error of gamma}
\begin{split}
    |\gamma(\infty,q)-\gamma(n,q)|&=\left|\frac{\delta(\infty,q)}{\sigma(\infty,q)}-\frac{\delta(n,q)}{\sigma(n,q)}\right|\\
    &=\left|\frac{\delta(\infty,q)\sigma(n,q)-\delta(n,q)\sigma(\infty,q)}{\sigma(\infty,q)\sigma(n,q)}\right|\\
    &\leq\frac{q}{\sigma(n,q)}(|\sigma(\infty,q)-\sigma(n,q)|+|\delta(\infty,q)-\delta(n,q)|),
\end{split}
\end{equation}
where the last inequality uses the triangle inequality, $\sigma(\infty,q)=1/q\leq 1$ and $\delta(\infty,q)$. We only need to derive an upper bound on $|\delta(\infty,q)-\delta(n,q)|$. Note that
\[
\begin{split}
|\delta(\infty,q)-\delta(n,q)|&=\left|\frac{c(q)}{q-1}\cdot q\cdot \beta(\infty,q,k)-\frac{c_n(q)}{q-1}\sum_{k\in\F_q}\beta(n-1,q,k)\right|\\
&=\frac{1}{q-1}\left|\sum_{k\in\F_q}(c(q)\beta(\infty,q,k)-c_n(q)\beta(n,q,k))\right|\\
&\leq \frac{1}{q-1}\sum_{k\in\F_q}\left|c(q)\beta(\infty,q,k)-c_n(q)\beta(n,q,k)\right|\\
&\leq \frac{1}{q-1}\sum_{k\in\F_q}\left(\beta(\infty,q,k)|c(q)-c_n(q)|+c_n(q)|\beta(\infty,q,k)-\beta(n,q,k)|\right)\\
&\leq \frac{q}{q-1} (q^{-(n+1)}+\Gamma(n,q)),
\end{split}
\]
where the first two inequalities use triangle inequality and the last inequality use the convergence bounds of $c(q)$ (cf.~Eq.~(\ref{eq:convergence bound of cnq})) and $\beta(\infty,q,k)$ (cf.~Eq.~(\ref{eq:~convergence bound beta})) and the trivial bound $\beta(\infty,q,k),c_n(q)\leq 1$.

Together with the convergence bound of $\sigma(\infty,q)$ (cf.~Eq.~(\ref{eq: convergence bound sigma})), the convergence bound of $\Gamma(\infty,q)$ can be estimated through Eq.~(\ref{eq: error of gamma}) as
\[
    |\gamma(\infty,q)-\gamma(n,q)|\leq \frac{q^2}{1-(q-1)q^{-\frac{n^2}{2}}}\left((q-1)q^{-\frac{n^2}{2}-1}+\frac{q}{q-1} (q^{-(n+1)}+\Gamma(n,q))\right).\qedhere
\]
\end{proof}



\subsubsection{Proof of Theorem~\ref{thm: probability of a random eigenvalue-free matrix with trace conditions}}

We are left to prove Theorem~\ref{thm: probability of a random eigenvalue-free matrix with trace conditions}. Note that such a probability when there is no restriction on the eigenvalues was studied by Gorodetsky and Rodgers~\cite{MR4273172}. Roughly speaking, Gorodetsky and Rodgers use the characteristic sum to express the indicator function $\mathbb{I}\{\Tr(A^2)=k\}$ for $A\in\GL(n, q)$ (cf.~Proposition~\ref{lem: character sum expression for self-dual}), which translates sums over invertible matrices to sums over characteristic polynomials. Then, the number of matrices with fixed trace powers can be estimated using the L-functions of the characters.

In our setting, the goal is to bound $\beta(n,q,k)$, the number of matrices in $\M(n,q)$ that satisfy (1) $A$ is eigenvalue-free in $\F_q$; and (2) $\Tr(A^2)=k$. 
\begin{definition}
    Let
\[
  \cN(n,q,k)=\{A\in \GL(n,q):~A~\text{is eigenvalue-free and}~\Tr(A^2)=k\}
\] 
and let 
\[
\cE(n,q)=\{f\in\cM_{n,q}^{gl}:~f(t)\neq 0~\forall~t\in\F_q\}
\] 
be the set of characteristic polynomials of invertible matrices that are eigenvalue-free.
\end{definition}
Note that
\begin{equation}
\begin{split}
    |\cN(n,q,k)|&=\sum_{A\in\GL(n,q)}\mathbb{I}\{c_A(t)\in\cE(n,q)\}\cdot\mathbb{I}\{\Tr(A^2)=k\}\\
    &=\sum_{A\in\GL(n,q)}\mathbb{I}\{c_A(t)\in\cE(n,q)\}\cdot\left(\frac{1}{q}\sum_{\lambda\in\F_q}\chi_\lambda(c_A(t))\overline{\psi_\lambda(k)}\right).
\end{split}
\end{equation}
This suggests us to work with the characteristic polynomials of eigenvalue-free matrices instead of invertible matrices. In particular, if we replace $\cE(n,q)$ by the set of nonzero polynomials, we arrive at the target in~\cite{MR4273172}.
Reiner~\cite{10.1215/ijm/1255629830} and Gerstenhaber~\cite{10.1215/ijm/1255629831} derived explicit formulae to compute the probability of a random invertible matrix whose characteristic polynomial is exactly some fixed monic polynomial $f\in\cM_q^{gl}$, which is denoted as $P_{\GL}(f)$:
\[
    P_{\GL}(f)=\Pr_{A\in\GL(n,q)}[c_A(t)=f(t)]=\frac{|\{A\in\GL(n,q):~c_A(t)=f(t)\}|}{|\GL(n,q)|}.
\]
 
\begin{lemma}[Theorem 2 in~\cite{Rei87}, Section 2 in~\cite{10.1215/ijm/1255629831}]\label{lem:P_GL}
    If $f\in\cM_q^{gl}$ factorises as $f=g_1^{e_1}\cdots g_r^{e_r}$ with each $g_i$ an irreducible monic polynomial, then
    \[
        P_{\GL}(f)=\prod_{i=1}^r\frac{q^{\deg(g_i)e_i(e_i-1)}}{|\GL(e_i,q^{\deg(g_i)})|}.
    \]
\end{lemma}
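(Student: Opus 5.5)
The plan is to reduce the count to the primary components of $f$, and then to reduce each primary component to the unipotent (nilpotent) count over the extension field $\F_{q^{d}}$, where $d=\deg g$; that nilpotent count is the $q^{n^2-n}$ result already quoted in Section~\ref{subsec:eigen-condition}.

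\textbf{Step 1 (splitting into primary parts).} Write $d_i=\deg(g_i)$ and $n=\sum_i d_ie_i$. A matrix $A\in\GL(n,q)$ with $c_A=f$ is determined by two pieces of data:
\begin{itemize}
\item the primary decomposition $\F_q^n=V_1\oplus\cdots\oplus V_r$, where $V_i=\ker g_i(A)^{e_i}$ has dimension $d_ie_i$;
\item for each $i$, the restriction $A|_{V_i}$, whose characteristic polynomial is $g_i^{e_i}$.
\end{itemize}
This is the same argument as for Eq.~(\ref{eq:expression of Cmnq}). The number of ordered decompositions with the prescribed dimensions is $|\GL(n,q)|/\prod_i|\GL(d_ie_i,q)|$. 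Hence
\[
P_{\GL}(f)=\prod_{i=1}^r \frac{N(g_i^{e_i})}{|\GL(d_ie_i,q)|},
\]
where $N(g^e)$ is the number of matrices in $\M(de,q)$ with characteristic polynomial $g^e$. It therefore suffices to prove, for a single irreducible $g$ of degree $d$,
\[
\frac{N(g^e)}{|\GL(de,q)|}=\frac{q^{de(e-1)}}{|\GL(e,q^d)|}.
\]

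\textbf{Step 2 (orbit count via module automorphisms).} A matrix with characteristic polynomial $g^e$ is the same thing as an $\F_q[t]$-module structure on $\F_q^{de}$. Such a module is isomorphic to $M_\lambda=\bigoplus_j \F_q[t]/(g^{\lambda_j})$ for a unique partition $\lambda\vdash e$. By orbit–stabilizer, the matrices of type $\lambda$ number $|\GL(de,q)|/|\mathrm{Aut}(M_\lambda)|$. Summing over types gives
\[
\frac{N(g^e)}{|\GL(de,q)|}=\sum_{\lambda\vdash e}\frac{1}{|\mathrm{Aut}(M_\lambda)|}.
\]

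\textbf{Step 3 (transfer to $\F_{q^d}$; the main obstacle).} The key claim is a ring isomorphism
\[
\F_q[t]/(g^m)\cong\F_{q^d}[s]/(s^m)\quad\text{for every } m\ge 1.
\]
I would prove it as follows. Since $\F_q$ is perfect, $g$ is separable. Hensel's lemma then lifts a root of $g$ in $\F_q[t]/(g)\cong\F_{q^d}$ to a root of $g$ in $R=\F_q[t]/(g^m)$; alternatively, one can take the Teichmüller-type lift $x\mapsto x^{q^{dN}}$ for large $N$. This produces a coefficient field $\F_{q^d}\hookrightarrow R$. Sending $s\mapsto g(t)$ then gives a surjection $\F_{q^d}[s]/(s^m)\to R$, and it is an isomorphism because both rings have $q^{dm}$ elements. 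This identification is the only delicate point; everything else is bookkeeping.

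With the isomorphism in hand, $M_\lambda$ corresponds to the $\F_{q^d}[s]$-module $\bigoplus_j\F_{q^d}[s]/(s^{\lambda_j})$. So $\mathrm{Aut}(M_\lambda)$ equals the centralizer in $\GL(e,q^d)$ of a nilpotent matrix of Jordan type $\lambda$.

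\textbf{Step 4 (finish).} Apply the nilpotent (equivalently, $\lambda$-potent) count over $\F_{q^d}$, which is already quoted in the paper (\cite{MR96677}). Counting the nilpotent matrices of size $e$ orbit by orbit gives
\[
\sum_{\lambda\vdash e}\frac{|\GL(e,q^d)|}{|C_\lambda|}=(q^d)^{e^2-e},
\]
where $C_\lambda$ is the centralizer from Step 3. Dividing by $|\GL(e,q^d)|$ yields
\[
\sum_{\lambda\vdash e}\frac{1}{|\mathrm{Aut}(M_\lambda)|}=\frac{q^{de(e-1)}}{|\GL(e,q^d)|}.
\]
Combined with Steps 1 and 2, this gives the stated formula for $P_{\GL}(f)$.
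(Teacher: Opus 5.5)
Your proof is correct. The paper gives no proof of this lemma; it imports the formula from Reiner and Gerstenhaber \cite{10.1215/ijm/1255629830,10.1215/ijm/1255629831}.

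Your route is a self-contained derivation:
\begin{itemize}
\item Step~1 is the same primary-decomposition count the paper already uses for $|C_m(n,q)|$.
\item Step~2 is orbit--stabilizer.
\item Steps~3--4 transfer the primary case $g^e$ to nilpotent $e\times e$ matrices over $\F_{q^d}$ via the coefficient-field isomorphism $\F_q[t]/(g^m)\cong\F_{q^d}[s]/(s^m)$. The Fine--Herstein count \cite{MR96677}, which the paper also quotes, then finishes the argument.
\end{itemize}

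Compared with the citation, your argument makes the lemma rest only on inputs the paper uses anyway. It also makes transparent the multiplicativity of $P_{\GL}$ over coprime factors, which the paper invokes right after the lemma. The citation is, of course, shorter.

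The only step you assert without argument is the surjectivity of $\F_{q^d}[s]/(s^m)\to R$. It follows because $R=k+\mathfrak m$ for the coefficient field $k$, and $\mathfrak m=(\bar g)$ is nilpotent. Iterating gives $R=k+k\bar g+\cdots+k\bar g^{m-1}$.
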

Then we can further rewrite $|\cN(n,q,k)|$ as
\begin{equation}\label{eq: expression of Nnqk}
    |\cN(n,q,k)|=\frac{|\GL(n,q)|}{q}\sum_{\lambda\in\F_q}\sum_{f\in\cE(n,q)}P_{\GL}(f)\chi_\lambda(f)\overline{\psi_\lambda(k)},
\end{equation}
which translates the sum over invertible matrices into the sum over characteristic polynomials. 

To estimate Eq.~(\ref{eq: expression of Nnqk}), we first introduce some know results for the L-functions of characters introduced in~\cite{MR4273172}, then we exhibit how to adapt the proof template in~\cite{MR4273172} to estimate $\beta(n,q,k)$.

\paragraph{The L-function of the character $\chi_\lambda$.} 
Recall that we defined $\chi_\lambda$ in Eq.~(\ref{eq:chi}). The L-function over $\GL$ for $\chi_\lambda$ is defined as
\[
L_{\GL}(u,\chi_\lambda)=\sum_{n=0}^\infty\sum_{f\in\cM_{n,q}^{gl}}\chi_\lambda(f)u^{n}.
\]
\begin{lemma}[{\cite[Theorem 3.5]{MR4273172}}]\label{lem:L functions}
    For any $\lambda\in\F_q$ and $|u|<1/q$, we have
    \begin{equation}\label{eq: euler product of LGL}
        L_{\GL}(u,\chi_\lambda)=\sum_{n=0}^\infty\sum_{f\in\cM_{n,q}^{gl}}\chi_\lambda(f)u^{n}=\prod_{P(t)\neq t}\frac{1}{1-\chi_\lambda(P)u^{\deg(P)}},
    \end{equation}
    where the second equality is due to the Euler product, and the product is over all \emph{irreducible} monic polynomial in $\cM_q^{gl}$ (Thus $P(t)\neq t$). 
    This sum converges absolutely for $|u|<1/q$, and for $|u|<1$ we have
    \[
        \sum_{n=0}^\infty\sum_{f\in\cM^{gl}_{n,q}} P_{\GL}(f)\chi_\lambda(f)u^{n}=\prod_{i=1}^\infty L_{\GL}(\frac{u}{q^i},\chi_\lambda),
    \]
    with both the left-hand-sum and the right-hand-product converging absolutely.
\end{lemma}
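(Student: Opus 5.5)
The plan is to prove both identities by combining the complete multiplicativity of $\chi_\lambda$ on $\cM_q$ with unique factorization in $\F_q[t]$. The second identity additionally uses the multiplicativity of $P_{\GL}$ from Lemma~\ref{lem:P_GL} and Euler's $q$-series identity.

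\emph{Step 1: multiplicativity of $\chi_\lambda$.} If $f,g\in\cM_q$, the multiset of roots of $fg$ in $\overline{\F_q}$ is the union of the root multisets of $f$ and $g$. Hence the power sum $\sum\alpha_j^2$ of $fg$ is the sum of the power sums of $f$ and $g$. Since $\psi$ is additive, Eq.~(\ref{eq:chi}) gives $\chi_\lambda(fg)=\chi_\lambda(f)\chi_\lambda(g)$, and trivially $\chi_\lambda(1)=1$ and $|\chi_\lambda|=1$.

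\emph{Step 2: the Euler product for $L_{\GL}$.} Every $f\in\cM^{gl}_q$ factors uniquely as a product of monic irreducibles $P\neq t$. So, formally, $L_{\GL}(u,\chi_\lambda)=\prod_{P\neq t}\sum_{e\ge0}\chi_\lambda(P)^e u^{e\deg P}$, and each factor is a geometric series $(1-\chi_\lambda(P)u^{\deg P})^{-1}$. For absolute convergence, note that $|\cM^{gl}_{n,q}|\le q^n$, so $\sum_n\sum_f|\chi_\lambda(f)u^n|\le\sum_n (q|u|)^n<\infty$ when $|u|<1/q$. In the same range, $\sum_P|u|^{\deg P}\le\sum_d q^d|u|^d<\infty$, so the product converges absolutely and the formal expansion is legitimate.

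\emph{Step 3: local factors of the weighted sum.} By Lemma~\ref{lem:P_GL}, $P_{\GL}$ is multiplicative over coprime factors, and $\chi_\lambda$ is completely multiplicative. Hence
\[
\sum_{f\in\cM^{gl}_q}P_{\GL}(f)\chi_\lambda(f)u^{\deg f}=\prod_{P\neq t}\sum_{e\ge0}\frac{Q^{e(e-1)}}{|\GL(e,Q)|}\,x_P^{\,e},\qquad Q=q^{\deg P},\ x_P=\chi_\lambda(P)u^{\deg P}.
\]
Using $|\GL(e,Q)|=Q^{e^2}\prod_{j=1}^e(1-Q^{-j})$, the local factor becomes $\sum_e (x_P/Q)^e/\prod_{j=1}^e(1-Q^{-j})$. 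By Euler's identity $\sum_e z^e/(r;r)_e=1/(z;r)_\infty$ with $r=Q^{-1}$ and $z=x_P/Q$, this equals $\prod_{i\ge1}\bigl(1-\chi_\lambda(P)(u/q^i)^{\deg P}\bigr)^{-1}$. Taking the product over $P$ first, for each fixed $i$, and then applying Step 2 with $u/q^i$ in place of $u$ gives $\prod_{i\ge1}L_{\GL}(u/q^i,\chi_\lambda)$. Step 2 applies because $|u/q^i|<1/q$ whenever $|u|<1$ and $i\ge1$.

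\emph{Step 4: justifying the rearrangements.} This is where I expect the main care to be needed: interchanging the products over $P$ and $i$, and passing from the sum to the double product. The left-hand sum converges absolutely for $|u|<1$, since $\sum_{f\in\cM^{gl}_{n,q}}P_{\GL}(f)=1$ and so the terms are bounded by $\sum_n|u|^n$.

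For the double product, it suffices to show
\[
\sum_{P}\sum_{i\ge1}|u|^{\deg P}q^{-i\deg P}\le\sum_{i\ge1}\sum_{d\ge1}q^d|u|^dq^{-id}<\infty .
\]
The $i=1$ term is $\sum_d|u|^d<\infty$, and the terms with $i\ge2$ are dominated by $\sum_{i\ge2}\sum_d q^{-(i-1)d}<\infty$. Absolute convergence of the double product then allows arbitrary regrouping, which justifies both Step 3 and the convergence of $\prod_i L_{\GL}(u/q^i,\chi_\lambda)$.
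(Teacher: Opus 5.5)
Your proposal is correct. It takes essentially the same route as the cited Gorodetsky--Rodgers proof and as the paper's parallel computation for $L_{\cE}$: factor the weighted sum into local factors using the multiplicativity of $P_{\GL}$ (Lemma~\ref{lem:P_GL}) and $\chi_\lambda$, convert each local factor via Euler's $q$-series identity into $\prod_{i\ge1}(1-\chi_\lambda(P)(u/q^i)^{\deg P})^{-1}$, and interchange the products over $P$ and $i$ using absolute convergence for $|u|<1$. One small point of attribution: the identity between $\sum_f P_{\GL}(f)\chi_\lambda(f)u^{\deg f}$ and the product of its local factors rests on the absolute convergence of that sum, which you verify via $\sum_{f\in\cM^{gl}_{n,q}}P_{\GL}(f)=1$, whereas the absolute convergence of the double product is what licenses the interchange.
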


Inspired by the above, to estimate Eq.~(\ref{eq: expression of Nnqk}), we define the L-function over eigenvalue-free matrices for $\chi_\lambda$ as 
\begin{equation}\label{eq: L function of E_n}
    L_{\cE}(u,\chi_\lambda)=\sum_{n=0}^\infty\sum_{f\in\cE(n,q)}\chi_\lambda(f)u^{n}.
\end{equation}
for any $\lambda\in\F_q$. Using the Euler product, we have the following:
\begin{equation}\label{eq: L function of E_n Euler product}
    L_{\cE}(u,\chi_\lambda)=\prod_{c\in\F_q^\times}\prod_{P(t)\neq t-c}\frac{1}{1-\chi_\lambda(P)u^{\deg(P)}}.
\end{equation}
By Lemma~\ref{lem:P_GL}, $P_{\GL}$ is a multiplicative arithmetic function on $\cM_q$. Then for an irreducible monic polynomial $g$ and an integer $e$, let $G=q^{\deg(g)}$ and we have
\[
    P_{\GL}(g^e)=\frac{G^{e(e-1)}}{(G^e-1)(G^e-G)\cdots(G^e-G^{e-1})}.
\]
Thus, for $|u|<1$, we can compute
\[
    \begin{split}
        &\sum_{n=0}^\infty\sum_{f\in\cE(n,q)}P_{\GL}(f)\chi_\lambda(f)u^n\\
       =&\prod_{c\in\F_q^\times}\prod_{g(t)\neq t-c}\left(1+\sum_{e=1}^\infty\frac{g^{e(e-1)}}{(g^e-1)(g^e-g)\cdots(g^e-g^{e-1})}(\chi_\lambda(g)u^{\deg(g)})^e\right)\\
       =&\prod_{c\in\F_q^\times}\prod_{g(t)\neq t-c}\prod_{i=1}^\infty\frac{1}{1-\chi_\lambda(g)(u/q^i)^{\deg(g)}}\\
       =&\prod_{i=1}^\infty L_{\cE}(\frac{u}{q^i},\chi_\lambda),
    \end{split}
\]
where the second equality is due to Euler (cf.~\cite[Theorem 2.4 and Eq.~(2.8)]{MR4273172}) and the order of the two infinite products can be interchanged since for $|u|<1$, the Euler product is absolutely convergent (cf.~\cite[Proof of Thm.~3.5]{MR4273172}). 
Utilising the second equality, we obtain the following:
\begin{proposition}\label{prop: L_GL and L_E}
    For any $0\neq \lambda\in\F_q$, we have 
    \begin{equation}\label{eq: characteristic sum over E_n and M_n,q}
        \sum_{n=0}^\infty\sum_{f\in\cE(n,q)} P_{\GL}(f)\chi_\lambda(f)u^{n}=\prod_{c\in\F_q^\times}\prod_{i=1}^\infty(1-\psi(\lambda c^2)\cdot \frac{u}{q^i})\sum_{n=0}^\infty\sum_{f\in\cM^{gl}_{n,q}} P_{\GL}(f)\chi_\lambda(f)u^{n}.
    \end{equation}
\end{proposition}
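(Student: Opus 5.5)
The plan is to compare the two Euler-product expansions already obtained, factor by factor. The computation just before the proposition gives, for $|u|<1$,
\[
\sum_{n=0}^\infty\sum_{f\in\cE(n,q)}P_{\GL}(f)\chi_\lambda(f)u^n=\prod_{i=1}^\infty L_{\cE}\!\left(\frac{u}{q^i},\chi_\lambda\right).
\]
Lemma~\ref{lem:L functions} gives the analogous identity
\[
\sum_{n}\sum_{f\in\cM^{gl}_{n,q}}P_{\GL}(f)\chi_\lambda(f)u^n=\prod_{i\ge1}L_{\GL}\!\left(\frac{u}{q^i},\chi_\lambda\right).
\]
So it suffices to relate $L_{\cE}(w,\chi_\lambda)$ to $L_{\GL}(w,\chi_\lambda)$ for each single $w=u/q^i$, and then take the product over $i$.

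\textbf{Step 1: compare the two $L$-functions.} The Euler product \eqref{eq: euler product of LGL} for $L_{\GL}$ runs over all monic irreducibles $P\neq t$. The Euler product \eqref{eq: L function of E_n Euler product} for $L_{\cE}$ is meant to run over the same set minus the $q-1$ linear factors $t-c$ with $c\in\F_q^\times$. This is because a polynomial in $\cM^{gl}_q$ has no root in $\F_q$ exactly when none of these linear factors divides it. (The displayed double product should be read this way: exclude all $t-c$ at once.) Since the Euler products converge absolutely for $|w|<1/q$, removing finitely many factors is legitimate and gives
\[
L_{\cE}(w,\chi_\lambda)=\prod_{c\in\F_q^\times}\bigl(1-\chi_\lambda(t-c)\,w\bigr)\,L_{\GL}(w,\chi_\lambda).
\]

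\textbf{Step 2: evaluate the character on linear factors.} The polynomial $t-c$ has the single root $c$, so by \eqref{eq:chi} we get $\chi_\lambda(t-c)=\psi(\lambda c^2)$.

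\textbf{Step 3: take the product over $i$.} Substitute $w=u/q^i$ and multiply over $i\ge1$. For $|u|<1$ we have $|u/q^i|<1/q$ for every $i\ge1$, so each factor identity holds. The finite product $\prod_{c}\prod_{i}(1-\psi(\lambda c^2)u/q^i)$ converges absolutely because $\sum_i|u|q^{-i}<\infty$. The infinite product $\prod_i L_{\GL}(u/q^i,\chi_\lambda)$ converges absolutely by Lemma~\ref{lem:L functions}. Hence the products can be split and rearranged, which yields \eqref{eq: characteristic sum over E_n and M_n,q}.

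The only delicate point is this convergence and interchange of the double products in Step 3, and it is controlled by the absolute convergence statements of Lemma~\ref{lem:L functions}. The hypothesis $\lambda\neq0$ plays no role in the argument; the identity holds for all $\lambda$.
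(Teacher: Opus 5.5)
Your proposal is correct and follows the paper's proof. Both arguments write the two generating functions as $\prod_{i\ge1}L_{\cE}(u/q^i,\chi_\lambda)$ and $\prod_{i\ge1}L_{\GL}(u/q^i,\chi_\lambda)$, compare Euler products to see that $L_{\cE}$ differs from $L_{\GL}$ exactly by the factors $1-\chi_\lambda(t-c)w$ for $c\in\F_q^\times$, and conclude with $\chi_\lambda(t-c)=\psi(\lambda c^2)$. Your reading of the Euler product for $L_{\cE}$ (all linear factors removed at once), your convergence bookkeeping, and your remark that $\lambda\neq 0$ is not needed are all correct, and they make explicit what the paper leaves implicit.
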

\begin{proof}
 Using product expressions in terms of $L_{\GL}$ and $L_{\cE}$ and comparing the Euler product expressions of $L_{\GL}$ in~Eq.~(\ref{eq: euler product of LGL}) and of $L_\cE$ in~Eq.~(\ref{eq: L function of E_n Euler product}), we have
    \[
        \sum_{n=0}^\infty\sum_{f\in\cE(n,q)} P_{\GL}(f)\chi_\lambda(f)u^{n}=\left(\prod_{c\in\F_q^\times}\prod_{i=1}^\infty(1-\chi_\lambda(t-c)\cdot \frac{u}{q^i})\right)\sum_{n=0}^\infty\sum_{f\in\cM^{gl}_{n,q}} P_{\GL}(f)\chi_\lambda(f)u^{n}.
    \]
    The result then follows since $\chi_\lambda(t-c)=\psi(\lambda c^2)$.
\end{proof}

These L-function products enables us to estimate certain coefficients of the generating functions. For example, Gorodetsky and Rodgers derived an estimate on $\sum_{f\in\cM^{gl}_{n,q}}P_{\GL}(f)\chi_\lambda(f)$~\cite{MR4273172}:
\begin{lemma}[{\cite[Theorem 3.6 (2)]{MR4273172}}]\label{lem: estimation of average character}
    For any $\lambda\in\F_q^\times$, we have
    \begin{equation}\label{eq:GR3.6}
    \left |\sum_{f\in \cM^{gl}_{n,q}}P_{\GL}(f)\chi_\lambda(f)\right |=\frac{1}{|\GL(n,q)|}\left |\sum_{A\in GL(n,q)}\chi_\lambda(c_{A}(t))\right |\le (n+1)\cdot q^{-\frac{n^2}{4}}\cdot (\frac{q}{q-1})^n.
    \end{equation}
\end{lemma}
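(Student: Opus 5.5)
Lemma~\ref{lem: estimation of average character} is quoted from Gorodetsky--Rodgers, but here is how I would prove it. The first equality is immediate: group matrices $A\in\GL(n,q)$ by characteristic polynomial, and the sum of $\chi_\lambda(c_A)$ becomes $|\GL(n,q)|\sum_f P_{\GL}(f)\chi_\lambda(f)$. So the task is to bound the $u^n$-coefficient of $\prod_{i\ge1}L_{\GL}(u/q^i,\chi_\lambda)$ from Lemma~\ref{lem:L functions}.

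\textbf{Step 1: $L_{\GL}$ is a polynomial with small roots.} Since $\chi_\lambda(f)$ depends only on the first two power sums of the roots, i.e.\ on the top two coefficients of $f$ (using Newton's identities, with extra care in characteristic $2$), I expect $\sum_{f\in\cM_{n,q}}\chi_\lambda(f)=0$ for $n\ge 3$ when $\lambda\ne0$. Hence $L(u,\chi_\lambda)$ over all of $\cM_q$ is a polynomial of degree $\le 2$. Removing the Euler factor at $P=t$, which has $\chi_\lambda(t)=1$, multiplies by $(1-u)$, so $L_{\GL}$ is a polynomial of degree $\le 3$. By Weil's bound (a Gauss-sum computation for the quadratic character part), its inverse roots have modulus $\le \sqrt q$, apart from the trivial root $1$. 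I write
\[
L_{\GL}(u,\chi_\lambda)=(1-u)\prod_{j}(1-\omega_j u),\qquad |\omega_j|\le\sqrt q .
\]

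\textbf{Step 2: coefficient extraction.} The product $\prod_{i\ge1}L_{\GL}(u/q^i,\chi_\lambda)$ then factors as a product over $i$ and $j$ of linear terms $1-\omega u/q^i$. Each such infinite product expands by Euler's identity
\[
\prod_{i\ge1}(1-xq^{-i})=\sum_{m\ge0}\frac{(-1)^m q^{-m(m+1)/2}x^m}{\prod_{l=1}^m(1-q^{-l})}.
\]
The coefficient of $u^n$ is therefore a convolution over $n=m_1+m_2+m_3$ (roughly $n+1$ or $\binom{n+2}{2}$ terms). A term with a factor $|\omega|^{m}$ at $|\omega|\le\sqrt q$ contributes $q^{m/2-m(m+1)/2}$, and the denominators are at most $(q/(q-1))^{m}$. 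For the convolution I would use $\sum_k m_k^2\ge n^2/3$, or better arrange the two nontrivial roots to combine into $q^{-n^2/4}$. The factor $(1-u)$ contributes terms of size $q^{-m(m+1)/2}$, which is harmless.

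\textbf{Main obstacle.} The delicate step is the convolution bookkeeping: getting exactly $(n+1)q^{-n^2/4}(q/(q-1))^n$ rather than a weaker exponent. The cleanest route is probably to treat only a degree-$2$ polynomial $(1-\omega_1u)(1-\omega_2u)$ with $|\omega_1\omega_2|\le q$, and to absorb the trivial factor. Then $m_1+m_2=n$ gives $n+1$ terms, each bounded by $q^{(m_1+m_2)/2-(m_1^2+m_2^2+n)/2}\le q^{-n^2/4}$, using $m_1^2+m_2^2\ge n^2/2$. Verifying the degree claim and the Weil bound carefully (especially in characteristic $2$, where $\Tr(A^2)=\Tr(A)^2$ and $L$ becomes linear) is what I would check first.
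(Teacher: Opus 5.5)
\emph{Note: the paper gives no proof of this lemma; it only cites Gorodetsky--Rodgers.}

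Your overall plan is the right one, and it is essentially how Gorodetsky--Rodgers argue: show $L(u,\chi_\lambda)$ is a polynomial, write $L_{\GL}=(1-u)L$, expand $\prod_{i\ge1}L_{\GL}(u/q^i,\chi_\lambda)$ with Euler's identity, and bound termwise. The gap is the degree count in Step~1, and that count is exactly what fixes the exponent.

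Suppose, as you expect, $\deg L\le 2$ with two inverse roots of modulus $\sqrt q$. Then $L_{\GL}$ has three linear factors and the convolution runs over $m_0+m_1+m_2=n$. Your termwise bound becomes $(q/(q-1))^n q^{-m_0/2-(m_0^2+m_1^2+m_2^2)/2}$. At $m_0=m_1=m_2=n/3$ this is $q^{-n^2/6-n/6}$, much larger than $q^{-n^2/4}$. The factor $\prod_{i\ge1}(1-u/q^i)$ cannot simply be ``absorbed'': it convolves like an inverse root of modulus $1$ and only buys the extra $q^{-m_0/2}$. Also, $|\omega_1\omega_2|\le q$ does not give $|\omega_1|^{m_1}|\omega_2|^{m_2}\le q^{(m_1+m_2)/2}$ when $m_1\ne m_2$; the termwise bound needs each $|\omega_j|\le\sqrt q$ separately. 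So the route in your last paragraph does not yield the stated estimate as written.

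The missing fact is that the character sum already vanishes for $n\ge 2$; your bound $n\ge 3$ is true but not sharp. Write $f=t^n+a_1t^{n-1}+a_2t^{n-2}+\cdots$. Newton's identity gives $\sum_i\alpha_i^2=a_1^2-2a_2$. So for $n\ge 2$ and odd $q$,
\[
\sum_{f\in\cM_{n,q}}\chi_\lambda(f)=q^{n-2}\sum_{a_1\in\F_q}\psi(\lambda a_1^2)\sum_{a_2\in\F_q}\psi(-2\lambda a_2)=0,
\]
because $-2\lambda\neq 0$. Hence $L(u,\chi_\lambda)=1+g(\lambda)u$ with $g(\lambda)=\sum_{a\in\F_q}\psi(\lambda a^2)$. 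The bound $|g(\lambda)|=\sqrt q$ is the elementary computation of Proposition~\ref{prop: estimate of E_1}; no Weil bound is needed.

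In characteristic $2$, $\sum_i\alpha_i^2=a_1^2$ and $a\mapsto\lambda a^2$ is a bijection of $\F_q$. So $g(\lambda)=0$ and $L\equiv 1$ (constant, not linear as you suggest).

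Therefore $L_{\GL}(u,\chi_\lambda)=(1-u)(1+g(\lambda)u)$ has degree at most $2$, and the trivial root is one of your two factors. Take $\omega_1=1$ and $\omega_2=-g(\lambda)$. The $u^n$-coefficient is then a sum of $n+1$ terms indexed by $m_1+m_2=n$, each of modulus at most
\[
(q/(q-1))^n q^{-m_1/2-(m_1^2+m_2^2)/2}\le (q/(q-1))^n q^{-n^2/4}.
\]
This is exactly the claimed bound. With this correction your argument goes through.
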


Here is the key difference between our setting and the one studied in \cite{MR4273172}. Comparing Equations~\ref{eq:GR3.6} and~\ref{eq: characteristic sum over E_n and M_n,q}, we need to estimate the overhead term in Eq.~(\ref{eq: characteristic sum over E_n and M_n,q}), denoted by
\[
H(u,\lambda)=\prod_{c\in\F_q^\times}\prod_{i=1}^\infty(1-\psi(\lambda c^2)\cdot \frac{u}{q^i})=\prod_{c\in\F_q^\times}\underbrace{\left(1+\sum_{n=1}^\infty\frac{(-\psi(\lambda c^2))^n}{(q^n-1)(q^{n-1}-1)\cdots(q-1)}u^n\right)}_{:=H(u,\lambda,c)}~,
\]
where the product-to-sum conversion is again due to Euler (cf.~\cite[Theorem 2.5 and Eq.~(2.10)]{MR4273172}). By estimating the coefficients of every $H(u,\lambda,c)$, we obtain the following:

\begin{proposition}\label{prop: estimates of overhead serie}
    Suppose $H(u,\lambda)=\sum_{n=0}^\infty h_nu^n$. Then
    \[|h_n|\le \binom{n+q-2}{q-2}(\frac{q}{q-1})^{n}q^{-\frac{n^2}{2(q-1)}-\frac{n}{2}}.
    \]
\end{proposition}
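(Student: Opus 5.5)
We bound the coefficients of each factor $H(u,\lambda,c)$ separately and then multiply the $q-1$ series together. Write
\[
H(u,\lambda,c)=\sum_{m\ge 0} a_m(c)u^m,\qquad a_m(c)=\frac{(-\psi(\lambda c^2))^m}{\prod_{j=1}^m (q^j-1)}.
\]
Since $|\psi(\lambda c^2)|=1$, we have $|a_m(c)|=\prod_{j=1}^m (q^j-1)^{-1}$. Each factor satisfies $q^j-1=q^j(1-q^{-j})\ge q^j(1-q^{-1})$, so
\[
|a_m(c)|\le \Bigl(\frac{q}{q-1}\Bigr)^m q^{-m(m+1)/2}.
\]

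\textbf{Step 1: expand the product.} We have
\[
h_n=\sum_{m_1+\cdots+m_{q-1}=n}\ \prod_{c\in\F_q^\times} a_{m_c}(c).
\]
By the triangle inequality,
\[
|h_n|\le \Bigl(\tfrac{q}{q-1}\Bigr)^n \sum_{m_1+\cdots+m_{q-1}=n} q^{-\sum_c m_c(m_c+1)/2}.
\]
The linear part of the exponent is exactly $-n/2$, since $\sum_c m_c=n$.

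\textbf{Step 2: bound the quadratic part.} For the quadratic part we use convexity (Cauchy--Schwarz): with $q-1$ nonnegative parts summing to $n$,
\[
\sum_c m_c^2\ge \frac{n^2}{q-1}.
\]
Hence every term is at most $q^{-\frac{n^2}{2(q-1)}-\frac n2}$.

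\textbf{Step 3: count the terms.} The number of compositions of $n$ into $q-1$ nonnegative parts is $\binom{n+q-2}{q-2}$. Multiplying this count by the uniform bound from Step 2 gives exactly the claimed estimate.

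\textbf{Main obstacle.} The only delicate point is the constant in the lower bound $q^j-1\ge q^j(1-q^{-1})$; this lower bound holds for all $j\ge 1$. A second point is that we must not lose the factor $(q/q-1)^n$: it must come out as a single power of $n$ rather than one power per factor. This works because $\sum_c m_c=n$, so the powers $(q/q-1)^{m_c}$ over the $q-1$ factors combine to $(q/q-1)^n$. Everything else is routine.
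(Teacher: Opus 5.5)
Your proposal is correct and follows the paper's proof essentially step for step. It uses the same per-factor bound $|a_m(c)|\le (q/(q-1))^m q^{-m(m+1)/2}$ (your $q^j(1-q^{-1})$ is the paper's $q^{j-1}(q-1)$), the same convolution over compositions of $n$ into $q-1$ parts, Cauchy--Schwarz on $\sum_c m_c^2$, and the count $\binom{n+q-2}{q-2}$. Your exponent bookkeeping ($-m_c(m_c+1)/2$, hence $-n/2$) is cleaner than the paper's, which has sign slips in the intermediate exponent $-(n_c^2-n_c)/2$ and in its final display ($+n/2$).
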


\begin{proof}
    Let $\mathbf{n}_{q-1}=\{\vec{n}=(n_1,\dots,n_{q-1})\in\mathbb{N}^{q-1}:~n_0+\cdots+n_{q-1}=n\}$ be the set of all ordered partitions on $n$ of size $q$, then $h_n$ can be written as a convolution of coefficients of $H(u,\lambda,c)$:
    \[
    h_n=\sum_{\vec{n}\in \mathbf{n}_{q-1}} \prod_{c\in\F_q^\times}[u^{n_c}]H(u,\lambda,c),
    \]
    where $[u^{n_c}]H(u,\lambda,c)=\frac{(-\psi(\lambda c^2))^{n_c}}{(q^{n_c}-1)(q^{n_c}-1)\cdots(q^{n_c}-1)}$ denotes the coefficient of $u^n$ in $H(u,\lambda,c)$. Since $|\psi(a)|=1$ for all $a\in\F_q$ 
   and $q^k-1\geq q^{k-1}(q-1)$ for any $k\geq 1$, we have
    \[
    |[u^{n_c}]H(u,\lambda,c)|\leq(\frac{1}{q-1})^{n_c}\cdot\frac{1}{q^{1+\cdots+n_c-1}}= (\frac{1}{q-1})^{n_c}\cdot q^{-n_c(n_c-1)/2}=(\frac{q}{q-1})^{n_c}\cdot q^{-(n_c^2-n_c)/2}.    
    \]

    Thus,
    \[
    |h_n|\le \sum_{\vec{n}\in\mathbf{n}_{q-1}}\prod_{c\in\F_q^\times}(\frac{q}{q-1})^{n_c}q^{-(n_c^2+n_c)/2}=\sum_{\vec{n}\in\mathbf{n}_{q-1}}(\frac{q}{q-1})^{\sum_{c\in\F_q^\times}n_c}q^{-\sum_{c\in\F_q^\times}(n_c^2+n_c)/2}
    \]
    where the first inequality uses the triangle inequality. Note that $\sum_{c\in\F_q^\times}n_c=n$ and $\sum_{c\in\F_q^\times}n_c^2\geq (\sum_{c\in\F_q^\times}n_c)^2/(q-1)=n^2/(q-1)$ hold for any partition $\vec{n}=(n_1,\dots,n_{q-1})\in\mathbf{n}_{q-1}$ and $|\mathbf{n}_{q-1}|=\binom{n+q-2}{q-2}$. We then have
    \[
    |h_n|\le |\mathbf{n}_{q-1}|(\frac{q}{q-1})^{n}q^{-\frac{n^2}{2(q-1)}-\frac{n}{2}}=  \binom{n+q-2}{q-2}(\frac{q}{q-1})^{n}q^{-\frac{n^2}{2(q-1)}+\frac{n}{2}}.  \qedhere
    \]
    
\end{proof}

Now we are ready to prove Theorem~\ref{thm: probability of a random eigenvalue-free matrix with trace conditions}.
\medskip

\begin{proof}[Proof of Theorem~\ref{thm: probability of a random eigenvalue-free matrix with trace conditions}.]
By Eq.~(\ref{eq: expression of Nnqk}), we can rewrite $\beta(n,q,k)=|\cN(n,q,k)|/|\GL(n,q)|$ for any $k\in\F_q$ as
\[
    \beta(n,q,k)=\frac{1}{q}\sum_{\lambda\in\F_q}\sum_{f\in\cE(n,q)}P_{\GL}(f)\chi_\lambda(f)\overline{\psi_\lambda(k)}.
\]
Note that when $\lambda=0$, we have
\[
  \sum_{f\in\cE(n,q)}P_{\GL}(f)\chi_0(f)\overline{\psi_0(k)}=\sum_{f\in\cE(n,q)}\frac{|\{A\in\GL(n,q):~c_A(t)=f(t)\}|}{|\GL(n,q)|}=v(\GL,n,q)
\]
as defined in Eq.~(\ref{eq: eigenvalue-free}) due to the definition of $\cE(n,q)$. We shall focus on estimating 
\begin{equation}\label{eq: error decomposition}
P_{err}(n,k):=\frac{1}{q}\sum_{\lambda\in\F_q^{\times}}\sum_{f\in\cE(n,q)}P_{\GL}(f)\chi_\lambda(f)\overline{\psi_\lambda(k)}    
\end{equation}
for any $k\in\F_q$. By Proposition~\ref{prop: L_GL and L_E}, we can rewrite $P_{err}(n,k)$ as
    \[
    \begin{split}
    P_{err}(n,k)&=\frac{1}{q}\sum_{\lambda\in\F_q^{\times}}[u^n]\left(H(u,\lambda)\sum_{n=0}^\infty\sum_{f\in\cM^{gl}_{n,q}} P_{\GL}(f)\chi_\lambda(f)\overline{\psi_\lambda(k)}u^{n}\right)\\
    &\leq \max_{\lambda\in\F_q^\times}[u^n]\left(H(u,\lambda)\sum_{n=0}^\infty\sum_{f\in\cM^{gl}_{n,q}} P_{\GL}(f)\chi_\lambda(f)\overline{\psi_\lambda(k)}u^{n}\right).    
    \end{split}
    \]
    For any $\lambda\in\F_q^\times$, we have
    \begin{align*}
        &\left|[u^n]\left(H(u,\lambda)\sum_{n=0}^\infty\sum_{f\in\cM^{gl}_{n,q}} P_{\GL}(f)\chi_\lambda(f)\overline{\psi_\lambda(k)}u^{n}\right)\right|\\
        =&\left|\sum_{k=0}^nh_{n-k}\sum_{f\in \cM^{gl}_{k,q}}P_{\GL}(f)\chi_\lambda(f)\overline{\psi_\lambda(k)}\right|\\
        \le&\sum_{k=0}^n\left(\binom{n-k+q-2}{q-2}q^{-\frac{(n-k)^2}{2(q-1)}+\frac{n-k}{2}}(\frac{q}{q-1})^{n-k}\right)\cdot \left((k+1)q^{-\frac{k^2}{4}}(\frac{q}{q-1})^k\right)\\
        \le&(n+1)\binom{n+q-2}{q-2}(\frac{q}{q-1})^n\sum_{k=0}^nq^{-\frac{(n-k)^2}{2(q-1)}+\frac{n-k}{2}-\frac{k^2}{4}}\\
        \le&(n+1)^2\binom{n+q-2}{q-2}(\frac{q}{q-1})^n\cdot \max_{k\in[0,n]}q^{-\frac{(n-k)^2}{2(q-1)}+\frac{n-k}{2}-\frac{k^2}{4}}\\
        \le& (n+1)^2\binom{n+q-2}{q-2}(\frac{q}{q-1})^n\cdot q^{-\frac{n^2}{2(q+1)}+\frac{(q-1)n}{2(q+1)}+\frac{q-1}{4(q+1)}},
    \end{align*}
where the first inequality utilises the triangle inequality and bounds in Lemma~\ref{lem: estimation of average character} and Proposition~\ref{prop: estimates of overhead serie}, the second inequality utilises the fact that
\[
    \binom{n-k+q-2}{q-2}\leq \binom{n+q-2}{q-2}~\text{and}~k+1\leq n+1
\]
for any $k\in\{0,\dots,n\}$, the third inequality uses the fact that 
\[
  \sum_{k=0}^n q^{-\frac{(n-k)^2}{2(q-1)}+\frac{n-k}{2}-\frac{k^2}{4}}\leq (n+1)\max_{k\in[0,n]}q^{-\frac{(n-k)^2}{2(q-1)}+\frac{n-k}{2}-\frac{k^2}{4}}  
\] and the last inequality uses the fact that the maximum is achieved with $k=\frac{2n-q+1}{q+1}$ when $n\geq (q-1)/2$. With these estimates, we obtain that 
\begin{equation}\label{eq:~bounds on perr}
|P_{err}(n,k)|\le(n+1)^2n^{q}(\frac{q}{q-1})^n\cdot q^{-\frac{n^2}{2(q+1)}+\frac{(q-1)n}{2(q+1)}+\frac{q-1}{4(q+1)}}
\end{equation}    
holds for any $n\geq (q-1)/2$ and any $k\in\F_q$. Thus $\lim_{n\to\infty}P_{err}(n,k)=0$ and we have
\[
    \lim_{n\to\infty} \beta(n,q,k)=\lim_{n\to\infty} \frac{1}{q}v(\GL,n,q)=\frac{1}{q}v(\GL,\infty,q)=\frac{1}{q}c(q)^{q-1}>e^{-23/9}.
\]
Moreover, 
\[
    |\beta(\infty,q,k)-\beta(n,q,k)|\leq \frac{1}{q}\left|v(\GL,\infty,q)-v(\GL,n,q)\right|+\left|P_{err}(n,k)\right|,
\]
where the convergence bound of $\beta(\infty,q,k)$ is obtained by substituting the convergence bound of $|v(\GL,\infty,q)-v(\GL,n,q)|$ (cf. Lemma~\ref{lem: gen functions of v and u}) and the bound on $\left|P_{err}(n,k)\right|$ (cf.~Eq.~(\ref{eq:~bounds on perr})), where we note that $\binom{n+q-2}{q-2}$ grows as $O\big((n+q-2)^{q-2}\big)$ when $n$ goes to infinity.
\end{proof}

\bibliographystyle{alphaurl}
\bibliography{references}

\end{document}

%% file: preamble.tex
\usepackage[utf8]{inputenc}
\usepackage{latexsym}
\usepackage{amsmath}
\usepackage{amssymb}
\usepackage{amsthm}
\usepackage{cite}
\usepackage{graphicx}
\usepackage{color}
\usepackage{enumitem}
\usepackage{comment}
\usepackage{bm}
\usepackage[left=1in,right=1in,top=1in,bottom=1in]{geometry}
\usepackage{xspace}
\setlist[description]{font=\normalfont\itshape\textbullet\space}
\usepackage[all]{xy}
\usepackage{doi, url}
\usepackage{arydshln}
\usepackage{xcolor} 
\usepackage{framed}
\usepackage{tikz}
\usepackage{mathdots}
\usepackage{yhmath}
\usepackage{cancel}
\usepackage{siunitx}
\usepackage{array}
\usepackage{multirow}
\usepackage{tabularx}
\usepackage{extarrows}
\usepackage{booktabs}
\usepackage[linesnumbered,ruled,vlined]{algorithm2e}
\usepackage{float}
\usepackage{hyperref}

\usetikzlibrary{fadings}
\usetikzlibrary{patterns}
\usetikzlibrary{shadows.blur}
\usetikzlibrary{shapes}

\renewcommand{\paragraph}[1]{\vspace{6pt} \noindent \textbf{#1}\xspace}

\theoremstyle{plain}
\newtheorem{theorem}{Theorem}[section]

\newtheorem{corollary}[theorem]{Corollary}
\newtheorem{lemma}[theorem]{Lemma}

\newtheorem{proposition}[theorem]{Proposition}

\theoremstyle{definition}
\newtheorem{problem}[theorem]{Problem}

\newtheorem{definition}[theorem]{Definition}

\allowdisplaybreaks[4]

\newcommand{\GL}{\mathrm{GL}}

\newcommand{\F}{\mathbb{F}}

\newcommand{\C}{\mathbb{C}}
\newcommand{\R}{\mathbb{R}}
\newcommand{\N}{\mathbb{N}}

\newcommand{\idmat}{\mathrm{I}}

\newcommand{\Tr}{\mathrm{Tr}}
\renewcommand{\exp}{\mathrm{exp}}
\newcommand{\trp}[1]{{#1}^{\mathrm{t}}}

\newcommand{\rk}{\mathrm{rk}}

\newcommand{\poly}{\mathrm{poly}}

\newcommand{\M}{\mathrm{M}}
\newcommand{\T}{\mathrm{T}}

\newcommand{\conj}{\mathrm{Conj}}

\newcommand{\tuple}[1]{\mathbf{#1}}
\newcommand{\tens}[1]{\mathtt{#1}}
\newcommand{\spa}[1]{\mathcal{#1}}

\newcommand{\cA}{\spa{A}}
\newcommand{\cB}{\spa{B}}
\newcommand{\cC}{\spa{C}}

\newcommand{\cE}{\mathcal{E}}
\newcommand{\cL}{\mathcal{L}}
\newcommand{\cM}{\mathcal{M}}
\newcommand{\cN}{\mathcal{N}}
\newcommand{\tA}{\tens{A}}
\newcommand{\tB}{\tens{B}}

\newcommand{\tC}{\tens{C}}
\newcommand{\tD}{\tens{D}}

\newcommand{\vA}{\tuple{A}}
\newcommand{\vB}{\tuple{B}}

\newcommand{\Char}{\mathrm{Char}}

\newcommand{\stdb}{\vec{\mathrm{e}}}

\newcommand{\linspan}{\mathrm{span}}

\newcommand{\algprobm}[1]{{\sf #1}\xspace}
\newcommand{\GrI}{\algprobm{GraphIso}}
\newcommand{\TsI}{\algprobm{TensorIso}}
\newcommand{\AlgIso}{\algprobm{AlgIso}}
\newcommand{\MCConj}{\algprobm{MatCodeConj}}

\newcommand{\GpI}{\algprobm{GroupIso}}
\newcommand{\PolyIso}{\algprobm{PolyIso}}

\newcommand{\rin}{\in_{\mathrm{R}}}
\newcommand{\hull}{\mathrm{H}}

\renewcommand{\le}{\leqslant}
\renewcommand{\ge}{\geqslant}

\newcommand{\too}%
{\xrightarrow{\text{\raisebox{-3pt}{$\sim$}}\,}}

\mathchardef\mhyphen="2D

\usepackage[T1]{fontenc}
\def\DJ{{\hbox{D\kern-.8em\raise.15ex\hbox{--}\kern.35em}}}

%% file: references.bib
@article {MR96677,
    AUTHOR = {Fine, N. J. and Herstein, I. N.},
     TITLE = {The probability that a matrix be nilpotent},
   JOURNAL = {Illinois J. Math.},
  FJOURNAL = {Illinois Journal of Mathematics},
    VOLUME = {2},
      YEAR = {1958},
     PAGES = {499--504},
      ISSN = {0019-2082},
   MRCLASS = {15.00},
  MRNUMBER = {96677},
MRREVIEWER = {B.\ W.\ Jones},
       URL = {http://projecteuclid.org/euclid.ijm/1255454112},
}

@misc{CEMQ26,
      title={Random tensor isomorphism under orthogonal and unitary actions}, 
      author={Jeremy Chizewer and Samuel Everett and Deven Mithal and Youming Qiao},
      year={2026},
      eprint={2603.27128},
      archivePrefix={arXiv},
      primaryClass={cs.CC},
      url={https://arxiv.org/abs/2603.27128}, 
}

@book {MR230728,
    AUTHOR = {Steinberg, Robert},
     TITLE = {Endomorphisms of linear algebraic groups},
    SERIES = {Memoirs of the American Mathematical Society},
    VOLUME = {No. 80},
 PUBLISHER = {American Mathematical Society, Providence, RI},
      YEAR = {1968},
     PAGES = {108},
   MRCLASS = {14.50 (22.00)},
  MRNUMBER = {230728},
MRREVIEWER = {E.\ Abe},
}

@book{10.1093/oso/9780198502951.001.0001,
    author = {Hirschfeld, J W P},
    title = {Projective Geometries over Finite Fields},
    publisher = {Oxford University Press},
    year = {1998},
    month = {01},
    isbn = {9780198502951},
    doi = {10.1093/oso/9780198502951.001.0001},
    url = {https://doi.org/10.1093/oso/9780198502951.001.0001},
}

@incollection{bollobas1982distinguishing,
  title={Distinguishing vertices of random graphs},
  author={Bollob{\'a}s, B{\'e}la},
  booktitle={North-Holland Mathematics Studies},
  volume={62},
  pages={33--49},
  year={1982},
  publisher={Elsevier}
}

@book{vzGG13,
  author    = {Joachim von zur Gathen and J{\"u}rgen Gerhard},
  title     = {Modern Computer Algebra},
  edition   = {3},
  publisher = {Cambridge University Press},
  address   = {Cambridge},
  year      = {2013}
}

@article{CantorZassenhaus81,
  author  = {David G. Cantor and Hans Zassenhaus},
  title   = {A New Algorithm for Factoring Polynomials over Finite Fields},
  journal = {Mathematics of Computation},
  volume  = {36},
  number  = {154},
  pages   = {587--592},
  year    = {1981}
}

@article{linial2017rigidity,
  title={On the rigidity of sparse random graphs},
  author={Linial, Nati and Mosheiff, Jonathan},
  journal={Journal of Graph Theory},
  volume={85},
  number={2},
  pages={466--480},
  year={2017},
  publisher={Wiley Online Library}
}

@article{czajka2008improved,
  title={Improved random graph isomorphism},
  author={Czajka, Tomek and Pandurangan, Gopal},
  journal={Journal of Discrete Algorithms},
  volume={6},
  number={1},
  pages={85--92},
  year={2008},
  publisher={Elsevier}
}

@article{10.1214/21-AOP1520,
author = {Kurt Johansson and Gaultier Lambert},
title = {{Multivariate normal approximation for traces of random unitary matrices}},
volume = {49},
journal = {The Annals of Probability},
number = {6},
publisher = {Institute of Mathematical Statistics},
pages = {2961 -- 3010},
year = {2021},
doi = {10.1214/21-AOP1520},
URL = {https://doi.org/10.1214/21-AOP1520}
}

@article{10.1214/20-PS346,
author = {Van H. Vu},
title = {{Recent progress in combinatorial random matrix theory}},
volume = {18},
journal = {Probability Surveys},
number = {none},
publisher = {Institute of Mathematical Statistics and Bernoulli Society},
pages = {179 -- 200},
year = {2021},
doi = {10.1214/20-PS346},
URL = {https://doi.org/10.1214/20-PS346}
}

@Article{Fulman2015,
author={Fulman, Jason
and Goldstein, Larry},
title={Stein's method and the rank distribution of random matrices over finite fields},
journal={The Annals of Probability},
year={2015},
month={May},
day={01},
volume={43},
number={3},
pages={1274-1314},
doi={10.1214/13-AOP889},
url={https://doi.org/10.1214/13-AOP889}
}

@book {MR2689583,
    AUTHOR = {Belsley, Eric David},
     TITLE = {Rates of convergence of {M}arkov chains related to association
              schemes},
      NOTE = {Thesis (Ph.D.)--Harvard University},
 PUBLISHER = {ProQuest LLC, Ann Arbor, MI},
      YEAR = {1993},
     PAGES = {116},
   MRCLASS = {99-05},
  MRNUMBER = {2689583},
}

@inproceedings{DBLP:conf/eurocrypt/NarayananQT24,
  author       = {Anand Kumar Narayanan and
                  Youming Qiao and
                  Gang Tang},
  title        = {Algorithms for Matrix Code and Alternating Trilinear Form Equivalences
                  via New Isomorphism Invariants},
  booktitle    = {Advances in Cryptology - {EUROCRYPT} 2024 - 43rd Annual International
                  Conference on the Theory and Applications of Cryptographic Techniques, 2024, Proceedings, Part {III}},
  series       = {Lecture Notes in Computer Science},
  volume       = {14653},
  pages        = {160--187},
  publisher    = {Springer},
  year         = {2024},
  url          = {https://doi.org/10.1007/978-3-031-58734-4\_6},
  doi          = {10.1007/978-3-031-58734-4\_6},
  bibsource    = {dblp computer science bibliography, https://dblp.org}
}

@InProceedings{10.1007/978-3-540-31856-9_1,
author="Agrawal, Manindra
and Saxena, Nitin",
title="Automorphisms of Finite Rings and Applications to Complexity of Problems",
booktitle="Proceedings of the 22nd Annual Symposium on Theoretical Aspects of Computer Science",
year="2005",
pages="1--17",
isbn="978-3-540-31856-9"
}

@Article{Kayal2006,
author={Kayal, Neeraj
and Saxena, Nitin},
title={Complexity of Ring Morphism Problems},
journal={{Computational Complexity}},
year={2006},
month={Dec},
day={01},
volume={15},
number={4},
pages={342-390},
issn={1420-8954},
doi={10.1007/s00037-007-0219-8},
url={https://doi.org/10.1007/s00037-007-0219-8}
}

@article{Sendrierhull,
author = {Sendrier, Nicolas},
title = {On the Dimension of the Hull},
journal = {SIAM Journal on Discrete Mathematics},
volume = {10},
number = {2},
pages = {282-293},
year = {1997},
doi = {10.1137/S0895480195294027}
}

@article{10.1215/ijm/1255629831,
author = {Murray Gerstenhaber},
title = {{On the number of nilpotent matrices with coefficients in a finite field}},
volume = {5},
journal = {Illinois Journal of Mathematics},
number = {2},
publisher = {Duke University Press},
pages = {330 -- 333},
year = {1961},
doi = {10.1215/ijm/1255629831},
URL = {https://doi.org/10.1215/ijm/1255629831}
}

@article{10.1215/ijm/1255629830,
author = {Irving Reiner},
title = {{On the number of matrices with given characteristic polynomial}},
volume = {5},
journal = {Illinois Journal of Mathematics},
number = {2},
publisher = {Duke University Press},
pages = {324 -- 329},
year = {1961},
doi = {10.1215/ijm/1255629830},
URL = {https://doi.org/10.1215/ijm/1255629830}
}

@book{Lidl_Niederreiter_1996, 
place={Cambridge}, 
edition={2}, 
series={Encyclopedia of Mathematics and its Applications}, 
title={{Finite Fields}}, 
publisher={Cambridge University Press}, 
author={Lidl, Rudolf and Niederreiter, Harald}, 
year={1996}, 
collection={Encyclopedia of Mathematics and its Applications}
}

@article{GK24,
  title={Equidistribution of high traces of random matrices over finite fields and cancellation in character sums of high conductor},
  author={Gorodetsky, Ofir and Kovaleva, Valeriya},
  journal={Bulletin of the London Mathematical Society},
  volume={56},
  number={7},
  pages={2315--2337},
  year={2024},
  publisher={Wiley Online Library}
}

@article{DS94,
  title={On the eigenvalues of random matrices},
  author={Diaconis, Persi and Shahshahani, Mehrdad},
  journal={Journal of Applied Probability},
  volume={31},
  number={A},
  pages={49--62},
  year={1994},
  publisher={Cambridge University Press}
}

@article {MR4273172,
    AUTHOR = {Gorodetsky, Ofir and Rodgers, Brad},
     TITLE = {Traces of powers of matrices over finite fields},
   JOURNAL = {Trans. Amer. Math. Soc.},
  FJOURNAL = {Transactions of the American Mathematical Society},
    VOLUME = {374},
      YEAR = {2021},
    NUMBER = {7},
     PAGES = {4579--4638},
      ISSN = {0002-9947,1088-6850},
   MRCLASS = {60B15 (11T55 15B52 20G40 60B20)},
  MRNUMBER = {4273172},
MRREVIEWER = {Florent\ Benaych-Georges},
       DOI = {10.1090/tran/8337},
       URL = {https://doi.org/10.1090/tran/8337},
}

@article{NEUMANN_PRAEGER_1998, 
title={{Derangements and Eigenvalue-free Elements in Finite Classical Groups}},
volume={58}, 
DOI={10.1112/S0024610798006772}, 
number={3}, 
journal={Journal of the London Mathematical Society}, 
author={Neumann, Peter M. and Praeger, Cheryl E.}, 
year={1998}, 
pages={564–586}}

@article{NP95,
  title={Cyclic matrices over finite fields},
  author={Neumann, Peter M. and Praeger, Cheryl E.},
  journal={Journal of the London Mathematical Society},
  volume={52},
  number={2},
  pages={263--284},
  year={1995},
  publisher={Oxford University Press}
}

@article{Sen02,
  title={Finding the permutation between equivalent linear codes: The support splitting algorithm},
  author={Sendrier, Nicolas},
  journal={IEEE Transactions on Information Theory},
  volume={46},
  number={4},
  pages={1193--1203},
  year={2002},
  publisher={IEEE}
}

@inproceedings{CL25,
  author       = {Alain Couvreur and
                  Christophe Levrat},
  title        = {Highway to Hull: An Algorithm for Solving the General Matrix Code
                  Equivalence Problem},
  booktitle    = {Advances in Cryptology - {CRYPTO} 2025 - 45th Annual International
                  Cryptology Conference, 2025,
                  Proceedings, Part {I}},
  series       = {Lecture Notes in Computer Science},
  volume       = {16000},
  pages        = {253--283},
  publisher    = {Springer},
  year         = {2025},
  url          = {https://doi.org/10.1007/978-3-032-01855-7\_9},
  doi          = {10.1007/978-3-032-01855-7\_9},
  bibsource    = {dblp computer science bibliography, https://dblp.org}
}

@inproceedings{Beu23,
	author       = {Ward Beullens},
	title        = {Graph-Theoretic Algorithms for the Alternating Trilinear Form Equivalence
	Problem},
	booktitle    = {Advances in Cryptology - {CRYPTO} 2023 - 43rd Annual International Cryptology Conference, Proceedings, Part {III}},
	series       = {Lecture Notes in Computer Science},
	volume       = {14083},
	pages        = {101--126},
	publisher    = {Springer},
	year         = {2023},
	doi          = {10.1007/978-3-031-38548-3\_4},
	bibsource    = {dblp computer science bibliography, https://dblp.org}
}

@inproceedings{AFMP20,
  author    = {Navid Alamati and
               Luca De Feo and
               Hart Montgomery and
               Sikhar Patranabis},
  title     = {Cryptographic Group Actions and Applications},
  booktitle = {Advances in Cryptology - {ASIACRYPT} 2020 - 26th International 
  Conference
               on the Theory and Application of Cryptology and Information 
               Security,
               Proceedings, Part {II}},
  series    = {Lecture Notes in Computer Science},
  volume    = {12492},
  pages     = {411--439},
  publisher = {Springer},
  year      = {2020},
  url       = {https://doi.org/10.1007/978-3-030-64834-3\_14},
  doi       = {10.1007/978-3-030-64834-3\_14},
  bibsource = {dblp computer science bibliography, https://dblp.org}
}

@article{MEDSspecs,
	title={{MEDS}: Matrix Equivalence Digital Signature (2023)},
	author={Chou, Tung and Niederhagen, Ruben and Persichetti, Edoardo and Ran, Lars and Randrianarisoa, Tovohery Hajatiana and Reijnders, Krijn and Samardjiska, Simona and Trimoska, Monika},
	journal={Submission to the NIST Digital Signature Scheme standardization process}, 
	year = {2023},
	url = {https://www.meds-pqc.org/}
}

@article{ALTEQspecs,
	title={The {ALTEQ} Signature Scheme: Algorithm Specifications and Supporting Documentation},
	author={Bl{\"a}ser, Markus and Duong, Dung Hoang and Narayanan, Anand Kumar and Plantard, Thomas and Qiao, Youming and Sipasseuth, Arnaud and Tang, Gang},
	journal={Submission to the NIST Digital Signature Scheme standardization process},
	year={2023},
	url = {https://pqcalteq.github.io/}
}

@inproceedings{RS24,
  author       = {Lars Ran and
                  Simona Samardjiska},
  title        = {Rare Structures in Tensor Graphs - Bermuda Triangles for Cryptosystems
                  Based on the Tensor Isomorphism Problem},
  booktitle    = {Advances in Cryptology - {ASIACRYPT} 2024 - 30th International Conference
                  on the Theory and Application of Cryptology and Information Security,
                 2024, Proceedings, Part {VIII}},
  series       = {Lecture Notes in Computer Science},
  volume       = {15491},
  pages        = {66--96},
  publisher    = {Springer},
  year         = {2024},
  url          = {https://doi.org/10.1007/978-981-96-0944-4\_3},
  doi          = {10.1007/978-981-96-0944-4\_3},
  bibsource    = {dblp computer science bibliography, https://dblp.org}
}

@inproceedings{Sun23,
	author       = {Xiaorui Sun},
	title        = {Faster Isomorphism for $p$-Groups of Class 2 and Exponent
	$p$},
	booktitle    = {Proceedings of the 55th Annual {ACM} Symposium on Theory of Computing,
	{STOC} 2023},
	pages        = {433--440},
	publisher    = {{ACM}},
	year         = {2023},
	url          = {https://doi.org/10.1145/3564246.3585250},
	doi          = {10.1145/3564246.3585250},
	bibsource    = {dblp computer science bibliography, https://dblp.org}
}

@inproceedings{IMQSZ24,
	author       = {Gábor Ivanyos and Euan Jacob Mendoza and Youming Qiao and Xiaorui Sun and Chuanqi Zhang},
	title        = {Faster Isomorphism Testing of $p$-Groups of {Frattini} Class-2},
	booktitle    = {65th {IEEE} Annual Symposium on Foundations of Computer Science, 
	{FOCS} 2024},
	pages        = {1408--1424},
	publisher    = {{IEEE}},
	year         = {2024}
}

@article{GMW91,
	Author = {Oded Goldreich and Silvio Micali and Avi Wigderson},
	Bibsource = {dblp computer science bibliography, https://dblp.org},
	Doi = {10.1145/116825.116852},
	Journal = {J. {ACM}},
	Number = {3},
	Pages = {691--729},
	Title = {Proofs that Yield Nothing But Their Validity for All Languages in 
	{NP} Have Zero-Knowledge Proof Systems},
	Volume = {38},
	Year = {1991}}

@inproceedings{FS86,
	Author = {Amos Fiat and Adi Shamir},
	Booktitle = {Advances in Cryptology -- {CRYPTO} 1986},
	Pages = {186--194},
	Title = {How to Prove Yourself: Practical Solutions to Identification and 
	Signature Problems},
	Year = {1986}}

@article{Rei87,
	title={The moduli space of 3-folds with {$K=0$} may nevertheless be irreducible},
	author={Reid, Miles},
	journal={Mathematische Annalen},
	volume={278},
	pages={329--334},
	year={1987},
	publisher={Springer}
}

@article{BL08,
	Author = {Peter A. Brooksbank and Eugene M. Luks},
	Doi = {10.1016/j.jalgebra.2008.07.014},
	Issn = {0021-8693},
	Journal = {Journal of Algebra},
	Number = {11},
	Pages = {4020 - 4029},
	Title = {Testing isomorphism of modules},
	Volume = {320},
	Year = {2008}}

@inproceedings{MEDS-paper,
	author       = {Tung Chou and
	Ruben Niederhagen and
	Edoardo Persichetti and
	Tovohery Hajatiana Randrianarisoa and
	Krijn Reijnders and
	Simona Samardjiska and
	Monika Trimoska},
	title        = {Take Your {MEDS:} Digital Signatures from Matrix Code Equivalence},
	booktitle    = {Progress in Cryptology - {AFRICACRYPT} 2023 - 14th International Conference
	on Cryptology in Africa, 2023, Proceedings},
	series       = {Lecture Notes in Computer Science},
	volume       = {14064},
	pages        = {28--52},
	publisher    = {Springer},
	year         = {2023},
	url          = {https://doi.org/10.1007/978-3-031-37679-5\_2},
	doi          = {10.1007/978-3-031-37679-5\_2},
	bibsource    = {dblp computer science bibliography, https://dblp.org}
}

@inproceedings{Pat96,
	author    = {Jacques Patarin},
	title     = {Hidden Fields Equations {(HFE)} and Isomorphisms of Polynomials 
	{(IP):}
	Two New Families of Asymmetric Algorithms},
	booktitle = {Advances in Cryptology - {EUROCRYPT} '96, International Conference
	on the Theory and Application of Cryptographic Techniques, 
	1996, Proceeding},
	pages     = {33--48},
	year      = {1996},
	doi       = {10.1007/3-540-68339-9_4},
}

@article{Kung81,
  title={The cycle structure of a linear transformation over a finite field},
  author={Kung, Joseph PS},
  journal={Linear Algebra and its Applications},
  volume={36},
  pages={141--155},
  year={1981},
  publisher={Elsevier}
}

@article{Sto88,
  title={Some asymptotic results on finite vector spaces},
  author={Stong, Richard},
  journal={Advances in Applied Mathematics},
  volume={9},
  number={2},
  pages={167--199},
  year={1988},
  publisher={Elsevier}
}

@article{Ful02,
  title={Random matrix theory over finite fields},
  author={Fulman, Jason},
  journal={Bulletin of the American Mathematical Society},
  volume={39},
  number={1},
  pages={51--85},
  year={2002}
}

@inproceedings{AS06,
    author = {Manindra Agrawal and Nitin Saxena},
    title = {Equivalence of {$\mathbb{F}$} -- Algebras and Cubic Forms},
    booktitle = {Proceedings of the 23rd Annual Symposium on Theoretical Aspects of Computer Science},
    pages     = {115--126},
    year      = {2006},
}

@article{GQ17,
	TITLE = {Algorithms for group isomorphism via group extensions and cohomology},
	AUTHOR = {Grochow, Joshua A. and Qiao, Youming},
	JOURNAL = {SIAM Journal on Computing},
	YEAR = {2017},
	VOLUME = {46},
	NUMBER = {4},
	PAGES = {1153--1216},
	NOTE = {Preliminary version in IEEE Conference on Computational Complexity (CCC) 2014.},
	DOI = {10.1137/15M1009767},
}

@article{TI1,
       AUTHOR = {Grochow, Joshua A. and Qiao, Youming},
        TITLE = {On the Complexity of Isomorphism Problems for Tensors, Groups, 
                 and Polynomials {I}: {Tensor} {Isomorphism}-Completeness},
      JOURNAL = {SIAM J. Comput.},
      FJOUNAL = {SIAM Journal on Computing},
       VOLUME = {52},
        ISSUE = {2},
        PAGES = {568--617},
         YEAR = {2023},
          DOI = {10.1137/21M1441110},
         NOTE = {Part of the preprint \cite{GQarxiv}. 
                 Preliminary version appeared at ITCS '21.},
}

@article{TI2,
	author       = {Joshua A. Grochow and
	Youming Qiao},
	title        = {On \emph{p}-Group Isomorphism: Search-to-Decision, Counting-to-Decision,
	and Nilpotency Class Reductions via Tensors},
	journal      = {{ACM} Trans. Comput. Theory},
	volume       = {16},
	number       = {1},
	pages        = {2:1--2:39},
	year         = {2024},
	url          = {https://doi.org/10.1145/3625308},
	doi          = {10.1145/3625308},
	bibsource    = {dblp computer science bibliography, https://dblp.org}
}

@misc{GQarxiv,
 author = {Grochow, Joshua A. and Qiao, Youming},
 title = {Isomorphism problems for tensors, groups, and cubic forms: completeness and reductions},
 year = {2019},
 howpublished={arXiv:\href{https://arxiv.org/abs/1907.00309}{1907.00309} [cs.CC]},
}

@article{BFP15,
	Author = {J{\'{e}}r{\'{e}}my Berthomieu and Jean{-}Charles Faug{\`{e}}re and Ludovic Perret},
	Journal = {Journal of Complexity},
	Number = {4},
	Pages = {590--616},
	Title = {Polynomial-time algorithms for quadratic isomorphism of polynomials: The regular case},
	Volume = {31},
	Year = {2015}}

@inproceedings{MPG13,
	Author = {Gilles Macario{-}Rat and J{\'{e}}r{\^{o}}me Pl{\^{u}}t and Henri Gilbert},
	Booktitle = {Advances in Cryptology - {ASIACRYPT} 2013},
	Pages = {117--133},
	Title = {New Insight into the Isomorphism of Polynomial Problem {IP1S} and Its Use in Cryptography},
	Year = {2013}}

@inproceedings{BFV13,
	Author = {Charles Bouillaguet and Pierre{-}Alain Fouque and Amandine V{\'{e}}ber},
	Bibsource = {dblp computer science bibliography, http://dblp.org},
	Booktitle = {Advances in Cryptology - {EUROCRYPT} 2013},
	Pages = {211--227},
	Title = {Graph-Theoretic Algorithms for the ``{Isomorphism of Polynomials}'' Problem},
	Year = {2013}}

@inproceedings{ALTEQ-paper,
	author    = {Gang Tang and
	Dung Hoang Duong and
	Antoine Joux and
	Thomas Plantard and
	Youming Qiao and
	Willy Susilo},
	title     = {Practical Post-Quantum Signature Schemes from Isomorphism Problems
	of Trilinear Forms},
	booktitle = {Advances in Cryptology - {EUROCRYPT} 2022 - 41st Annual International
	Conference on the Theory and Applications of Cryptographic Techniques, 2022, Proceedings, Part {III}},
	series    = {Lecture Notes in Computer Science},
	volume    = {13277},
	pages     = {582--612},
	publisher = {Springer},
	year      = {2022},
	url       = {https://doi.org/10.1007/978-3-031-07082-2\_21},
	doi       = {10.1007/978-3-031-07082-2\_21},
	bibsource = {dblp computer science bibliography, https://dblp.org}
}

@inproceedings{JQSY19,
	author    = {Zhengfeng Ji and
	Youming Qiao and
	Fang Song and
	Aaram Yun},
	title     = {General Linear Group Action on Tensors: {A} Candidate for 
	Post-quantum
	Cryptography},
	booktitle = {Theory of Cryptography - 17th International Conference, {TCC} 2019, Proceedings, Part {I}},
	pages     = {251--281},
	year      = {2019},
	url       = {https://doi.org/10.1007/978-3-030-36030-6\_11},
	doi       = {10.1007/978-3-030-36030-6\_11},
	bibsource = {dblp computer science bibliography, https://dblp.org}
}

@article{IKS10,
  title={Deterministic polynomial time algorithms for matrix completion problems},
  author={Ivanyos, G{\'a}bor and Karpinski, Marek and Saxena, Nitin},
  journal={SIAM journal on computing},
  volume={39},
  number={8},
  pages={3736--3751},
  year={2010},
  publisher={SIAM}
}

@article{BO08,
  title={Constructing the group preserving a system of forms},
  author={Brooksbank, Peter A and O'Brien, Eamonn A.},
  journal={International Journal of Algebra and Computation},
  volume={18},
  number={02},
  pages={227--241},
  year={2008},
  publisher={World Scientific}
}

@article{Bae38,
	title={Groups with abelian central quotient group},
	author={Baer, Reinhold},
	journal={Transactions of the American Mathematical Society},
	volume={44},
	number={3},
	pages={357--386},
	year={1938},
	publisher={JSTOR}
}

@inproceedings{BLQW20,
	author    = {Peter A. Brooksbank and
	Yinan Li and
	Youming Qiao and
	James B. Wilson},
	title     = {Improved Algorithms for Alternating Matrix Space Isometry: From Theory
	to Practice},
	booktitle = {28th Annual European Symposium on Algorithms, {ESA} 2020},
	series    = {LIPIcs},
	volume    = {173},
	pages     = {26:1--26:15},
	publisher = {Schloss Dagstuhl - Leibniz-Zentrum f{\"{u}}r Informatik},
	year      = {2020},
	url       = {https://doi.org/10.4230/LIPIcs.ESA.2020.26},
	doi       = {10.4230/LIPIcs.ESA.2020.26},
	bibsource = {dblp computer science bibliography, https://dblp.org}
}

@article{IQ19,
	author = {Ivanyos, G\'abor and Qiao, Youming},
	title = {Algorithms Based on $*$-Algebras, and Their Applications to Isomorphism of 
	Polynomials with One Secret, Group Isomorphism, and Polynomial Identity Testing},
	journal = {{SIAM Journal on Computing}},
	volume = {48},
	number = {3},
	pages = {926-963},
	year = {2019},
	doi = {10.1137/18M1165682},
}

@inproceedings{LQ17,
	author    = {Yinan Li and
	Youming Qiao},
	editor    = {Chris Umans},
	title     = {Linear Algebraic Analogues of the Graph Isomorphism Problem and the
	{Erd{\H{o}}s}--{R{\'{e}}nyi} Model},
	booktitle = {58th {IEEE} Annual Symposium on Foundations of Computer Science, 
	{FOCS}
	2017},
	pages     = {463--474},
	publisher = {{IEEE} Computer Society},
	year      = {2017},
	url       = {https://doi.org/10.1109/FOCS.2017.49},
	doi       = {10.1109/FOCS.2017.49},
	bibsource = {dblp computer science bibliography, https://dblp.org}
}

@inproceedings{TI5,
	author       = {Joshua A. Grochow and
	Youming Qiao and
	Katherine E. Stange and
	Xiaorui Sun},
	title        = {On the Complexity of Isomorphism Problems for Tensors, Groups, and
	Polynomials {V:} Over Commutative Rings},
	booktitle    = {Proceedings of the 57th Annual ACM Symposium on Theory of Computing},
	pages        = {777--784},
	publisher    = {{ACM}},
	year         = {2025},
	url          = {https://doi.org/10.1145/3717823.3718286},
	doi          = {10.1145/3717823.3718286},
	bibsource    = {dblp computer science bibliography, https://dblp.org}
}

@inproceedings{TI3,
	author       = {Zhili Chen and
	Joshua A. Grochow and
	Youming Qiao and
	Gang Tang and
	Chuanqi Zhang},
	title        = {On the Complexity of Isomorphism Problems for Tensors, Groups, and	Polynomials {III:} Actions by Classical Groups},
	booktitle    = {15th Innovations in Theoretical Computer Science Conference, {ITCS} 2024},
	series       = {LIPIcs},
	volume       = {287},
	pages        = {31:1--31:23},
	publisher    = {Schloss Dagstuhl - Leibniz-Zentrum f{\"{u}}r Informatik},
	year         = {2024},
	url          = {https://doi.org/10.4230/LIPIcs.ITCS.2024.31},
	doi          = {10.4230/LIPICS.ITCS.2024.31},
	bibsource    = {dblp computer science bibliography, https://dblp.org}
}

@inproceedings{TI4,
author = {Grochow, Joshua A. and Qiao, Youming},
title = {On the Complexity of Isomorphism Problems for Tensors, Groups, and Polynomials {IV}: {Linear}-Length Reductions and Their Applications},
year = {2025},
isbn = {9798400715105},
publisher = {ACM},
url = {https://doi.org/10.1145/3717823.3718282},
doi = {10.1145/3717823.3718282},
booktitle = {Proceedings of the 57th Annual ACM Symposium on Theory of Computing},
pages = {766–776},
numpages = {11},
location = {Prague, Czechia},
series = {STOC 2025}
}

@inproceedings{Bab16,
	author    = {L{\'{a}}szl{\'{o}} Babai},
	title     = {Graph isomorphism in quasipolynomial time [extended abstract]},
	booktitle = {Proceedings of the 48th Annual {ACM} {SIGACT} Symposium on Theory
	of Computing, {STOC} 2016},
	pages     = {684--697},
	publisher = {{ACM}},
	year      = {2016},
	url       = {https://doi.org/10.1145/2897518.2897542},
	doi       = {10.1145/2897518.2897542},
	bibsource = {dblp computer science bibliography, https://dblp.org}
}

@inproceedings{AKM25,
  author       = {Michael Anastos and
                  Matthew Kwan and
                  Benjamin Moore},
  title        = {Smoothed Analysis for Graph Isomorphism},
  booktitle    = {Proceedings of the 57th Annual {ACM} Symposium on Theory of Computing,
                  {STOC} 2025},
  pages        = {2098--2106},
  publisher    = {{ACM}},
  year         = {2025},
  url          = {https://doi.org/10.1145/3717823.3718173},
  doi          = {10.1145/3717823.3718173},
  bibsource    = {dblp computer science bibliography, https://dblp.org}
}

@inproceedings{BK79,
	author       = {L{\'{a}}szl{\'{o}} Babai and
	Ludek Ku{\v c}era},
	title        = {Canonical Labelling of Graphs in Linear Average Time},
	booktitle    = {20th Annual Symposium on Foundations of Computer Science, 1979},
	pages        = {39--46},
	publisher    = {{IEEE} Computer Society},
	year         = {1979},
	doi          = {10.1109/SFCS.1979.8},
	bibsource    = {dblp computer science bibliography, https://dblp.org}
}

@article{BES80,
  author    = {L{\'{a}}szl{\'{o}} Babai and
               Paul Erd{\H{o}}s and
               Stanley M. Selkow},
  title     = {Random Graph Isomorphism},
  journal   = {{SIAM} J. Comput.},
  volume    = {9},
  number    = {3},
  pages     = {628--635},
  year      = {1980},
  doi       = {10.1137/0209047},
  bibsource = {dblp computer science bibliography, http://dblp.org}
}

@inproceedings{BL83,
	author       = {L{\'{a}}szl{\'{o}} Babai and
	Eugene M. Luks},
	title        = {Canonical Labeling of Graphs},
	booktitle    = {Proceedings of the 15th Annual {ACM} Symposium on Theory of Computing,
	25-27 April, 1983},
	pages        = {171--183},
	publisher    = {{ACM}},
	year         = {1983},
	url          = {https://doi.org/10.1145/800061.808746},
	doi          = {10.1145/800061.808746},
	bibsource    = {dblp computer science bibliography, https://dblp.org}
}

@article{MP14,
title = "Practical graph isomorphism, {II} ",
journal = "Journal of Symbolic Computation ",
volume = "60",
number = "0",
pages = "94 - 112",
year = "2014",
note = "",
issn = "0747-7171",
doi = "http://dx.doi.org/10.1016/j.jsc.2013.09.003",
author = "Brendan D. McKay and Adolfo Piperno"
}

@article{McK80,
  title={Practical graph isomorphism},
  author={McKay, Brendan D.},
  volume = {30},
  year={1980},
  pages={45--87},
  journal={Congressus Numerantium,}
}

@inproceedings{BC86,
  author       = {Gilles Brassard and
                  Claude Cr{\'{e}}peau},
  title        = {Non-Transitive Transfer of Confidence: {A} Perfect Zero-Knowledge
                  Interactive Protocol for {SAT} and Beyond},
  booktitle    = {27th Annual Symposium on Foundations of Computer Science, 1986},
  pages        = {188--195},
  publisher    = {{IEEE} Computer Society},
  year         = {1986},
  url          = {https://doi.org/10.1109/SFCS.1986.33},
  doi          = {10.1109/SFCS.1986.33},
  bibsource    = {dblp computer science bibliography, https://dblp.org}
}

@inproceedings{BY90,
  author       = {Gilles Brassard and
                  Moti Yung},
  title        = {One-Way Group Actions},
  booktitle    = {Advances in Cryptology - {CRYPTO} '90, 10th Annual International Cryptology
                  Conference, 1990, Proceedings},
  series       = {Lecture Notes in Computer Science},
  volume       = {537},
  pages        = {94--107},
  publisher    = {Springer},
  year         = {1990},
  url          = {https://doi.org/10.1007/3-540-38424-3\_7},
  doi          = {10.1007/3-540-38424-3\_7},
  bibsource    = {dblp computer science bibliography, https://dblp.org}
}

@article {GQT,
    AUTHOR = {Grochow, Joshua A. and Qiao, Youming and Tang, Gang},
     TITLE = {Average-case algorithms for testing isomorphism of
              polynomials, algebras, and multilinear forms},
   JOURNAL = {Journal of Groups, Complexity, Cryptology},
    VOLUME = {14},
      YEAR = {2022},
    NUMBER = {1},
     PAGES = {21},
      ISSN = {1867-1144},
   MRCLASS = {11Y16 (11E76 11T06 15A69 15B52 68W20)},
  MRNUMBER = {4468830},
  DOI = {10.46298/jgcc.2022.14.1.9431},
  NOTE = {Extended abstract appeared in STACS 2021},
}

@inproceedings{GrochowLie,
    AUTHOR = {Grochow, Joshua A.},
     TITLE = {Matrix {Lie} algebra isomorphism},
 BOOKTITLE = {IEEE Conference on Computational Complexity (CCC)},
      YEAR = {2012},
     PAGES = {203--213},
       DOI = {10.1109/CCC.2012.34},
}
